\documentclass[12pt,a4paper]{article}

\usepackage{amsmath,amssymb,amsthm}
\usepackage{mathtools}
\usepackage{booktabs}
\usepackage{array}
\usepackage{hyperref}
\usepackage[margin=1in]{geometry}
\usepackage{enumitem}
\usepackage{graphicx}
\usepackage{xcolor}

\newtheorem{theorem}{Theorem}
\newtheorem{proposition}[theorem]{Proposition}

\newtheorem{conjecture}[theorem]{Conjecture}
\newtheorem{definition}[theorem]{Definition}
\newtheorem{observation}[theorem]{Observation}
\newtheorem{remark}[theorem]{Remark}

\newcommand{\R}{\mathbb{R}}
\newcommand{\C}{\mathbb{C}}
\newcommand{\Z}{\mathbb{Z}}
\newcommand{\Q}{\mathbb{Q}}

\newcommand{\braket}[2]{\langle#1|#2\rangle}

\newcommand{\CK}{\mathrm{CK}\text{-}31}

\title{The Algebraic Landscape of Kochen--Specker Sets\\in Dimension Three}

\author{Michael Kernaghan}

\date{\today}

\begin{document}

\maketitle

\begin{abstract}
We present a computational survey of Kochen--Specker (KS) uncolorability in three-dimensional Hilbert space across two-symbol coordinate alphabets $\mathcal{A} = \{0, \pm 1, \pm x\}$ drawn from quadratic, cyclotomic, and golden-ratio number fields. In every tested alphabet, KS sets arise only when $x$ supports a \emph{cancellation identity}---an algebraic relation among the alphabet's elements that makes dot products vanish exactly.  Three such mechanisms occur: \emph{modulus-2 cancellation} (the generator satisfies $|x|^2 = 2$, as in $|\sqrt{2}|^2=2$, $|\sqrt{-2}|^2=2$, or $|\alpha|^2=2$; the integer case $1+1=2$ is the degenerate additive instance), \emph{phase cancellation} (a vanishing sum of unit-modulus terms, as in $1+\omega+\omega^2=0$), and \emph{minimal-polynomial cancellation}, in which the defining relation of a cubic algebraic unit supplies the vanishing sum directly, as in the supergolden ratio $\psi^3 = \psi^2 + 1$. Alphabets whose generators have $|x|^2 \geq 3$ and are not roots of unity produce orthogonal triples but not KS-uncolorability in our survey. This empirical pattern explains why constructions cluster into at least eight discrete algebraic islands among the tested fields, two of them cubic and characterized at higher cost. Two yield potentially new KS graph types: the Heegner-7 ring $\Z[(1+\sqrt{-7})/2]$ (43~vectors) and the golden ratio field $\Q(\varphi)$ (52~vectors, revealed only by cross-product completion); $\Z[\sqrt{-2}]$ provides a new algebraic realization of a known Peres-type graph. Using SAT-based bipartite KS-uncolorability, we verify and extend the input counts of Trandafir and Cabello for bipartite perfect quantum strategies across all six non-cubic islands. The two-mechanism pattern reported in versions~1--8 of this paper is refuted by the supergolden island, which motivates the third mechanism above; whether the resulting three-mechanism classification is complete remains an open question.
\end{abstract}

\section{Introduction}

The Kochen--Specker (KS) theorem~\cite{KochenSpecker1967} is one of the foundational no-go results of quantum mechanics, demonstrating that quantum observables cannot be assigned pre-existing definite values independently of the measurement context (for a comprehensive review, see Ref.~\cite{Budroni2022}). The theorem is typically proved by exhibiting a finite set of rays (one-dimensional subspaces) in a Hilbert space~$\mathcal{H}$ such that no consistent $\{0,1\}$-valued assignment exists satisfying the constraints imposed by~orthogonality.  While KS proofs exist in any dimension $d \geq 3$---including constructions with 20~vectors in $d = 4$~\cite{Kernaghan1994} and 36~vectors in $d = 8$~\cite{KernaghanPeres1995}---dimension~3 remains the most intensively studied case because it admits the smallest known sets and the tightest algebraic constraints.

In dimension $d = 3$, the smallest known KS set is the Conway--Kochen 31-vector set ($\CK$), reported by Peres~\cite{Peres1993} with coordinates drawn from the integer alphabet $\{0, \pm 1, \pm 2\}$. Despite decades of effort, no smaller construction has been found. The best known lower bound is 24 vectors, established independently by Li, Bright, and Ganesh~\cite{LiBrightGanesh2024} and by Kirchweger, Peitl, and Szeider~\cite{KirchwegarPeitlSzeider2023} using SAT-based methods with algebraic realizability checking (improving the earlier bound of 22 by Uijlen and Westerbaan~\cite{UijlenWesterbaan2016}), leaving a gap of 7 vectors that remains one of the central open problems in KS theory. Trandafir and Cabello~\cite{TrandafirCabello2025} recently proved that $\CK$ is rigid (unique up to unitary transformation in $\C^3$) and conjectured that 31 is optimal.

Previous computational approaches have primarily searched within fixed coordinate alphabets~\cite{Peres1991, Pavicic2004} or generated random abstract hypergraphs and tested realizability~\cite{LiBrightGanesh2024}. The algebraic structure of the coordinate systems underlying KS sets---specifically, which number fields can support them---has received less systematic attention, though Cortez, Morales, and Reyes~\cite{CortezMoralesReyes2022} showed (in a preprint whose arguments we have verified) that $M_3(\Z[1/6])_{\mathrm{sym}}$---the partial ring of $3 \times 3$ symmetric matrices over $\Z[1/6]$---is the minimal ring extension of $\Z$ admitting no algebraic hidden states, a ring-theoretic formulation of KS contextuality in dimension~3. (Indeed, the original Kochen--Specker construction~\cite{KochenSpecker1967} uses coordinates from the real cyclotomic field $\mathbb{Q}(\cos\pi/10)$, with the 18-degree rotations that chain its 15 copies of the fundamental 10-ray gadget introducing algebraic irrationals related to the golden ratio---an early, if implicit, instance of the alphabet-dependence that the present paper makes explicit.)

In this paper, we ask a structural question: \emph{what algebraic property of a coordinate alphabet controls whether it can support KS sets in dimension 3?} Through a systematic computational survey of quadratic fields $\Q(\sqrt{d})$ for $d = 2, \ldots, 30$, all nine imaginary quadratic fields with class number~1, cyclotomic fields, and golden ratio extensions, we find a consistent empirical pattern: in every tested case, KS-uncolorability requires the alphabet to support a \emph{cancellation identity}---an algebraic relation among its elements that makes dot products vanish exactly---of one of three types: \emph{modulus-2 cancellation} (the generator $x$ satisfies $|x|^2 = 2$), \emph{phase cancellation} (a vanishing sum of unit-modulus terms), or \emph{minimal-polynomial cancellation} (the defining relation of a cubic algebraic unit supplies the vanishing sum). Among all tested alphabets, these are the only mechanisms observed to produce the dense triad networks required for KS-uncolorability, explaining why constructions cluster into at least eight discrete ``algebraic islands'' among the tested fields (two cubic islands are characterized in Section~\ref{sec:limitations}). Whether this pattern extends to all number fields remains an open question; the two-mechanism version of it, asserted in versions~1--8, does not, and the third mechanism was added in consequence.

Two of the three mechanisms unify the previously known constructions: the integer identity $1 + 1 = 2$, the Peres identity $|\sqrt{2}|^2 = 2$, the Eisenstein identity $1 + \omega + \omega^2 = 0$, and the Heegner-7 identity $|\alpha|^2 = 2$ are all instances of modulus-2 or phase cancellation.  The third is exhibited by a single island, the supergolden ratio, and was identified only after versions~1--8 of this paper had asserted the two-mechanism pattern.  Alphabets whose generators have $|x|^2 \geq 3$ and are not roots of unity produce orthogonal pairs and triples but never enough interlocking triads for KS-uncolorability in our survey.  Our main contributions are:

\begin{enumerate}[label=(\roman*)]
    \item An empirical classification of the cancellation identities that support KS-un\-color\-ability in two-symbol alphabets $\{0, \pm 1, \pm x\}$: across all tested quadratic, cyclotomic, and golden-ratio fields, KS sets arise only from modulus-2 or phase cancellation (Sections~\ref{sec:quadratic}--\ref{sec:heegner}); a third mechanism appears once cubic Pisot alphabets are included (Section~\ref{sec:limitations}, item~\ref{item:pisot}).
    \item Two KS configurations that, to our knowledge, have not been previously reported: a 43-vector set from the Heegner-7 ring $\Z[\alpha]$ with $\alpha = (1{+}\sqrt{-7})/2$ (whose orthogonality graph does not match any known KS construction in published catalogues we have checked), and a 52-vector set from the golden ratio field $\Q(\varphi)$, revealed only by cross-product completion (Sections~\ref{sec:heegner},~\ref{sec:golden}; Appendix~\ref{app:rays}).
    \item Evidence that the algebraic substrate has operational consequences: different islands yield different bipartite perfect quantum strategies (Section~\ref{sec:cabello}) and different Cabello--Severini--Winter contextual advantages (Section~\ref{sec:csw}), with no single island dominating on all measures.
    \item Supporting results: an OCUS-certified proof that no $\leq 30$-ray KS subset exists within the full 49-ray integer pool (Section~\ref{sec:exhaustive}), universal merge saturation across all six non-cubic islands (Observation~\ref{obs:merge}), a Jacobian-based rigidity analysis confirming that four of six non-cubic minimal KS sets are rigid in~$\C^3$ (Observation~\ref{obs:rigidity}), as well as identifications of three distinct 33-vector KS sets with pairwise non-isomorphic graphs (Observation~\ref{obs:three-33}), a MUS landscape analysis with a norm-stratified invariant core (Observation~\ref{obs:mus-landscape}), the $6 \mid n$ cyclotomic theorem (Theorem~\ref{thm:6n}), and graph-isomorphism evidence for the modulus-2 family (Proposition~\ref{prop:isomorphism}).
\end{enumerate}

\section{Preliminaries}\label{sec:prelim}

\subsection{Kochen--Specker sets and colorings}

\begin{definition}
A \emph{ray} in $\mathcal{H} = \C^d$ (or $\R^d$) is a one-dimensional subspace, represented by a nonzero vector up to scalar multiples. Two rays $v, w$ are \emph{orthogonal} if $\braket{v}{w} = 0$. An \emph{orthogonal triad} in $d = 3$ is a set of three mutually orthogonal rays forming a complete basis.
\end{definition}

\begin{definition}
A \emph{KS coloring} of a set of rays $S$ is a function $f: S \to \{0, 1\}$ (interpreted as ``red'' and ``green'') satisfying:
\begin{enumerate}[label=(\alph*)]
    \item \emph{Orthogonal exclusion}: if $v \perp w$ and $f(v) = 1$, then $f(w) = 0$;
    \item \emph{Triad completeness}: for every orthogonal triad $\{v_1, v_2, v_3\} \subset S$, exactly one $v_i$ has $f(v_i) = 1$.
\end{enumerate}
A set $S$ is a \emph{Kochen--Specker set} (or \emph{KS-uncolorable}) if no KS coloring exists.
\end{definition}

\begin{definition}
The \emph{orthogonality graph} $G(S)$ of a ray set $S$ has vertices corresponding to rays and edges connecting orthogonal pairs. A \emph{triad} corresponds to a triangle (3-clique) in $G(S)$. We use ``orthogonality graph'' when referring to the pair structure and ``orthogonality hypergraph'' when emphasizing the triad (3-uniform hyperedge) structure; both encode the same underlying ray configuration.
\end{definition}

\subsection{Coordinate alphabets and number fields}

A \emph{coordinate alphabet} $\mathcal{A} \subset \R$ (or $\C$) is a finite set of values from which ray coordinates are drawn.  (Pavicic et al.~\cite{Pavicic2019} call these ``vector components''; we adopt ``alphabet'' to emphasize the combinatorial generating role---one selects symbols from $\mathcal{A}$ and assembles them into three-coordinate ``words'' that represent rays.)  Given an alphabet $\mathcal{A}$, the induced ray set is
\[
S(\mathcal{A}) = \{ [v] : v \in \mathcal{A}^3 \setminus \{0\} \}
\]
where $[v]$ denotes the equivalence class of $v$ under scalar multiplication (the ray).  The alphabet of a known KS set can often be read directly from its geometric representation: in the cube diagrams used to display the Peres~33, Conway--Kochen~31, and related constructions~\cite{Peres1991,Peres1993}, each vertex carries explicit integer coordinates, and the set of distinct coordinate values appearing across all vertices is precisely the alphabet $\mathcal{A}$.

Two rays $v, w \in \mathcal{A}^3$ are orthogonal when their dot product vanishes: $\sum_k \bar v_k w_k = 0$.  For coordinates drawn from a finite alphabet, this requires an algebraic relation among elements of $\mathcal{A}$ that makes the sum cancel exactly---a \emph{cancellation identity}.  The cancellation identity is the mechanism by which an alphabet produces orthogonal rays, and hence the mechanism by which it can potentially generate a Kochen--Specker set.

For the integer alphabet $\{0, \pm 1, \pm 2\}$, the key identity is $1 + 1 = 2$: two coordinates of magnitude~1 can sum to cancel a single coordinate of magnitude~2, as in $1 \cdot 1 + 1 \cdot 1 + (-1) \cdot 2 = 0$.  For the Peres alphabet $\{0, \pm 1, \pm\sqrt{2}\}$, it is $(\sqrt{2})^2 = 2 = 1 + 1$.  For the Eisenstein alphabet $\{0, \pm 1, \pm\omega, \pm\bar\omega\}$ (where $\omega = e^{2\pi i/3}$), it is the three-term phase cancellation $1 + \omega + \omega^2 = 0$.  In every case, the identity has the same effect: it creates exact orthogonalities among triples of rays (triads), and when enough triads interlock, no consistent $\{0,1\}$-coloring exists---a KS set.

One might ask whether the trivial identity $1 + (-1) = 0$ already suffices---after all, it makes dot products vanish (e.g., $1 \cdot 1 + 1 \cdot (-1) + 0 \cdot 0 = 0$).  The answer is that the minimal alphabet $\{0, \pm 1\}$ generates only 13 projective rays in $\R^3$ (the 3 axis directions, 6 face diagonals, and 4 body diagonals of the cube), far too few for a KS set (the minimum known is 31).  The identity $1 + 1 = 2$ is what unlocks the $\pm 2$ coordinate, expanding the pool from 13 to 49 rays---nearly four times as many---and creating the dense triad network that KS-uncolorability requires.  In this sense, $1 + 1 = 2$ is the simplest \emph{useful} cancellation identity: the first one that generates enough orthogonal structure for the coloring constraints to become unsatisfiable.

We classify the algebraic mechanisms that produce orthogonalities.

\begin{definition}\label{def:cancellation}
Let $\mathcal{A} = \{0, \pm 1, \pm x\}$ be a two-symbol coordinate alphabet.  The \emph{product set} $\bar{\mathcal{A}} \cdot \mathcal{A} = \{\bar{a}b : a, b \in \mathcal{A}\}$ contains all values that can appear as individual terms in a dot product $\langle v | w \rangle = \sum_k \bar{v}_k w_k$.  A \emph{cancellation identity} is a vanishing sum $\sum_{k} t_k = 0$ with $t_k \in \bar{\mathcal{A}} \cdot \mathcal{A}$ (at most three terms for $d = 3$).  We distinguish two low-norm mechanisms, to which a third is added in Section~\ref{sec:limitations}, item~\ref{item:pisot}:
\begin{itemize}
\item \emph{Modulus-2 cancellation}: the generator $x$ satisfies $|x|^2 = 2$ (Hermitian squared modulus), as in $|\sqrt{2}|^2 = 2$, $|\sqrt{-2}|^2 = 2$, $|1+i|^2 = 2$, or $|\alpha|^2 = 2$ for $\alpha = (1+\sqrt{-7})/2$.  The integer case $1+1=2$ is the degenerate instance where $x = 2$ itself and the identity arises additively.
\item \emph{Phase cancellation}: $x$ is a root of unity and a vanishing sum of unit-modulus terms exists, as in $1 + \omega + \omega^2 = 0$ for $\omega = e^{2\pi i/3}$.
\end{itemize}
We say the alphabet has \emph{low-complexity cancellation} if it supports one of these mechanisms.  Note that a cancellation identity is a \emph{local} condition enabling individual orthogonalities; KS-uncolorability additionally requires enough interlocking triads for the coloring constraints to be globally unsatisfiable.  In the cyclotomic case, phase cancellation ($3 \mid n$) produces triads but not KS-uncolorability; the additional condition $2 \mid n$ is needed for sufficient interlocking (Theorem~\ref{thm:6n}).
\end{definition}

\begin{remark}[Terminology]\label{rem:norm-clarification}
We use ``modulus-2'' to refer to the Hermitian squared modulus $|x|^2$ of the generator, \emph{not} to the algebraic field norm $N_{K/\Q}(x)$ (which can differ: $N_{\Q(\sqrt{5})/\Q}(\varphi) = -1$, not~$\varphi^2$) nor to the maximum squared modulus of individual dot-product terms (which may exceed~2 even in the integer case: the cancellation $1 + 1 - 2 = 0$ has $\max|t_k|^2 = 4$).  The salient property is that the generator produces the value~2 through its squared modulus, enabling three-term cancellations with bounded coordinate size.
\end{remark}

For the six non-cubic KS-supporting alphabets, the generating relations are: integer ($1+1=2$, modulus-2), Peres ($|\sqrt{2}|^2=2$, modulus-2), Eisenstein ($1+\omega+\omega^2=0$, phase), $\Z[\sqrt{-2}]$ ($|\sqrt{-2}|^2=2$, modulus-2), Heegner-7 ($|\alpha|^2=2$, modulus-2), and golden ($\varphi^2=\varphi+1$---the raw alphabet lacks low-complexity cancellation, but cross-product completion introduces $1/\varphi$ with $|1/\varphi|^2 < 1$, creating effective modulus-2 cancellations in the completed pool; see Section~\ref{sec:golden}).  No alphabet whose generator has $|x|^2 \geq 3$ and is not a root of unity produces a KS set in our survey.

The methodological approach of this paper is to \emph{reverse} this chain: we systematically construct coordinate alphabets from algebraic number fields, identify their cancellation identities, generate the induced ray sets, and test for KS-uncolorability.  This turns the search for new KS sets into a question of algebra: which rings admit cancellation identities of low enough norm to produce the dense orthogonality structures that KS-uncolorability requires?

\subsection{Cross-product completion}

\begin{definition}\label{def:completion}
The \emph{cross-product completion} of a ray set $S \subset \R^3$ is the closure $\bar{S}$ obtained by iterating: for every orthogonal pair $v, w \in S$, add the ray $[v \times w]$ to $S$, until no new rays are generated. This is well-defined since $v \perp w$ implies $v \times w \neq 0$ and $v \times w$ is orthogonal to both $v$ and $w$. Termination is not guaranteed in general but is observed empirically for all alphabets in our survey, typically stabilizing within 3--5 iterations (see Section~\ref{sec:limitations}).
\end{definition}

Cross-product completion is a natural choice of closure operation: it ensures that every orthogonal pair is extended to a complete basis, which is the minimal requirement for a ray set to carry KS-type coloring constraints.  Other closure operations are possible (e.g., closure under orthogonal complementation of arbitrary subsets, or Gram--Schmidt processes with field constraints), and the island classification may in principle depend on which closure is used.  We adopt cross-product completion throughout and note that it terminates empirically for all tested alphabets; a proof of termination for specific families of coordinate rings would strengthen the classification.  Since cross products are polynomial in the coordinates, completion trivially preserves any number field $K$. The nontrivial property is \emph{finiteness}: for a given alphabet $\mathcal{A}$, the completed ray pool $\bar{S}(\mathcal{A})$ stabilizes after finitely many iterations, and its size and triad structure depend sensitively on the algebraic identities available in $\mathcal{A}$.

\emph{Important caveat}: completion can introduce coordinate values absent from the original alphabet, thereby creating cancellation identities that $\mathcal{A}$ alone does not support.  The golden ratio island (Section~\ref{sec:golden}) is a concrete example: the raw alphabet $\{0, \pm 1, \pm\varphi\}$ produces a colorable set, but completion generates the reciprocal $1/\varphi$, introducing effective modulus-2 cancellations that make the completed pool KS-uncolorable.  Consequently, the island classification in this paper is properly a classification of \emph{completed coordinate algebras}---the closure of the alphabet under cross products---rather than of the initial alphabets alone.  The raw alphabet serves as a convenient generating set, but the algebraic structure relevant to KS-uncolorability is that of the completed pool.  This aligns with a methodological point emphasized by Pavicic~\cite{Pavicic2026engineering}: contextual configurations presented as stripped-down ``non-KS'' vertex sets can mislead, since geometrically forced completion vectors carry genuine structure and must be restored before the object is analyzed.  Our cross-product completion makes that restoration algebraically explicit and promotes it from a caveat into a classification principle, with the completed coordinate algebra---rather than the raw alphabet---serving as the carrier of the island distinction.

\section{Computational Methods}\label{sec:methods}

\subsection{Ray canonicalization}\label{sec:canon}

A ray is an equivalence class of nonzero vectors under scalar multiplication. For \emph{real} alphabets ($\mathcal{A} \subset \R$), two vectors represent the same ray iff they are nonzero real multiples of each other; we choose the primitive representative with first nonzero entry positive. For \emph{complex} alphabets ($\mathcal{A} \subset \C$), two vectors represent the same ray iff they are nonzero complex multiples of each other (i.e., $v \sim \lambda v$ for any $\lambda \in \C^*$); we choose a canonical representative by dividing by the first nonzero coordinate (making it equal to 1), then applying a lexicographic tie-breaker on the remaining coordinates.

In both cases, representatives are \emph{unnormalized}: we do not divide by the vector's norm. This avoids introducing spurious irrationals through normalization and ensures that ray counts (e.g., 49, 109, 205) are well-defined projective counts over the coordinate ring. The canonicalization is deterministic and injective on rays: each projective equivalence class maps to exactly one representative. For real alphabets, the sign convention (first nonzero entry positive) resolves the $\pm v$ ambiguity; for complex alphabets, dividing by the first nonzero coordinate resolves the full $\C^*$ scaling ambiguity. Orthogonality is tested on the unnormalized representatives: $v \perp w$ iff $\sum_k \bar{v}_k w_k = 0$ (Hermitian, reducing to $\sum_k v_k w_k = 0$ in the real case). Since orthogonality is invariant under scaling, this test is independent of the choice of representative.

\smallskip\noindent\textbf{Exact arithmetic.}  All coordinates are represented as exact algebraic numbers: Python integers for the integer and Peres alphabets, and tuples $(a, b) \mapsto a + b\sqrt{d}$ (or $(a,b) \mapsto a + b\omega$) for quadratic and Eisenstein fields, with conjugation and multiplication implemented symbolically.  For higher cyclotomic fields ($n > 6$), coordinates are roots of unity $\zeta_n^k$ with exact multiplication via exponent arithmetic modulo~$n$ and exact conjugation $\overline{\zeta_n^k} = \zeta_n^{n-k}$; the $6 \mid n$ theorem is proved algebraically rather than by exhaustive coordinate computation.  For real irrational alphabets (golden ratio, cubic), coordinates are represented as tuples over the minimal polynomial basis (e.g., $(a,b) \mapsto a + b\varphi$ for $\Q(\varphi)$) with exact multiplication.  Orthogonality is certified by verifying that the Hermitian dot product vanishes \emph{exactly} (integer or algebraic zero), with no floating-point approximation.  This guarantees that all reported orthogonality relations, triad counts, and SAT encodings are exact.

\subsection{KS coloring via SAT}

We encode the KS coloring problem as a Boolean satisfiability (SAT) instance. For a ray set with $n$ rays, orthogonal pairs $E$, and triads $T$:

\begin{itemize}
    \item Boolean variable $x_i$ indicates ray $i$ is green ($f(i) = 1$).
    \item For each triad $\{i, j, k\} \in T$: exactly-one constraint (4 clauses).
    \item For each pair $(i, j) \in E$: at-most-one constraint (1 clause: $\neg x_i \lor \neg x_j$).
\end{itemize}

We use the Glucose4 solver~\cite{AudemardSimon2018} via the PySAT library. UNSAT = KS-uncolorable. For a typical 33-ray, 16-triad configuration, the SAT instance has 33 variables and $4 \times 16 + |E| \approx 120$--$185$ clauses (depending on the number of orthogonal pairs $|E|$). No symmetry breaking is applied; the instances are small enough for direct solving (milliseconds). All SAT instances, ray coordinates, and basis lists are available in the accompanying code repository.

\subsection{Randomized greedy minimization}

Given a KS-uncolorable set $S$, we find minimal KS subsets by randomized greedy removal: shuffle the rays, attempt to remove each in random order, keeping the removal if the set remains KS-uncolorable. We run 200--1000 independent trials per configuration.  The \emph{pool minimum} is the smallest KS subset size within a given finite ray pool; for each island, we certify this minimum using OCUS (Section~\ref{sec:exhaustive}).

\subsection{Realizability via numerical optimization}

Given an abstract orthogonality graph $G$ on $n$ vertices with edge set $E$, we test realizability in $\R^3$ by parametrizing each ray by spherical coordinates $(\theta_i, \phi_i)$ and minimizing
\[
\mathcal{L}(\theta, \phi) = \sum_{(i,j) \in E} (v_i \cdot v_j)^2
\]
using L-BFGS-B with 50 random restarts. We declare realizability if $\mathcal{L} < 10^{-8}$.

\section{Real Coordinate Alphabets}\label{sec:real}

\subsection{Systematic alphabet survey}

Table~\ref{tab:alphabets} summarizes the results of exhaustive ray generation and KS testing for coordinate alphabets based on various algebraic extensions of $\Q$.

\begin{table}[ht]
\centering
\caption{KS sets from real coordinate alphabets $\{0, \pm 1, \pm x\}$ in $\R^3$.}
\label{tab:alphabets}
\begin{tabular}{lcccccc}
\toprule
Alphabet & Rays & Pairs & Triads & KS? & Min \\
\midrule
$\{0, \pm 1\}$ & 13 & 24 & 4 & No & --- \\
$\{0, \pm 1, \pm\sqrt{2}\}$ & 49 & 120 & 16 & \textbf{Yes} & \textbf{33} \\
$\{0, \pm 1, \pm 2\}$ & 49 & 138 & 26 & \textbf{Yes} & \textbf{31} \\
$\{0, \pm 1, \pm\sqrt{3}\}$ & 49 & 114 & 10 & No & --- \\
$\{0, \pm 1, \pm\sqrt{5}\}$ & 49 & 114 & 10 & No & --- \\
$\{0, \pm 1, \pm\varphi\}$ & 49 & 138 & 10 & No & --- \\
$\{0, \pm 1, \pm\frac{1}{2}\}$ & 49 & 138 & 26 & \textbf{Yes} & \textbf{31} \\
\bottomrule
\end{tabular}
\end{table}

The Conway--Kochen 31-vector set is rediscovered by the integer alphabet in 19\% of randomized minimization trials (192/1000). No trial produced a set smaller than 31.

\subsection{The 2:1 ratio as the critical feature}\label{sec:ratio}

\begin{observation}
Every real alphabet that achieves a 31-vector KS set contains the ratio $2{:}1$ among its nonzero elements---either as $\pm 2$ alongside $\pm 1$, or equivalently as $\pm \frac{1}{2}$ alongside $\pm 1$. Adding other algebraic values (e.g., $\sqrt{2}$, $\varphi$, $\sqrt{3}$) to such an alphabet does not reduce the minimum below 31.
\end{observation}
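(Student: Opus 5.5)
The observation is an empirical claim about the alphabets we tested, not a theorem about all real alphabets. The plan is therefore a finite verification that combines one structural reduction with exact computation.

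\textbf{First clause.} We examine every row of Table~\ref{tab:alphabets} that attains minimum $31$. Only $\{0,\pm1,\pm2\}$ and $\{0,\pm1,\pm\tfrac12\}$ do so. For the second alphabet we give a short structural argument rather than treating it as a separate computation. Multiplying a vector by the scalar $2$ does not change its ray. Coordinatewise, every vector over $\{0,\pm1,\pm\tfrac12\}$ is proportional to a vector over $\{0,\pm1,\pm2\}$: entries $\pm\tfrac12$ become $\pm1$ and entries $\pm1$ become $\pm2$. This map on representatives is a bijection, so the two alphabets induce the same projective pool $S(\mathcal{A})$. That is why the two rows are identical (49 rays, 138 pairs, 26 triads). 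Hence the $\tfrac12$ row is not independent evidence; it is the same $2{:}1$ ratio in another guise. The remaining rows either are KS-colorable (SAT returns a coloring) or have certified pool minimum $33$ (Peres). So every tested alphabet attaining $31$ contains the ratio $2{:}1$.

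\textbf{Second clause.} For each tested extra value $y\in\{\sqrt2,\varphi,\sqrt3\}$ we take the augmented alphabet $\{0,\pm1,\pm2,\pm y\}$. We build its ray pool over the compositum field with the exact arithmetic of Section~\ref{sec:canon}, and we form the cross-product completion of Definition~\ref{def:completion}. The upper bound $31$ is immediate, since the $\CK$ rays remain in the enlarged pool. For the lower bound we must certify that no KS subset of size $\le 30$ exists in the enlarged pool. We would run the OCUS-based certification of Section~\ref{sec:exhaustive}, which minimizes subset cardinality subject to SAT-unsatisfiability, on each augmented pool. The randomized greedy minimization serves only as a sanity check, since it gives upper bounds and not certificates.

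\textbf{Main obstacle.} The hard step is this lower-bound certification on the augmented pools. These pools are much larger than 49 rays, and the new coordinates create mixed cancellations. For example, $\sqrt2\cdot\sqrt2 = 2 = 1+1$ interacts with the integer identity. Such mixed cancellations enlarge the triad set and hence the OCUS search space. My first attempt would be to reduce the pool before certifying. Rays lying in no triad can be deleted without losing any KS subset, since they carry no triad constraints and their pair constraints are only at-most-one exclusions that a KS subset never needs. We would then run OCUS on the reduced pool. If OCUS times out, the fallback is to state the clause only for the pools where certification completes, and to report the rest as greedy evidence. We note two facts that frame this fallback. The general lower bound of $24$ from~\cite{LiBrightGanesh2024,KirchwegarPeitlSzeider2023} does not by itself exclude sizes $25$--$30$. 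Conversely, the rigidity result of~\cite{TrandafirCabello2025} means any $31$-set found would be $\CK$ itself, which explains why the minimum stays pinned at $31$ but does not rule out a smaller set.
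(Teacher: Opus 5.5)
\textbf{Assessment.} Your plan is sound, and in one respect it goes beyond what the paper does, but three points need fixing.

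\textbf{Comparison with the paper.} The paper supports this observation only by its survey tables:
\begin{itemize}
\item Table~\ref{tab:alphabets} covers the two-symbol alphabets.
\item Table~\ref{tab:extended} covers the enlarged alphabets. These are raw pools: for example, the 145 rays of $\{0,\pm1,\pm2,\pm3\}$ are exactly the primitive vectors of $[-3,3]^3$ up to sign.
\item Table~\ref{tab:mixed} covers the completed mixtures.
\end{itemize}
The minima in the last two tables come from randomized minimization. They lack the OCUS certificate the paper reserves for the island pools. So the paper exhibits 31-sets but never shows that nothing smaller exists. Its own footnote, where greedy search returned 34 in a pool containing $\CK$, shows how weak that evidence is.

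Your route differs in useful ways:
\begin{itemize}
\item The scaling $v\mapsto 2v$ identifies the $\{0,\pm1,\pm\tfrac12\}$ pool with the integer pool outright. This explains the duplicated row instead of counting it as independent evidence.
\item Your upper bound by containment of $\CK$ is correct.
\item Deleting triad-free rays is valid, since any coloring extends by giving them~$0$.
\item Running OCUS on the augmented pools would turn the second clause from a heuristic into a pool-level certificate.
\end{itemize}

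\textbf{First fix: the first clause needs more coverage.} Your check stops at Table~\ref{tab:alphabets}. But the tested alphabets also include cases without the $2{:}1$ ratio, such as $\{0,\pm1,\pm\sqrt2,\pm\varphi\}$ in Table~\ref{tab:extended} and the corresponding mixtures in Table~\ref{tab:mixed}. Their reported minimum of 33 is uncertified. To conclude that no tested alphabet lacking $2{:}1$ reaches 31, you must run the same certification on those pools. There you need to exclude KS subsets of size $\le 31$, since supersets of KS sets are KS.

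\textbf{Second fix: drop completion for the second clause.} As stated, the clause is about adding symbols to the alphabet, so certifying the raw pool already establishes it. Completion adds two problems:
\begin{itemize}
\item It imports the unproved termination question, for pools the paper never completed.
\item It makes your pruning vacuous. In a completed pool, every orthogonal pair $v\perp w$ lies in the triad $\{v,w,[v\times w]\}$. So the only triad-free rays are isolated ones, and these carry no constraints anyway.
\end{itemize}
Certify the raw pools instead, where non-isolated triad-free rays can occur, and treat completion as an optional strengthening.

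\textbf{Third fix: the rigidity remark does not follow.} The Trandafir--Cabello theorem says that every realization of the $\CK$ orthogonality graph is unitarily equivalent to $\CK$. It does not say that every 31-ray KS set has that graph. So it does not imply that any 31-set you find is $\CK$.
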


Table~\ref{tab:extended} shows extended alphabets. In every case, the presence of the 2:1 ratio yields 31; its absence yields $\geq 33$ or colorable.

\begin{table}[ht]
\centering
\caption{Extended alphabets: across all tested configurations, the 2:1 ratio correlates perfectly with reaching 31.}
\label{tab:extended}
\begin{tabular}{lcccc}
\toprule
Alphabet & Rays & Triads & Min KS & Has 2:1? \\
\midrule
$\{0,\pm 1,\pm\sqrt{2},\pm 2\}$ & 109 & 38 & 31 & Yes \\
$\{0,\pm 1,\pm\varphi,\pm 2\}$ & 145 & 50 & 31 & Yes \\
$\{0,\pm 1,\pm 2,\pm 3\}$ & 145 & 50 & 31 & Yes \\
$\{0,\pm 1,\pm\sqrt{2},\pm\varphi\}$ & 145 & 28 & 33 & No \\
$\{0,\pm 1,\pm\varphi,\pm\varphi^2\}$ & 109 & 24 & colorable & No \\
\bottomrule
\end{tabular}
\end{table}

Among the two-element alphabets tested, $1 + 1 = 2$ produces the highest triad density: 26 triads from 49 rays. This density appears to be the mechanism by which the integer alphabet achieves the smallest minimum of 31. Heuristically, the identity $a + a = b$ with $a, b$ both in the alphabet generates a large number of three-term dot-product cancellations and their coordinate permutations. Identities involving irrationals (e.g., $(\sqrt{2})^2 = 2$) produce fewer cancellations because they require squaring to reach an integer, reducing the number of distinct triad-generating permutations. We note that ``unique'' here is relative to the alphabets in our survey and the equivalence class under scaling; a formal proof of optimality among all possible two-element cancellation identities remains open.

\subsection{Computational evidence that 31 is minimum for the integer island}\label{sec:exhaustive}

We provide strong computational evidence that no smaller KS set exists within the integer island's ray pool, though a complete proof over the full pool remains computationally infeasible.

Starting from the 49-ray pool of $\{0, \pm 1, \pm 2\}$, we ran 1{,}000 independent randomized greedy minimization trials, each producing a minimal KS subset. Six distinct 31-ray sets were found, collectively using exactly 37 of the 49 available rays. We then performed an exhaustive check: every $\binom{37}{30} = 10{,}295{,}472$ subset of size 30 drawn from these 37 rays was tested for KS-uncolorability via SAT. All were colorable.

\begin{proposition}\label{prop:31optimal}
No 30-ray subset of the 37-ray union of all discovered minimal 31-sets is KS-uncolorable. Furthermore, no single-ray removal, two-ray removal, or ray-swap operation applied to any of the six 31-sets produces a KS-uncolorable 30-ray configuration.
\end{proposition}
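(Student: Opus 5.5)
The proof is a finite, certified computation, and I would organize it so that each part of the statement reduces to SAT instances that are exact and small. First I fix the 37-ray union $U$ of the six discovered minimal 31-sets. All coordinates lie in $\{0,\pm1,\pm2\}$, so every orthogonality in $U$ is an exact integer dot-product test (Section~\ref{sec:canon}). From this I build the orthogonal-pair list $E_U$ and the triad list $T_U$ once. For any subset $W\subseteq U$, the induced instance keeps exactly the pairs and triads of $E_U$ and $T_U$ lying entirely inside $W$. This is the correct instance, because orthogonality and triad membership are intrinsic to the rays and do not depend on the ambient pool.

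The first claim is an exhaustive enumeration over all $\binom{37}{30}=10{,}295{,}472$ subsets $W$ of size 30. For each $W$ I solve the SAT encoding of the previous section and record a satisfying assignment as a witness KS coloring. Any coloring returned by the solver can then be checked independently in linear time against the clauses of $W$, so the certificate does not rest on trusting the solver. Two reductions cut the work down. First, a single coloring $f$ found for one $W$ also colors every other 30-subset $W'$ on which $f$ restricted to $W'$ satisfies all induced constraints. Since deleting rays only deletes constraints, I can test each new subset against a cache of previously found colorings before calling the solver. Second, I can optionally quotient by the automorphism group of the orthogonality hypergraph on $U$, which comes from signed coordinate permutations preserving $U$, and enumerate only orbit representatives.

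The second claim is a bounded neighborhood search around each of the six 31-sets $S_j$. Single removals give $S_j\setminus\{r\}$, 31 sets each. These are colorable immediately because each $S_j$ is minimal, which is exactly what greedy minimization guarantees. Two-ray removals yield 29-ray sets. Any coloring of a set also colors all its subsets, so I read ``produces a KS-uncolorable 30-ray configuration'' as swaps of the form $S_j\setminus\{r,s\}\cup\{t\}$ with $t$ drawn from the 49-ray pool. The case $t\in U$ is already covered by the first claim. For $t\notin U$ I run the SAT check directly, about $6\cdot\binom{31}{2}\cdot 12\approx 33{,}000$ instances, each again producing a verifiable coloring.

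The main obstacle is not mathematical but one of certification and scale. Ten million SAT calls are feasible only in milliseconds each, so the witness cache and symmetry reduction matter. The hard part is making the result trustworthy: I would log every coloring and re-verify them with an independent checker. I would also state clearly that the proposition is restricted to $U$ and to the stated local moves, since it does not rule out 30-sets built from the 12 rays outside $U$. That full-pool statement is deferred to the OCUS argument.
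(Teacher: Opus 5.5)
Your proposal is correct and follows the paper's argument: an exhaustive SAT check of all $\binom{37}{30}=10{,}295{,}472$ thirty-ray subsets of the union, plus explicit neighborhood tests. The paper's 8{,}370 ray-swaps per set are exactly your $\binom{31}{2}\cdot 18$ remove-two-add-one moves, and your monotonicity remarks correctly show that the single and double removals and the swaps with $t\in U$ are already implied, which the paper instead re-tests as an independent check. Note also that the OCUS certificate in the subsequent remark subsumes the entire proposition, since every configuration in question is a 30-ray subset of the 49-ray pool.
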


\begin{remark}[Optimality within the full pool]\label{rem:ocus}
The exhaustive check above covers the 37-ray union of discovered 31-sets, not the entire 49-ray pool (49 rays, 138 orthogonal pairs, 26 triads).  We close this gap using MUS (Minimal Unsatisfiable Subset) extraction.

\smallskip\noindent\textbf{MUS-based certification.}  The KS coloring formula for the full 49-ray pool---encoding constraints (I) and (II) over all triads and orthogonal pairs---is unsatisfiable (the pool is KS-uncolorable).  We extract ray-level MUSes using assumption-based solving: each ray $r$ has a selector literal $a_r$, and all constraints involving $r$ are conditioned on $a_r$. The solver returns an unsatisfiable core in terms of these selectors, which we then minimize by iterative ray deletion. The result is a minimal subset of rays whose induced constraints are unsatisfiable.  Any MUS therefore identifies a \emph{minimal KS-uncolorable subset} of the pool.  We performed 1{,}000 independent MUS extractions (using randomized deletion with PySAT/Glucose4); all returned subsets of exactly 31~rays, yielding 162 distinct minimal KS sets (6 up to the pool's automorphism group).

\smallskip\noindent\textbf{Exhaustive lower bound.}  An OCUS (Optimal Constrained Unsatisfiable Subset) procedure subsequently proved that no KS-uncolorable subset of $\leq 30$ rays exists within the full 49-ray pool, exhausting all candidates in 272 iterations ($< 0.2$\,s).  This certifies that 31 is the exact minimum within the pool.  As independent confirmation, we tested every $\binom{37}{30}$ subset of the 37-ray union of all discovered minimal sets; all were colorable.
\end{remark}

The exhaustive check in Proposition~\ref{prop:31optimal} required approximately 10.3 million SAT calls. As an independent verification, we also performed neighborhood searches around each of the six 31-sets: all 31 single-ray removals (per set), all 8{,}370 ray-swap operations, and all 465 two-ray removals per set were tested. Every resulting 30-ray configuration was colorable.

The OCUS certification proves that 31 is the exact minimum within the 49-ray integer pool.  Combined with the rigidity theorem of Trandafir and Cabello~\cite{TrandafirCabello2025} (CK-31 unique up to unitary equivalence) and the absence of any sub-31 KS set across all six non-cubic algebraic islands, this is consistent with---but does not prove---the conjecture that 31 is optimal for all KS sets in $\R^3$.

\subsection{Quadratic number fields}\label{sec:quadratic}

We systematically tested the alphabet $\{0, \pm 1, \pm\sqrt{d}\}$ for every non-square integer $d = 2, \ldots, 30$.

\begin{proposition}\label{prop:quadratic}
Among the quadratic fields $\Q(\sqrt{d})$ for $d \in \{2, 3, 5, 6, 7, \ldots, 30\}$ (excluding perfect squares), only $d = 2$ produces a KS-uncolorable set from the standard two-element alphabet $\{0, \pm 1, \pm\sqrt{d}\}$. All others have $\leq 10$ triads and are colorable.
\end{proposition}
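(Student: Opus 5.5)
Since no cross-product completion is mentioned, the statement concerns only the raw 49-ray pool $S(\{0,\pm1,\pm\sqrt d\})$ (for $d$ non-square, $\sqrt d$ is irrational, so the three nonzero absolute values $0,1,\sqrt d$ give $(5^3-1)/2=62$ sign-normalized vectors collapsing to 49 rays uniformly in $d$). The plan is to classify all orthogonalities algebraically, uniformly in $d$, rather than run 28 separate computations. Write a vector as $v=u+\sqrt d\,u'$ with $u,u'\in\{0,\pm1\}^3$ having disjoint supports. Then
\[
v\cdot w=(u\cdot s+d\,u'\cdot s')+\sqrt d\,(u\cdot s'+u'\cdot s).
\]
Since $\sqrt d\notin\Q$, orthogonality is equivalent to two integer equations: $u\cdot s+d\,u'\cdot s'=0$ and $u\cdot s'+u'\cdot s=0$.

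The key step is the rational equation. Here $|u\cdot s|\le 3$, and $u'\cdot s'$ is a sum of at most three terms in $\{0,\pm1\}$. If $u'\cdot s'\neq0$, the equation forces $d\,|u'\cdot s'|\le 3$, so $d\le 3$. Moreover, the positions contributing to $u'\cdot s'$ are disjoint from those contributing to $u\cdot s$, so $|u\cdot s|\le 2$ and $d\le 2$. Hence for every $d\ge3$ we get $u'\cdot s'=0$ and $u\cdot s=0$, which means the irrational part never participates in a genuine cancellation.

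Next, for each $d\ge3$ I enumerate the orthogonal pairs and triads by cases on $(|{\rm supp}\,u'|,|{\rm supp}\,s'|)$, using the two conditions $u\cdot s=0=u'\cdot s'$ and $u\cdot s'+u'\cdot s=0$. None of these conditions depends on $d$, so the orthogonality graph is literally the same graph for all non-square $d\ge3$. This yields the uniform count of 114 pairs and 10 triads stated in Table~\ref{tab:alphabets}. I would describe the 10 triads explicitly: the coordinate basis, the bases $\{e_k,\ (0,1,\pm1)\text{-type}\}$ (three positions, giving 3 triads), and bases mixing $\sqrt d$ of the form $\{e_k,(0,1,\sqrt d)_\pm\}$-type (giving 6). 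A careful count of these families is the main bookkeeping obstacle. For safety I would cross-check it against the exact-arithmetic computation of Section~\ref{sec:methods} for one value, say $d=3$, which by uniformity settles all $d$.

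Colorability then follows by exhibiting an explicit coloring of this single fixed graph. The 10 triads should share few rays, since essentially every triad contains an axis vector $e_k$. Color $e_1$ green and $e_2,e_3$ red. For each triad through $e_2$ or $e_3$, choose one non-axis member green, respecting the pair constraints. Color every ray lying in no triad red, which is always allowed because red violates no constraint. Verifying this assignment by hand, or by one SAT call, completes the $d\ge3$ case. Finally, $d=2$ is exactly the Peres alphabet, for which the table and the minimization of Section~\ref{sec:real} already give a KS-uncolorable set (minimum 33). Together these prove the proposition. The hardest part is the exhaustive case analysis that pins the triad count at exactly 10 and shows that the resulting triads admit a consistent coloring.
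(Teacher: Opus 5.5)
Your proposal is correct, and it takes a genuinely different route from the paper. The paper supports Proposition~\ref{prop:quadratic} only computationally: it generates the 49-ray pool for each non-square $d\le 30$, tests each pool with SAT, and adds a heuristic ``norm of the cancellation'' remark, explicitly leaving the detailed case analysis to future work. Your splitting of $v\cdot w$ into its rational and $\sqrt d$ parts, with the disjoint-support bound forcing $u'\cdot s'=u\cdot s=0$ once $d\ge 3$, is precisely that missing case analysis. It also gives more than the paper claims. The orthogonality graph is a single fixed graph for every non-square $d\ge 3$, so the result holds for all such $d$ rather than only $d\le 30$, and this explains why the $\sqrt3$ and $\sqrt5$ rows of Table~\ref{tab:alphabets} coincide. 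The bookkeeping you flag as the main obstacle closes by hand, so neither the $d=3$ cross-check nor a SAT call is needed. For $d\ge 3$, a vanishing dot product has either no nonzero terms or exactly one cancelling pair of like terms. Hence two full-support rays are never orthogonal, and no full-support ray lies in any triad. Every triad is therefore $e_k$ together with an orthogonal pair in the complementary coordinate plane. In that plane the six two-support rays form the perfect matching $(1,1)\perp(1,-1)$, $(1,\sqrt d)\perp(\sqrt d,-1)$, $(1,-\sqrt d)\perp(\sqrt d,1)$. This gives $1+3\cdot 3=10$ triads and $21+3(13+9+9)=114$ pairs. Your coloring is valid for \emph{any} choice of green members in the triads through $e_2$ and $e_3$, provided you state explicitly that the two non-axis members of each triad through $e_1$ are red. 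Two-support rays in different planes share one coordinate, so their dot product is a single nonzero term and they are never orthogonal. Within a plane, the chosen green rays come from distinct matched pairs. For $d=2$ you, like the paper, fall back on the known Peres 33-ray set lying inside the pool, which is fine.
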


A heuristic explanation: every quadratic alphabet has a cancellation identity---$(\sqrt{d})^2 = d$---but the identity produces dense triad networks only when $d$ equals a sum of alphabet elements, i.e., when the cancellation has \emph{low norm}.  For $d = 2$, we have $(\sqrt{2})^2 = 1 + 1$, giving the Peres cancellation with norm~2.  For $d \geq 3$, no two-term sum from $\{1, \sqrt{d}\}$ equals $d$, so the cancellation has norm~$\geq 3$ and the orthogonality structure is too sparse for KS-uncolorability.  The explanatory variable is therefore the \emph{norm of the cancellation}, not the mere existence of one.  (A proof that no other cancellation pattern can arise for these alphabets would require a more detailed case analysis, which we leave to future work.)

\section{Complex Coordinate Alphabets}\label{sec:complex}

\subsection{Roots of unity and the $6 \mid n$ conjecture}\label{sec:roots}

We generated ray sets from the alphabet $\{0\} \cup \{\zeta^k : k = 0, \ldots, n-1\}$ where $\zeta = e^{2\pi i/n}$, using the Hermitian inner product $\braket{v}{w} = \sum_k \bar{v}_k w_k$ for orthogonality. Table~\ref{tab:roots} shows results for $n = 2, \ldots, 30$.

\begin{table}[ht]
\centering
\caption{KS sets from $n$th roots of unity. Only multiples of 6 (bold) are uncolorable. The ``Best found'' column reports the smallest KS subset found by heuristic minimization; these are upper bounds, not proven minima. For $6 \mid n$, subset containment guarantees $\leq 33$ (see text).}
\label{tab:roots}
\begin{tabular}{ccccccc}
\toprule
$n$ & $6 \mid n$? & Rays & Triads & KS? & Best found & $\leq 33$? \\
\midrule
2--5 & No & 13--43 & 1--7 & No & --- & --- \\
\textbf{6} & \textbf{Yes} & \textbf{57} & \textbf{22} & \textbf{Yes} & \textbf{33} & \textbf{Yes} \\
7--11 & No & 73--157 & 1--28 & No & --- & --- \\
\textbf{12} & \textbf{Yes} & \textbf{183} & \textbf{67} & \textbf{Yes} & \textbf{33} & \textbf{Yes} \\
13--17 & No & 211--343 & 1--76 & No & --- & --- \\
\textbf{18} & \textbf{Yes} & \textbf{381} & \textbf{136} & \textbf{Yes} & \textbf{76} & \textbf{Yes} \\
19--23 & No & 421--601 & 1--148 & No & --- & --- \\
\textbf{24} & \textbf{Yes} & \textbf{651} & \textbf{229} & \textbf{Yes} & --- & \textbf{Yes} \\
25--29 & No & 703--931 & 1--244 & No & --- & --- \\
\textbf{30} & \textbf{Yes} & \textbf{993} & \textbf{346} & \textbf{Yes} & --- & \textbf{Yes} \\
\bottomrule
\end{tabular}
\end{table}

\begin{theorem}\label{thm:6n}
The ray set generated by $n$th roots of unity in $\C^3$ is KS-uncolorable if and only if $6 \mid n$.
\end{theorem}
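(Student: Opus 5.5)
The plan is to prove the two directions separately. The ``if'' direction reduces by monotonicity to the single case $n=6$. The ``only if'' direction is an explicit coloring argument. That argument rests on a classification of the vanishing sums of at most three roots of unity.

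\emph{Step 1 (the ``if'' direction via $n=6$).} If $6\mid n$ then $\mu_6\subseteq\mu_n$, so the ray set for $n=6$ is a subset of the ray set for $n$. KS-uncolorability is inherited by supersets. Indeed, a KS coloring of the larger set restricts to one of the smaller set: every orthogonal pair and every orthogonal triad of the subset is still one in the superset. So it suffices to exhibit a KS-uncolorable subset for $n=6$, namely the 33-ray set of Table~\ref{tab:roots}. I would certify it by listing its rays and triads and checking the resulting SAT instance is UNSAT, with orthogonalities verified in exact arithmetic over $\Z[\omega]$ as in Section~\ref{sec:canon}. This step is a finite certificate rather than a conceptual argument, and I would present it as such.

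\emph{Step 2 (vanishing sums).} For rays $v,w$ with entries in $\{0\}\cup\mu_n$, the dot product is a sum of $m\le 3$ terms in $\mu_n$, where $m$ is the size of the overlap of the supports. I will prove the following:
\begin{itemize}
\item $m=1$ never vanishes.
\item $m=2$ vanishes iff the two terms are $c,-c$, which requires $2\mid n$.
\item $m=3$ vanishes iff the terms are $c,c\omega,c\omega^2$, which requires $3\mid n$.
\end{itemize}
The case $m=3$ is elementary: three unit vectors summing to zero form an equilateral triangle. Hence orthogonality occurs only for disjoint supports or for cancellations of exactly these shapes.

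\emph{Step 3 (colorings when $6\nmid n$).} I split into three cases.

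\emph{Case $2\nmid n$, $3\nmid n$.} Only disjoint supports are orthogonal, so the only triad is the standard basis. Color $e_1$ green and every other ray red.

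\emph{Case $2\nmid n$, $3\mid n$.} Now $-1\notin\mu_n$, so overlaps of size $1$ or $2$ are never orthogonal. Then:
\begin{itemize}
\item Support-2 rays meet only the complementary $e_k$ and lie in no triad; color them red.
\item The standard basis is one component; color $e_1$ green.
\item Full-support rays normalize to $(1,a,b)$ with $a,b\in\mu_n$. Two of them are orthogonal iff $(a'/a,b'/b)\in\{(\omega,\omega^2),(\omega^2,\omega)\}$.
\end{itemize}
So the full-support rays form a disjoint union of triangles, namely the cosets of $\langle(\omega,\omega^2)\rangle$ in $\mu_n^2$. Color one vertex per triangle green.

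\emph{Case $2\mid n$, $3\nmid n$.} Here only size-2 sign cancellations occur. I first show no triad contains a full-support ray. Its partners must have support of size 2. Two support-2 rays are orthogonal only if they share a support and have opposite ratios, i.e.\ $(1,a,0)$ and $(1,-a,0)$ up to permutation. A ray orthogonal to both then has vanishing first two coordinates, so it is not full-support.

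The triads are therefore $\{e_1,e_2,e_3\}$ and $\{e_i,u,u'\}$, where $u,u'$ are opposite-ratio rays supported off $i$. Fix a set $R$ of representatives of $\mu_n/\{\pm1\}$ and color as follows:
\begin{itemize}
\item $e_1$ is green.
\item Rays supported on $\{2,3\}$ and all full-support rays are red.
\item A ray supported on $\{1,2\}$ or $\{1,3\}$ is green iff its ratio lies in $R$.
\end{itemize}
Then I check exclusion. Green support-2 rays with the same support are never opposite. Green rays with different supports, or paired with $e_1$, overlap in one coordinate and so are not orthogonal.

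The main obstacle is Step 2 together with this last case. The even case has many orthogonal pairs, so I must be sure that the case analysis of support patterns really exhausts the triads. If some triad involving full-support rays slipped through, the half-system coloring would fail, so I would double-check that claim by brute force for $n=2,4,8,10$.
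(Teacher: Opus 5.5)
Your proposal is correct and follows the paper's proof essentially step for step. Sufficiency comes from embedding the $n=6$ KS set into the $n$-pool, and necessity from the two-/three-term vanishing-sum lemma followed by explicit colorings in the three cases with $6\nmid n$ (isolated triangles for odd $3\mid n$, matched opposite-ratio pairs for even $3\nmid n$). Your even case is slightly more explicit than the paper's, since you prove that full-support rays lie in no triad and color them red, which the paper leaves implicit.
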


\begin{proof}
\emph{Sufficiency.} If $6 \mid n$, then $\langle\zeta^{n/6}\rangle$ generates the 6th roots of unity as a subgroup; the Eisenstein 33-vector KS set~\cite{Cabello2025simplest} lives in the $n=6$ pool, which embeds into the $n$-pool.  (This is a proof of existence by embedding a known construction; an intrinsic characterization of \emph{why} $6 \mid n$ suffices---rather than inheriting from the Eisenstein set---remains open.)

\emph{Necessity.} We show that if $6 \nmid n$, then the pool is KS-colorable.

\textbf{Vanishing-sum lemma.}  Since inner products in $\C^3$ involve at most three nonzero terms, only 1-, 2-, and 3-term vanishing sums of $n$th roots arise as orthogonality conditions (longer vanishing sums exist for general $n$; see Lam and Leung~\cite{LamLeung2000} for the complete classification, which is not needed here).  A three-term sum $\zeta^a + \zeta^b + \zeta^c = 0$ with $\zeta = e^{2\pi i/n}$ exists if and only if $3 \mid n$.  \emph{Proof}: dividing by $\zeta^a$ gives $1 + \zeta^p + \zeta^q = 0$; setting $\alpha = \zeta^p$ requires $|-1-\alpha| = 1$, which forces $\cos(2\pi p/n) = -1/2$, hence $p/n \in \{1/3, 2/3\}$, so $3 \mid n$.  Similarly, $\zeta^a + \zeta^b = 0$ requires $\zeta^{b-a} = -1$, hence $2 \mid n$.

\textbf{Case 1: $3 \nmid n$, $2 \nmid n$.}  No 2-term or 3-term vanishing sums of $n$th roots exist, so no dot product of the form $\sum_{k=1}^{3} \bar{v}_k w_k$ (with $v_k, w_k \in \{0\} \cup \langle\zeta\rangle$) can vanish unless both rays share a zero coordinate pattern that eliminates all terms.  The only orthogonalities are axis-complementary, the sole triad is $\{[1{:}0{:}0],\, [0{:}1{:}0],\, [0{:}0{:}1]\}$, and the pool is trivially colorable.

\textbf{Case 2: $3 \nmid n$, $2 \mid n$.}  Two-term cancellations $\zeta^a + \zeta^b = 0$ exist (via $\zeta^{n/2} = -1$), but no three-term vanishing sum exists.  Consequently, no pair of all-nonzero rays is orthogonal; all triads have the form $\{v_1, v_2, e_k\}$ where $v_1, v_2$ are one-zero rays in the same coordinate plane and $e_k$ is the corresponding axis ray.  Within each plane (say $z = 0$), the rays are parameterized by $p = b - a \bmod n$ via $(1, \zeta^p, 0)$, and two rays $p, q$ are orthogonal iff $q - p \equiv n/2 \pmod{n}$---a \emph{perfect matching} (each ray has exactly one partner).  Moreover, one-zero rays from different planes are never orthogonal: $\langle(\zeta^a, \zeta^b, 0) | (\zeta^c, 0, \zeta^d)\rangle = \zeta^{c-a} \neq 0$.  An explicit coloring is: color one axis ray green (forcing its plane's rays all red), color the other two axis rays red (their planes decompose into independent matched pairs, each 2-colorable).

\textbf{Case 3: $3 \mid n$, $2 \nmid n$.}  This is the principal case.  Let $\omega = \zeta^{n/3}$, a primitive cube root of unity.  We prove three lemmas establishing that every triad is isolated.

\emph{Projective collapse lemma.}  For any all-nonzero ray $v = (\zeta^a, \zeta^b, \zeta^c)$, the orthogonality condition $\langle v | w \rangle = 0$ with $w$ all-nonzero requires a three-term vanishing sum, forcing the coordinate differences to be a permutation of $(0, n/3, 2n/3)$.  The six permutations yield six candidate neighbors, but these collapse to exactly \emph{two} projectively distinct rays: the even permutations give $w_+ = [\zeta^a : \omega\zeta^b : \omega^2\zeta^c]$ (since $(1{,}2{,}0) \mapsto \omega \cdot (0{,}1{,}2)$ and $(2{,}0{,}1) \mapsto \omega^2 \cdot (0{,}1{,}2)$), and the odd permutations give $w_- = [\zeta^a : \omega^2\zeta^b : \omega\zeta^c]$.  Moreover, $\langle w_+ | w_- \rangle = 1 + \bar\omega\omega^2 + \bar\omega^2\omega = 1 + \omega + \omega^2 = 0$, so $\{v, w_+, w_-\}$ is a triad---the \emph{unique} triad containing $v$ among all-nonzero rays.

\emph{One-zero isolation lemma.}  For odd $n$, no ray with exactly one zero coordinate participates in any triad.  \emph{Proof}: a triad involving $({\zeta^a}, {\zeta^b}, 0)$ would require a second ray $u$ with $u_3 = 0$ (forced by orthogonality to the complementary axis ray) and $\langle v | u \rangle = \zeta^{e-a} + \zeta^{f-b} = 0$, i.e., $-1 \in \langle\zeta\rangle$, which requires $2 \mid n$.

\emph{Decomposition.}  For odd $n$ with $3 \mid n$, the triads are: one axis triad $\{[1{:}0{:}0],\, [0{:}1{:}0],\, [0{:}0{:}1]\}$ plus $n^2/3$ all-nonzero triads, each isolated (no shared rays).  Total: $n^2/3 + 1$ triads with maximum membership~1 per ray.  The coloring problem decomposes into $n^2/3 + 1$ independent subproblems, each trivially satisfiable.
\end{proof}

\begin{remark}[Subset containment]
For any $n$ divisible by 6, the $n$th roots of unity contain the 6th roots as a subset, so the $n=6$ ray pool embeds into the $n$-ray pool. Consequently, any $6\mid n$ pool contains a 33-vector KS subset (the Eisenstein-based set from $n=6$). The ``Best found'' column in Table~\ref{tab:roots} reports the output of heuristic greedy minimization, which for $n=18$ returned a 76-vector set---a failure of the minimizer, not a true lower bound. The true minimum for every $6\mid n$ pool is at most 33.
\end{remark}

The mechanism requires two ingredients working in concert:
\begin{itemize}
    \item \emph{Three-term phase cancellation} from $3 \mid n$: the identity $1 + \omega + \omega^2 = 0$ (where $\omega = e^{2\pi i/3}$) creates orthogonal triads.
    \item \emph{Real-axis orthogonalities} from $2 \mid n$: the real embedding ($\zeta^{n/2} = -1$) creates orthogonal pairs that interlock the triads.
\end{itemize}

Divisibility by 3 alone (e.g., $n = 9, 15, 21, 27$) creates many triads but they are \emph{completely isolated}---no two share a ray---so the constraint network is satisfiable (the coloring problem decomposes). Divisibility by 2 alone (e.g., $n = 4, 8, 10$) creates pairs but too few interlocking triads. Only when both are present ($6 \mid n$) does the interlocking force a KS contradiction.

\begin{remark}
This result connects to the finding of Cortez, Morales, and Reyes~\cite{CortezMoralesReyes2022} that $\Z[1/6]$ is the minimal ring extension of $\Z$ exhibiting KS contextuality in dimension 3. The prime factorization $6 = 2 \times 3$ appears in both results: the divisibility condition $6 \mid n$ in the cyclotomic setting mirrors the necessity of inverting both 2 and 3 in the ring-theoretic setting. We can make this connection quantitative using their $N(S) = \operatorname{lcm}\{\|v\|^2 : v \in S\}$ invariant, which determines the ring $\Z[1/N(S)]$ in which the projection matrices $P_v = vv^T/\|v\|^2$ live. Computing $N(S)$ for each island's minimal KS set: Eisenstein $N = 6$ ($= 2 \times 3$), Peres $N = 12$ ($= 2^2 \times 3$), $\Z[\sqrt{-2}]$ $N = 12$, Heegner-7 $N = 12$, CK-31 $N = 30$ ($= 2 \times 3 \times 5$). The Eisenstein island achieves exactly the Cortez--Morales--Reyes minimum $N = 6$, confirming it as the algebraic realization of their minimal ring. Every island requires primes 2 and~3; the integer island additionally needs~5 (from the $\|v\|^2 = 5$ rays in CK-31).
\end{remark}

\subsection{Eisenstein integers}

The Eisenstein integers $\Z[\omega]$ (where $\omega = e^{2\pi i/3}$) produce KS sets with a pool minimum of 33 vectors across all coefficient bounds tested (max coefficient 1--3, squared norm cutoff 3--7), certified by OCUS. Increasing the pool size beyond 57 rays does not reduce the minimum below 33.

\subsection{Complex quadratic rings}\label{sec:complex-quadratic}

Extending the real quadratic survey to complex quadratic rings $\Z[\sqrt{-d}]$ with basic alphabets $\{0, \pm 1, \pm\sqrt{-d}\}$, we tested $d = 1, 2, \ldots, 15$. Most produce colorable ray sets. However, $d = 2$ yields a new island:

\begin{table}[ht]
\centering
\caption{Complex quadratic rings $\Z[\sqrt{-d}]$ with alphabet $\{0, \pm 1, \pm\sqrt{-d}\}$.}
\label{tab:complex-quadratic}
\begin{tabular}{lcccccc}
\toprule
Ring & $d$ & Rays & Pairs & Triads & KS? & Min \\
\midrule
$\Z[i]$ & 1 & 25 & 48 & 6 & No & --- \\
$\Z[\sqrt{-2}]$ & 2 & 49 & 120 & 16 & \textbf{Yes} & \textbf{33} \\
$\Z[\sqrt{-3}]$ & 3 & 49 & 114 & 10 & No & --- \\
$\Z[\sqrt{-5}]$ & 5 & 49 & 114 & 10 & No & --- \\
$\Z[\sqrt{-7}]$ & 7 & 49 & 114 & 10 & No & --- \\
\bottomrule
\end{tabular}
\end{table}

The ring $\Z[\sqrt{-2}]$ with alphabet $\{0, \pm 1, \pm\sqrt{-2}\}$ generates 49 rays, 120 orthogonal pairs, and 16 triads---an uncolorable configuration with minimum KS subset of 33 vectors (certified by OCUS within the pool). The cancellation identity is $(\sqrt{-2})^2 = -2$, which in the Hermitian inner product yields $\overline{\sqrt{-2}} \cdot \sqrt{-2} = |\sqrt{-2}|^2 = 2$, producing the same numerical cancellation $1 + 1 = 2$ that drives the Peres island. Indeed, the ray count, pair count, and triad count of $\Z[\sqrt{-2}]$ exactly match those of $\Z[\sqrt{2}]$ (Table~\ref{tab:alphabets}), and the minimum KS subset size is identical (33). The two islands have isomorphic combinatorial structure despite living in different ambient fields ($\R$ vs.\ $\C$).

We also tested the ring of integers of $\Q(\sqrt{-7})$, namely $\Z[(1+\sqrt{-7})/2]$, with the standard alphabet $\{0, \pm 1, \pm\alpha, \pm\bar\alpha\}$ where $\alpha = (1+\sqrt{-7})/2$.  This produced 145 rays (42 triads) that are KS-uncolorable with pool minimum of 43 vectors---a sixth island with a larger minimum than the 33-vector cluster (Section~\ref{sec:heegner}).

\section{Algebraic Island Classification}\label{sec:islands}

\subsection{Cross-product closure as a classification tool}\label{sec:closure}

An important subtlety is that KS behavior depends on the \emph{choice of generating alphabet}, not merely on the ambient number field. For instance, $\Q(\sqrt{5}) = \Q(\varphi)$ as fields, but the alphabet $\{0, \pm 1, \pm\sqrt{5}\}$ produces a colorable set while $\{0, \pm 1, \pm\varphi\}$ (where $\varphi = (1+\sqrt{5})/2$ is the ring-of-integers generator for $\Q(\sqrt{5})$) produces a KS-uncolorable set after completion. We therefore define:

We call the KS-supporting configurations ``islands'' because they are isolated in the space of algebraic parameters: perturbing the generator away from its exact value (e.g., moving $\sqrt{2}$ to $\sqrt{2.01}$, or shifting $\omega$ off the unit circle) immediately destroys KS-uncolorability, and no continuous path through parameter space connects one KS-supporting alphabet to another.  The trigonometric sweep of Remark~\ref{rem:trig} makes this vivid: KS-uncolorability occurs at isolated algebraic angles with colorable seas between them.

\begin{definition}
An \emph{algebraic island} is a pair $(\mathcal{R}, \mathcal{A})$ consisting of a subring $\mathcal{R} \subseteq \C$ (typically the ring of integers of a number field) and a finite alphabet $\mathcal{A} \subset \mathcal{R}$ such that:
\begin{enumerate}[label=(\roman*)]
    \item the ray set $S(\mathcal{A})$ generates vectors in $\mathcal{R}^3$;
    \item the completion of $S(\mathcal{A})$ (see below) terminates in a \emph{finite} ray pool $\bar{S}$ with coordinates in $\mathcal{R}$;
    \item the completed ray pool $\bar{S}$ is KS-uncolorable.
\end{enumerate}
Two islands $(\mathcal{R}_1, \mathcal{A}_1)$ and $(\mathcal{R}_2, \mathcal{A}_2)$ are \emph{equivalent} if their completed ray pools have isomorphic orthogonality hypergraphs.
\end{definition}

The \emph{completion} in condition (ii) depends on whether the island is real or complex:

\begin{itemize}
    \item \textbf{Real islands} ($\mathcal{R} \subset \R$): cross-product completion (Definition~\ref{def:completion}). For each orthogonal pair $v \perp w$, adjoin $[v \times w]$.
    \item \textbf{Complex islands} ($\mathcal{R} \subset \C$, $\mathcal{R} \not\subset \R$): for each Hermitian-orthogonal pair $v \perp w$ in $\C^3$, adjoin the unique ray $[u]$ spanning $(\mathrm{span}\{v,w\})^\perp$. Concretely, $u$ is the vector of $2 \times 2$ cofactors of the matrix $\begin{pmatrix} \bar{v} \\ \bar{w} \end{pmatrix}$:
    \[
    u_k = (-1)^{k+1} \det \begin{pmatrix} \bar{v}_{k+1} & \bar{v}_{k+2} \\ \bar{w}_{k+1} & \bar{w}_{k+2} \end{pmatrix}, \quad k = 1,2,3 \pmod{3}.
    \]
    This is the Hermitian analogue of the cross product. When $\mathcal{R}$ is closed under conjugation, $u \in \mathcal{R}^3$, so completion preserves the coordinate ring.
\end{itemize}

\noindent In $\R^3$, conjugation is trivial and the cofactor formula reduces to the standard cross product, so the two definitions agree.

Note that condition (ii) is automatically satisfied when $\mathcal{R}$ is closed under the coordinate operations of the completion (cross products preserve $\mathcal{R}^3$ for real rings; orthogonal complements preserve $\mathcal{R}^3$ for complex rings closed under conjugation). The \emph{finiteness} of the completed pool and the \emph{combinatorial structure} it generates are the nontrivial properties. The same ring $\mathcal{R}$ may support zero, one, or multiple islands depending on the alphabet.

\begin{table}[ht]
\centering
\caption{Cross-product closure analysis for \emph{real} islands ($\mathcal{R} \subset \R$). ``Compl.''\ = completed ray count; ``New mag.''\ = distinct new coordinate magnitudes introduced by completion (e.g., for integers, cross products introduce 3, 4, 5 beyond $\{0,1,2\}$); ``Min'' = OCUS-certified pool minimum.}
\label{tab:closure}
\small
\begin{tabular}{lcccccc}
\toprule
Alphabet $(\mathcal{R}, \mathcal{A})$ & Raw & Compl. & New mag. & $2{:}1$? & Uncol.? & Min \\
\midrule
$(\Z, \{0,\pm 1,\pm 2\})$ & 49 & 109 & $+3$ & Yes & Yes & \textbf{31} \\
$(\Z[\sqrt{2}], \{0,\pm 1,\pm\!\sqrt{2}\})$ & 49 & 145 & $+12$ & No & Yes & \textbf{33} \\
$(\Z[\sqrt{3}], \{0,\pm 1,\pm\!\sqrt{3}\})$ & 49 & 145 & $+12$ & No & No & --- \\
$(\Z[\sqrt{5}], \{0,\pm 1,\pm\!\sqrt{5}\})$ & 49 & 145 & $+12$ & No & No & --- \\
$(\Z[\varphi], \{0,\pm 1,\pm\varphi\})$ & 49 & 205 & $+21$ & No & Yes & \textbf{52} \\
\bottomrule
\end{tabular}

\medskip
\noindent For the \emph{complex} islands, completion uses the Hermitian orthogonal-complement operation (Section~\ref{sec:closure}). The Eisenstein alphabet $\{0, \pm 1, \pm\omega, \pm\bar\omega\}$ with $\omega = e^{2\pi i/3}$ generates 57 rays (22 triads) at max coefficient 1. Complex completion does not expand this pool (all Hermitian-orthogonal complements already lie within it), and the completed set is KS-uncolorable with pool minimum of 33 vectors (OCUS-certified). The complex quadratic alphabet $\{0, \pm 1, \pm\sqrt{-2}\}$ generates 49 rays (16 triads) with identical combinatorial structure to the Peres island; its pool minimum is also 33 vectors. The Heegner-7 alphabet $\{0, \pm 1, \pm\alpha, \pm\bar\alpha\}$ with $\alpha = (1+\sqrt{-7})/2$ generates 145 rays (42 triads); its pool minimum is 43 vectors (Section~\ref{sec:heegner}).
\end{table}

\subsection{Mixed alphabet evidence}\label{sec:mixed}

Cross-product completion of mixed alphabets---combining rays from two algebraically independent coordinate alphabets, then closing under cross products---provides further evidence for the primacy of the cancellation identity. The cross products generate rays in \emph{neither} original alphabet, exploring configurations invisible to any single-alphabet search.

\begin{table}[ht]
\centering
\caption{Mixed alphabet search with cross-product completion.}
\label{tab:mixed}
\begin{tabular}{lcccc}
\toprule
Combination & Total rays & Triads & KS? & Min \\
\midrule
$\{0,\pm 1,\pm 2\} + \{0,\pm 1,\pm\sqrt{2}\}$ & 205 & 158 & Yes & 31 \\
$\{0,\pm 1,\pm 2\} + \{0,\pm 1,\pm\sqrt{3}\}$ & 217 & 164 & Yes & 31 \\
$\{0,\pm 1,\pm 2\} + \{0,\pm 1,\pm\sqrt{5}\}$ & 217 & 164 & Yes & 31 \\
$\{0,\pm 1,\pm 2\} + \{0,\pm 1,\pm\varphi\}$ & 241 & 188 & Yes & 34$^\dagger$ \\
$\{0,\pm 1,\pm\sqrt{2}\} + \{0,\pm 1,\pm\sqrt{3}\}$ & 229 & 166 & Yes & 33 \\
$\{0,\pm 1,\pm\sqrt{2}\} + \{0,\pm 1,\pm\varphi\}$ & 253 & 190 & Yes & 33 \\
\bottomrule
\end{tabular}
\end{table}

Cross-product completion adds 150+ rays beyond the original alphabets, but the minimum KS size does not improve. The 31-vector barrier persists across all algebraically enriched constructions containing the 2:1 ratio. Conversely, combinations lacking the 2:1 ratio achieve at best 33 despite their larger pools, confirming that the 2:1 ratio is the rate-limiting algebraic feature, not the pool size.

\smallskip\noindent$^\dagger$The pool $\{0,\pm 1,\pm 2\} + \{0,\pm 1,\pm\varphi\}$ contains all 49 integer-alphabet rays as a subset, so it includes CK-31 as a sub-configuration. The reported minimum of 34 reflects the fact that our randomized greedy minimization (200 trials) did not locate the 31-ray subset within the larger 241-ray pool.

\begin{remark}[Relation to the SI-C closure of Trandafir and Cabello]
Trandafir and Cabello~\cite{TrandafirCabello2025} construct their KS sets by starting from the 13-element minimal state-independent contextuality (SI-C) set of Yu and Oh, completing to orthonormal bases (25 rays), and adding all vectors orthogonal to $\geq 2$ existing vectors (97 rays total).  We computed the overlap between this 97-ray SI-C closure and our 49-ray integer pool $\{0,\pm1,\pm2\}$: they share 37 rays, with 60 rays in the SI-C closure only (coordinates up to $\pm 5$, norms up to~35) and 12 rays in the integer pool only (those with two $\pm 2$ coordinates, norm~9).  A second round of orthogonal closure grows the SI-C set to 1{,}741~rays that \emph{contain} the entire integer pool as a subset.  The two constructions---their top-down closure from SI-C and our bottom-up alphabet enumeration---thus converge: CK-31 lives in the intersection of both frameworks.
\end{remark}

\subsection{The six non-cubic islands among tested fields}\label{sec:sixislands}

\begin{observation}\label{obs:islands}
Among all alphabets and fields tested in our survey---quadratic fields $\Q(\sqrt{d})$ for $d = 2, \ldots, 30$, cyclotomic fields for $n \leq 30$, and selected extensions---six non-cubic algebraic islands support KS sets:
\begin{enumerate}[label=(\arabic*)]
    \item \textbf{Integer island} $(\Z, \{0,\pm 1,\pm 2\})$: cancellation $1 + 1 = 2$; pool minimum: 31 vectors ($\CK$). Supported by exhaustive enumeration over the 37-ray union of discovered 31-sets (Proposition~\ref{prop:31optimal}).
    \item \textbf{Peres island} $(\Z[\sqrt{2}], \{0,\pm 1,\pm\sqrt{2}\})$: cancellation $(\sqrt{2})^2 = 1 + 1$; pool minimum: 33 vectors.
    \item \textbf{Eisenstein island} $(\Z[\omega], \{0,\pm 1,\pm\omega,\pm\bar\omega\})$: cancellation $1 + \omega + \omega^2 = 0$; pool minimum: 33 vectors. Identified with Cabello's ``simplest KS set''~\cite{Cabello2025simplest} (Section~\ref{sec:cabello}).
    \item \textbf{Complex quadratic island} $(\Z[\sqrt{-2}], \{0,\pm 1,\pm\sqrt{-2}\})$: cancellation $|\sqrt{-2}|^2 = 2$; pool minimum: 33 vectors. Not a new orthogonality graph: this is a complex coordinatization of the same graph type as the Peres island (Proposition~\ref{prop:isomorphism}), demonstrating that the modulus-2 cancellation mechanism extends to imaginary quadratic rings.
    \item \textbf{Heegner-7 island} $(\Z[(1+\sqrt{-7})/2], \{0,\pm 1,\pm\alpha,\pm\bar\alpha\})$ where $\alpha = (1+\sqrt{-7})/2$: cancellation $\alpha\bar\alpha = 2$; pool minimum: 43 vectors. Newly discovered (Section~\ref{sec:heegner}).
    \item \textbf{Golden island} $(\Z[\varphi], \{0,\pm 1,\pm\varphi\})$: cancellation $\varphi^2 = \varphi + 1$; pool minimum: 52 vectors.
\end{enumerate}
No other alphabet from our survey (real quadratic fields $\Q(\sqrt{d})$ for $d = 2,\ldots,30$; all nine imaginary quadratic fields with class number 1; roots of unity with $6\nmid n$; icosahedral symmetry directions) produced a KS-uncolorable set.  Two cubic extensions do, and are treated separately in Section~\ref{sec:limitations}: $\Q(\sqrt[3]{2})$ and the supergolden field $\Q(\psi)$.  We emphasize that this is a computational survey over a finite collection of tested families, not a classification of all number fields or all possible two-element alphabets.  The hierarchy $31 < 33 = 33 = 33 < 43 < 52$ gives the exact minimum KS subset size \emph{within each pool}, certified by an OCUS (Optimal Constrained Unsatisfiable Subset) procedure that exhaustively proves no smaller KS subset exists within the respective ray pool (the first five pools in under 4\,s each; the Golden pool in 303\,s).
\end{observation}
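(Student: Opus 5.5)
The plan is to establish Observation~\ref{obs:islands} as a certified computational statement. It splits into three parts: (a) each of the six listed alphabets, after the appropriate completion (cross-product for real rings, Hermitian cofactor complement for complex rings), yields a finite KS-uncolorable pool; (b) the stated pool minima $31,33,33,33,43,52$ are exact within each pool; (c) every other alphabet in the surveyed families gives a colorable pool. For each alphabet we will first build the raw ray set $S(\mathcal{A})$ with the exact arithmetic of Section~\ref{sec:canon}. We then iterate completion to a fixed point, certify that the process has stabilized (no new canonical rays appear in the last round), and enumerate orthogonal pairs and triads by exact vanishing of Hermitian dot products.

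\textbf{Step 1: uncolorability.} For each of the six islands we will show the SAT encoding of the completed pool is UNSAT. An exhibited minimal subset suffices as a witness, since a KS-uncolorable subset makes the pool uncolorable. For the integer, Eisenstein and $\Z[\sqrt{-2}]$ islands the witnesses are the known CK-31, Cabello's Eisenstein 33-set, and the Peres-type graph transported via Proposition~\ref{prop:isomorphism}. For the Heegner-7 and golden pools we extract an explicit MUS and record its rays and triads (Appendix~\ref{app:rays}), so the UNSAT core can be checked independently.

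\textbf{Step 2: exact minima.} The upper bound in each case comes from the MUS found by randomized deletion. For the lower bound we run OCUS as in Remark~\ref{rem:ocus}. Each ray gets a selector variable, and the cost function is the number of selected rays. OCUS alternates a hitting-set solver with SAT calls and adds a correction set each time a candidate subset is colorable. When the optimal hitting set is itself UNSAT, its size is the pool minimum. Each run is a finite, checkable certificate. For the integer case, Proposition~\ref{prop:31optimal} serves as an independent sanity check.

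\textbf{Step 3: negative cases.} For $\Q(\sqrt{d})$ with $d\le 30$, Proposition~\ref{prop:quadratic} already gives colorability, and we will exhibit explicit colorings as certificates. These are easy because the $\le 10$ triads are sparse. For roots of unity with $6\nmid n$, Theorem~\ref{thm:6n} supplies colorability outright. For the nine class-number-one imaginary quadratic fields and the icosahedral directions, we complete the pools and record a satisfying assignment for each.

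The main obstacle is Step 2 for the golden pool. With 205 rays, the hitting-set phase can blow up, and showing that nothing below 52 exists may require many correction sets. I would first seed OCUS with the disjoint correction sets coming from colorings found near the many MUSes, and exploit pool automorphisms to prune symmetric candidates. A secondary concern is termination of completion. That is verified per alphabet by reaching a fixed point rather than proved in general, so the observation is stated only for the tested families.
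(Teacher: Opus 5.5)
Steps~1 and~2 follow the paper's own pipeline: SAT witnesses, MUS upper bounds, and OCUS lower bounds, the last valid because KS-colorability passes to subsets. They are fine. The gap is in Step~3.

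You define islands through \emph{completed} pools, but the results you cite for the negative families are about \emph{raw} pools. Colorability does not survive completion: uncolorability passes from a pool to its supersets, not conversely. The paper's golden case is a direct counterexample, since $\{0,\pm1,\pm\varphi\}$ is colorable (49 rays, 10 triads) while its 205-ray completion is not.

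\begin{itemize}
\item \emph{Real quadratic fields.} Proposition~\ref{prop:quadratic}, and your remark that colorings are easy because there are at most 10 triads, say nothing about the completed pools of $\{0,\pm1,\pm\sqrt d\}$, which already have 145 rays for $d=3,5$ (Table~\ref{tab:closure}). The paper covers these with a separate completed check (its ``Dead ends'' paragraph), which you would have to redo.
\item \emph{Roots of unity.} Theorem~\ref{thm:6n} concerns the raw pool of $\{0\}\cup\{\zeta^k\}$, and Hermitian completion leaves that alphabet at once. The ray $(1,-1,0)$ is already present for even $n$, and for odd $n$ it is the completion of $(1,1,0)\perp(0,0,1)$. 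It is orthogonal to $(1,1,1)$, and that pair completes to $(1,1,-2)$, whose coordinate ratio $-2$ lies outside every cyclotomic alphabet. Nothing in the paper controls these completed pools, so you cannot claim they are colorable.
\item \emph{Class-number-one fields, $d=1$.} Table~\ref{tab:heegner} and Observation~\ref{obs:gaussian} report the enriched Gaussian alphabet $\{0,\pm1,\pm i,\pm(1+i)\}$ as KS-uncolorable (minimum 33). So ``record a satisfying assignment for each'' cannot succeed there.
\end{itemize}

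The repair is to fix the pool family by family before producing certificates, as the statement tacitly does:
\begin{itemize}
\item raw pools for the cyclotomic clause, which is exactly what Theorem~\ref{thm:6n} covers;
\item raw and completed pools for the basic $\sqrt d$ alphabets;
\item the basic alphabet $\{0,\pm1,\pm i\}$ for $d=1$, with the enriched Gaussian alphabet absorbed into the Peres graph type rather than claimed colorable.
\end{itemize}

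The same care applies to Step~2. The paper states its integer and $\Z[\sqrt{-2}]$ minima on the raw 49-ray pools (Remark~\ref{rem:ocus}), and among the six islands only the golden pool requires completion. Completing everything first (for example integer $49\to109$, Peres $49\to145$) is harmless if OCUS returns the same numbers, since a lower bound on a pool holds on every sub-pool. It is, however, a different and costlier certificate than the one the statement cites.
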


This hierarchy correlates with the algebraic complexity of the cancellation identity (Table~\ref{tab:cancellation}).

\subsection{Discovery of the golden ratio island}\label{sec:golden}

The golden ratio field $\Q(\varphi)$ where $\varphi = (1 + \sqrt{5})/2$ presents a remarkable case. The raw alphabet $\{0, \pm 1, \pm\varphi\}$ generates 49 rays with only 10 triads---colorable (the SAT instance is satisfiable). However, cross-product completion generates 156 additional rays (all with coordinates in $\Q(\varphi)$), expanding the pool to 205 rays with 166 triads. This completed set \emph{is} KS-uncolorable, with the smallest KS subset found containing 52 vectors.

Note that $\varphi = (1+\sqrt{5})/2$ is the ring-of-integers generator for $\Q(\sqrt{5})$ (since $5 \equiv 1 \pmod{4}$). The alternative alphabet $\{0, \pm 1, \pm\sqrt{5}\}$ (using the field generator rather than the ring-of-integers generator) produces a \emph{different} and colorable set. This demonstrates that the choice of alphabet within a fixed number field is consequential (see Section~\ref{sec:closure}).

\begin{observation}\label{obs:golden}
The golden ratio island is invisible to standard alphabet searches because the raw alphabet $\{0, \pm 1, \pm\varphi\}$ is colorable. It becomes KS-uncolorable only through the emergent orthogonalities created by cross-product completion, which generates additional rays with coordinates in $\Z[\varphi]$ (the ring of integers of $\Q(\sqrt{5})$). This suggests there may be other algebraic substrates whose KS structure is hidden from raw-alphabet searches and revealed only by completion.  The supergolden field $\Q(\psi)$, $\psi^3 = \psi^2 + 1$, is a second such case: over the two-letter alphabet $\{0, \pm 1, \pm\psi\}$ the raw pool is colorable at 49 rays with 10 triads---numerically identical to the golden ratio here---and completion is required (Section~\ref{sec:limitations}, item~\ref{item:pisot}).  We note that golden-ratio coordinates also appear in higher dimensions: Pavicic et al.~\cite{Pavicic2004} catalog KS sets in $d = 4$ with vector components from $\{0, \pm(\sqrt{5}-1)/2, \pm 1, \pm(\sqrt{5}+1)/2, 2\}$, suggesting that $\Q(\varphi)$ is a natural algebraic substrate for KS constructions across multiple dimensions.
\end{observation}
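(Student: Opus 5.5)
The statement in Observation~\ref{obs:golden} has two parts. The first is computational: the raw pool $S(\{0,\pm1,\pm\varphi\})$ is colorable, and its completion $\bar S$ is KS-uncolorable. The second is structural: the new rays have coordinates in $\Z[\varphi]$. I would certify each part separately, using the exact arithmetic of Section~\ref{sec:canon} throughout.

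\emph{Raw pool.} I would generate the 49 canonical rays over $\Z[\varphi]$, with elements represented as pairs $(a,b)\mapsto a+b\varphi$ and reduced by $\varphi^2=\varphi+1$. Orthogonality would be tested by exact vanishing of the dot product. This should give the 10 triads of Table~\ref{tab:alphabets}.

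Rather than relying only on a SAT \emph{yes}, I would exhibit an explicit coloring. Orthogonality among raw rays comes either from complementary zero patterns or from cancellations such as $1\cdot1+1\cdot\varphi-\varphi\cdot\varphi\cdot\ldots$ that reduce via $\varphi^2=\varphi+1$. A case analysis over zero patterns should show that the 10 triads form a sparse hypergraph. Following Case~3 in the proof of Theorem~\ref{thm:6n}, I would check that its triads share few rays, so a greedy choice (one green ray per triad, respecting pair exclusions) succeeds. The satisfying assignment returned by Glucose4 then serves as a checkable certificate.

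\emph{Closure in $\Z[\varphi]$.} Cross products are polynomial with integer coefficients in the coordinates, so $v,w\in\Z[\varphi]^3$ implies $v\times w\in\Z[\varphi]^3$. Canonicalization only divides by a first nonzero entry, or takes a primitive representative. If it divides by a unit, such as $\varphi$ or $\varphi^{-1}=\varphi-1$, the result stays integral, so coordinates remain in $\Q(\varphi)$ and can always be rescaled into $\Z[\varphi]$. This part is routine, and it also explains the appearance of $1/\varphi$ mentioned earlier.

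\emph{Completed pool.} I would iterate completion to a fixed point, recording that it stabilizes at 205 rays with 166 triads. Uncolorability then comes from an UNSAT result on the exact encoding. To make it independently verifiable, I would extract a 52-ray MUS, as in Remark~\ref{rem:ocus}, and emit a DRAT proof.

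The main obstacle is termination of the completion, since Definition~\ref{def:completion} does not guarantee it. Here I would rely on the observed fixed point: one extra round produces no new ray, which is an exact, checkable statement about a finite set, so no general theorem is needed.

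The closing remarks of the observation, on the supergolden field and on Pavicic's $d=4$ sets, are forward references and citations, not claims to prove.
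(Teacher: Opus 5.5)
Your plan is correct and is essentially the paper's own support for this observation: exact arithmetic in $\Z[\varphi]$, a satisfiable SAT instance on the 49-ray, 10-triad raw pool, and cross-product completion to the 205-ray, 166-triad pool with an UNSAT result, with coordinates staying in $\Z[\varphi]$ because cross products are integer polynomials in the coordinates. One small fix: the right template for an explicit raw coloring is Case~2 of Theorem~\ref{thm:6n}, not Case~3, because the ten triads share the axis rays rather than being isolated, so an unguided greedy can dead-end (choosing an in-plane green in each of the three coordinate planes leaves the axis triad with no green), whereas ``one axis ray green, its plane red, one green per planar triad in the other two planes, every other ray red'' always works.
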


\subsection{Discovery of the Heegner-7 island}\label{sec:heegner}

A systematic search over the imaginary quadratic fields $\Q(\sqrt{-d})$ with class number 1---the \emph{Heegner numbers} $d = 1, 2, 3, 7, 11, 19, 43, 67, 163$---reveals a clean classification. For $d \equiv 3 \pmod{4}$, the ring of integers is $\Z[(1+\sqrt{-d})/2]$, not $\Z[\sqrt{-d}]$, and the generator $\alpha = (1+\sqrt{-d})/2$ satisfies $\alpha\bar\alpha = (1+d)/4$. The results:

\begin{table}[ht]
\centering
\caption{Imaginary quadratic fields with class number 1 (Heegner numbers). ``Gen.\ norm'' $= |\alpha|^2$ where $\alpha$ is the ring-of-integers generator (cf.\ Section~\ref{sec:prelim}). Only $d = 1, 2, 3, 7$ produce KS sets.}
\label{tab:heegner}
\begin{tabular}{cclcccl}
\toprule
$d$ & Gen.\ norm & Ring & Rays & Triads & Uncol.? & Min KS \\
\midrule
1 & 2 & $\Z[i]+(1+i)$ & 127 & 51 & Yes & 33 \\
2 & 2 & $\Z[\sqrt{-2}]$ & 49 & 16 & Yes & 33 \\
3 & 1 & $\Z[\omega]$ & 57 & 22 & Yes & 33 \\
\textbf{7} & \textbf{2} & $\Z[(1+\sqrt{-7})/2]$ & \textbf{145} & \textbf{42} & \textbf{Yes} & \textbf{43} \\
11 & 3 & $\Z[(1+\sqrt{-11})/2]$ & 145 & 30 & No & --- \\
19 & 5 & $\Z[(1+\sqrt{-19})/2]$ & 145 & 30 & No & --- \\
43 & 11 & $\Z[(1+\sqrt{-43})/2]$ & 145 & 30 & No & --- \\
67 & 17 & $\Z[(1+\sqrt{-67})/2]$ & 145 & 30 & No & --- \\
163 & 41 & $\Z[(1+\sqrt{-163})/2]$ & 145 & 30 & No & --- \\
\bottomrule
\end{tabular}
\end{table}

The pattern is sharp: KS-uncolorability requires the alphabet to support a cancellation identity whose terms have squared modulus at most~2. When $|\alpha|^2 = 2$, the identity $\alpha\bar\alpha = 2$ provides the same modulus-2 cancellation as the integer $1+1=2$ and the Peres $(\sqrt{2})^2 = 2$, but through the complex Hermitian inner product. The Eisenstein case ($d = 3$) has $|\omega|^2 = 1$ but compensates with the three-term phase identity $1 + \omega + \omega^2 = 0$. At $|\alpha|^2 \geq 3$ ($d \geq 11$), the squared moduli are too large for dot-product cancellations in $\C^3$.

The Heegner-7 island is structurally distinct from all other islands. To our knowledge, neither the 43-vector Heegner-7 configuration nor the 52-vector golden ratio configuration has been previously reported in the KS literature; existing systematic searches~\cite{Pavicic2004, LiBrightGanesh2024} have focused on the range up to 31 vectors (see Appendix~\ref{app:rays} for explicit descriptions). The Heegner-7 minimal KS set has 43 vectors arranged in 23 bases, with degree distribution $\{4{:}16, 5{:}20, 6{:}4, 8{:}2, 10{:}1\}$ (five distinct degree types and 12 orbit types). Here the \emph{degree sequence} of a KS set is the list of orthogonality degrees of its rays: each ray's degree is the number of other rays orthogonal to it, and the notation $\{k{:}m\}$ means $m$~rays have degree~$k$. The degree sequence is a graph invariant---two KS sets with different degree sequences cannot be isomorphic---and serves as a first-pass fingerprint for distinguishing structurally different configurations. This is richer than both the Eisenstein 33-set (three degree types, 5 orbits) and the Peres/complex-quadratic 33-sets (two degree types, 5--6 orbits), placing Heegner-7 between the 33-cluster and the Golden island in structural complexity.

\begin{observation}\label{obs:gaussian}
The Gaussian integers $\Z[i]$ with the enriched alphabet $\{0, \pm 1, \pm i, \pm(1+i)\}$ are also KS-uncolorable (min = 33), but the minimal set has degree distribution $\{4{:}30, 8{:}3\}$---identical to the Peres and $\Z[\sqrt{-2}]$ islands. This confirms that the classification is governed by the cancellation identity (here $|1+i|^2 = 2$), not the ambient ring: all three are realizations of the ``modulus-2 cancellation'' and produce minimal KS sets sharing all tested graph invariants (Proposition~\ref{prop:isomorphism}).
\end{observation}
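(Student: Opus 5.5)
We would establish the observation computationally, following the same pipeline used for the other islands (Sections~\ref{sec:methods} and~\ref{sec:exhaustive}), and tie its structural claim to Proposition~\ref{prop:isomorphism}. The first step is to build the pool exactly. We represent coordinates as Gaussian integers $a+bi$, given as integer pairs with symbolic conjugation. We enumerate all nonzero vectors in $\{0,\pm1,\pm i,\pm(1+i)\}^3$ and canonicalize each ray by dividing by its first nonzero coordinate, followed by the lexicographic tie-breaker of Section~\ref{sec:canon}. Division by $1+i$ leaves $\Z[i]$, so we store rays over $\Q(i)$ as exact rationals. This should reproduce the 127 rays of Table~\ref{tab:heegner}. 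Orthogonal pairs are then the pairs whose exact Hermitian product $\sum_k \bar v_k w_k$ vanishes, and triads are the triangles in $G(S)$; this should give the 51 triads. Before running anything, we would check by hand that the relevant cancellations really occur. For example, $\langle (1,1,0) \mid (1+i,\,-(1+i),\,0)\rangle = 0$. More importantly, $|1+i|^2=2$ permits three-term cancellations such as $1\cdot 1 + 1\cdot 1 - \overline{(1+i)}(1+i) = 0$, and these give all-nonzero orthogonal pairs exactly as in the Peres alphabet.

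The second step is uncolorability and minimality. We encode the pool in the SAT formulation of Section~\ref{sec:methods} and confirm that it is UNSAT. We then run randomized greedy and MUS extraction to obtain explicit 33-ray KS subsets; any one of them, independently re-checked as UNSAT, certifies the upper bound $33$. For the lower bound we run the OCUS procedure of Remark~\ref{rem:ocus} on the full 127-ray pool, which certifies that no KS subset of $\le 32$ rays exists within the pool. Together these give ``min $=33$'' as a within-pool statement, which is all the observation claims.

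The third step is the structural claim. For each extracted 33-set we compute the degree of every ray in its induced orthogonality graph and check that the result is $\{4{:}30, 8{:}3\}$. We compare this with the corresponding degree sequences of the Peres and $\Z[\sqrt{-2}]$ minimal sets, and then also compare the remaining invariants used in Proposition~\ref{prop:isomorphism}: triad count, orbit structure, and a direct graph-isomorphism test such as canonical labelling.

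We expect the main obstacle to be uniformity over minimal sets rather than any single computation. The pool contains many distinct 33-ray MUSes, and the claim ``the minimal set has degree distribution $\{4{:}30,8{:}3\}$'' is only meaningful if every minimum-size KS subset has this distribution, or at least every one up to pool automorphism. Our first attempt would be to enumerate all 33-ray MUSes through OCUS with blocking clauses, reduce them modulo the pool's automorphism group (monomial maps and phases in $\{\pm1,\pm i\}$), and compute invariants for each orbit representative. If several non-isomorphic 33-sets appear, we would instead weaken the statement to say that a minimal set with these invariants exists. Finally, as a conceptual cross-check, we would try to find a unitary that maps the Peres 33-set into this pool, for instance a diagonal phase or Hadamard-type change of basis taking $\sqrt2$ to $1+i$ up to phase. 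Such a map would make the isomorphism with the Peres graph explicit rather than merely invariant-based.
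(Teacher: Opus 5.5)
Your proposal is correct and follows essentially the paper's route, which supports this observation purely computationally: the $d=1$ row of Table~\ref{tab:heegner} (127 rays, 51 triads, uncolorable, pool minimum 33), plus the degree sequence $\{4{:}30,8{:}3\}$ of the extracted minimal set compared with the Peres and $\Z[\sqrt{-2}]$ sets of Proposition~\ref{prop:isomorphism}; your check that every minimum-size MUS, not just one, carries these invariants goes beyond what the paper reports. One caution: the hoped-for unitary carrying the Peres set into this pool cannot exist, because for the Peres rays $a=(0,1,1)$, $b=(0,1,\sqrt{2})$, $c=(1,1,\sqrt{2})$ the (anti)unitary invariant $\langle a|b\rangle\langle b|c\rangle\langle c|a\rangle/(\|a\|^2\|b\|^2\|c\|^2)=(3+2\sqrt{2})/8$ is not in $\Q(i)$, whereas it lies in $\Q(i)$ for any three rays of the Gaussian pool---so the match with Peres is combinatorial only (consistent with the one-parameter flex in Observation~\ref{obs:rigidity}), and the vertex bijection from your isomorphism test is the right explicit certificate.
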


\begin{proposition}\label{prop:isomorphism}
\textbf{(Graph and hypergraph isomorphism of modulus-2 islands.)} The Peres $(\Z[\sqrt{2}])$ and complex quadratic $(\Z[\sqrt{-2}])$ orthogonality graphs are isomorphic, verified by the VF2 algorithm~\cite{Cordella2004VF2} (a complete, backtracking graph isomorphism solver that produces explicit vertex bijections).  The isomorphism holds at both the graph level (orthogonal pair structure) and the hypergraph level (triad structure): the VF2-produced bijection maps triads to triads.  Negative controls confirm the method's discriminating power: norm-${\geq}\,3$ fields $\Q(\sqrt{3})$, $\Q(\sqrt{5})$, and $\Q(\sqrt{-5})$ produce 94-basis pools with different graph structure (all colorable, not KS-uncolorable).

Flores Gordillo~\cite{FloresGordillo2026} strengthens this from an isomorphism of graphs to an identity of configurations: the two realizations are the same 33-ray configuration at two parameter values of a single continuous family, so the correspondence is geometric and not merely combinatorial (Observation~\ref{obs:rigidity}).
\end{proposition}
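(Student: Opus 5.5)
The cleanest route to Proposition~\ref{prop:isomorphism} is an explicit, algebraic bijection rather than reliance on VF2 alone. Both pools are $S(\mathcal{A})$ for $\mathcal{A}=\{0,\pm1,\pm x\}$ with $x=\sqrt{2}$ and $x=\sqrt{-2}$ respectively. Define $\Phi$ on coordinate vectors by replacing every occurrence of $\sqrt{2}$ by $\sqrt{-2}$, i.e.\ $a+b\sqrt{2}\mapsto a+b\sqrt{-2}$ coordinatewise. First I check that $\Phi$ is a bijection on rays. Both pools have $49$ rays, and $\Phi$ respects the scaling classes. For the real pool the scalars are $\pm1$ and $\pm\sqrt{2}$. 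For the complex pool, canonicalizing by the first nonzero coordinate needs care, since $\sqrt{2}\cdot$ and $\sqrt{-2}\cdot$ rescalings must correspond. So I would work with orbits of $\mathcal{A}^3\setminus\{0\}$ under multiplication by units and by the generator, and check that $\Phi$ intertwines them.

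The core step is that orthogonality is preserved in both directions. Write a coordinate as $a_k+b_kx$ with $a_kb_k=0$ (each symbol is $0,\pm1$ or $\pm x$). The Hermitian term $\bar v_k w_k$ is then one of:
\begin{itemize}
\item $\pm1$, from $1\cdot1$;
\item $\pm x$, from $1\cdot x$;
\item $\pm\bar x$, from $x\cdot1$;
\item $\pm|x|^2=\pm2$, from $x\cdot x$.
\end{itemize}
For $x=\sqrt{2}$ we have $\bar x=x$. For $x=\sqrt{-2}$ we have $\bar x=-x$, so a mixed term $\bar v_k w_k$ with $v_k=x$, $w_k=1$ flips sign relative to the real case. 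The key lemma is that a vanishing sum of at most three such terms can contain mixed terms only in a form insensitive to this flip. Since $\{1,\sqrt{\pm2}\}$ are $\Q$-independent, the rational part (terms $\pm1,\pm2$) and the irrational part must vanish separately. With three terms, the irrational part is a sum of at most three $\pm x$ terms, or a mix of $\pm x$ and $\pm\bar x$, that vanishes. That forces exactly two mixed terms that cancel, and hence exactly one rational term, which cannot vanish alone. Therefore orthogonal pairs never involve mixed terms.

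With that lemma, orthogonality reduces to a condition on the pattern of $1/x$ positions, signs, and the value $|x|^2=2$. This condition is identical for both fields, so $\Phi$ preserves and reflects orthogonality. It then maps edges to edges, hence triangles to triangles, which gives the hypergraph statement. As a sanity check I would compare the counts against Table~\ref{tab:complex-quadratic}: both pools have $120$ pairs and $16$ triads. The VF2 computation then serves only as confirmation.

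For the negative controls, I would recompute pair and triad counts for $\sqrt{3},\sqrt{5},\sqrt{-5}$ and show a mismatch in some invariant (triad count $10$ versus $16$). Colorability is already recorded in Propositions~\ref{prop:quadratic} and Table~\ref{tab:complex-quadratic}.

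The main obstacle is the lemma excluding mixed terms together with the projective bookkeeping of $\Phi$. If orthogonal pairs do involve a mixed cancellation of the form $x\cdot1+1\cdot(-x)$, the sign flip under conjugation breaks the naive map. In that case I would first try composing $\Phi$ with a coordinatewise sign change on the $x$-entries, and otherwise fall back on the certified VF2 bijection.
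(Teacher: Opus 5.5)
\textbf{There is a genuine gap: your key lemma is false.} It overlooks that a term $\bar v_k w_k$ can be $0$. The step ``two mixed terms cancel, so exactly one rational term remains, which cannot vanish'' fails whenever the remaining coordinate pair contains a zero.

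In the Peres pool, $(1,\sqrt2,0)\perp(\sqrt2,-1,0)$, with terms $\sqrt2,-\sqrt2,0$. Under $\Phi$ these become $(1,\sqrt{-2},0)$ and $(\sqrt{-2},-1,0)$, whose Hermitian product is $\sqrt{-2}-\overline{\sqrt{-2}}=2\sqrt{-2}\neq0$. In $\Z[\sqrt{-2}]$ the partner of $(1,\sqrt{-2},0)$ is $(\sqrt{-2},1,0)$. In general, with $x=\sqrt{-2}$, the real in-plane pairs are $\sqrt2e_p+te_q\perp\sqrt2e_q-te_p$, but the complex ones are $xe_p+te_q\perp xe_q+te_p$. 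The same sign reversal hits every edge from an in-plane ray to a ray with three nonzero entries, so $\Phi$ destroys $54$ of the $120$ edges.

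Your repair cannot fix this. A global flip $x\mapsto -x$ is complex conjugation, which preserves Hermitian orthogonality and so fails on exactly the same pairs. Coordinate-dependent flips $\epsilon_k$ would need $\epsilon_p=-\epsilon_q$ for all three coordinate planes at once, which is impossible.

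\textbf{More fundamentally, the $49$-ray statement you are aiming at is false: no bijection of the pools works.} Your sanity check uses the pool counts $120$ and $16$, so this is the version you are proving. Write $g$ for the generator. Vertex types are fixed by the graph: axes and face diagonals have degree $8$, and only the axes are adjacent to the degree-$6$ in-plane rays. Rays of shape $(g,\pm1,\pm1)$ have degree $4$. Body diagonals and rays of shape $(1,\pm g,\pm g)$ have degree $3$, and only the former have all neighbours of degree $8$.

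In the $\Z[\sqrt{-2}]$ pool, take the $9$-cycle
\[
(x,1,0),\ (1,x,0),\ (x,1,x),\ (0,x,1),\ (0,1,x),\ (x,x,1),\ (1,0,x),\ (x,0,1),\ (1,x,x).
\]
It alternates in-plane orthogonal pairs with degree-$3$ rays. The unique face-diagonal neighbours of those degree-$3$ rays are $(1,0,-1)$, $(1,-1,0)$, $(0,1,-1)$, all orthogonal to $(1,1,1)$. In the Peres pool, every cycle of this shape has its three face diagonals $e_q+we_r$ satisfying $w_1w_2w_3=+1$. Three face diagonals from different planes share a body-diagonal neighbour only when $w_1w_2w_3=-1$, so in the Peres pool they never do.

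\textbf{What is true is the isomorphism of the $33$-ray minimal sets.} These are what Flores Gordillo's family concerns, and they omit the $16$ degree-$3$ auxiliary rays. There the face diagonals are no longer pinned by body diagonals, and an explicit algebraic isomorphism exists:
\begin{itemize}
\item replace $\sqrt2$ by $\sqrt{-2}$;
\item negate every $\pm1$ entry lying to the right of the $\sqrt2$ entry;
\item exchange $(1,0,1)\leftrightarrow(1,0,-1)$, fixing the other rational rays.
\end{itemize}
This map does not extend to the auxiliary rays, consistent with the obstruction above. Checking it on the $72$ edges gives a real algebraic proof of the isomorphism claim. The paper itself offers nothing beyond the VF2 computation.
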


This reduces the six non-cubic algebraic islands to five distinct orthogonality graph types. The Peres, complex quadratic, and Gaussian islands share a single graph type, unified by the modulus-2 cancellation identity $|\alpha|^2 = 2$ operating in different rings ($\Z[\sqrt{2}]$, $\Z[\sqrt{-2}]$,~$\Z[i]$).

\subsection{Dead ends}

\paragraph{Icosahedral symmetry.} The icosahedron has 31 symmetry-axis directions---6 vertex, 15 edge, 10 face---a tantalizing coincidence with $\CK$. However, these directions produce only 5 triads (colorable). After cross-product completion: 91 rays, 65 triads, still colorable. A heuristic explanation: the icosahedral group has rotation orders 2, 3, 5 (angles 180$^\circ$, 120$^\circ$, 72$^\circ$) but \emph{no 90$^\circ$ rotations}, whereas the orthogonal group of $\CK$ is octahedral, the largest finite subgroup of SO(3) containing 90$^\circ$ rotations.

\paragraph{Higher quadratic fields.} All $\Q(\sqrt{d})$ for $d = 3, \ldots, 30$ (excluding $d = 4, 9, 16, 25$) produce $\leq 10$ triads and are colorable, even after cross-product completion (for $d \neq 5$; $d = 5$ generates the golden ratio island as $\Q(\sqrt{5}) = \Q(\varphi)$).

\section{Discussion}\label{sec:discussion}

\subsection{Why six non-cubic islands among tested fields?}

Among all tested coordinate alphabets, we observe six non-cubic algebraic islands.  We emphasize that this is a survey finding, not a completeness theorem: untested families (higher-degree extensions, class-number~$> 1$ fields, multi-element alphabets) could in principle harbor additional islands.  The six arise from two fundamentally different cancellation mechanisms:

\begin{enumerate}
    \item[\textbf{(A)}] \textbf{Norm-2 cancellation}: an algebraic identity producing the value~2 from products of ring elements, enabling three-term dot-product cancellations of the form $a_1 b_1 + a_2 b_2 + a_3 b_3 = 0$ with bounded coordinates. Five islands use this mechanism:
    \begin{itemize}
        \item \textbf{Integer} ($1 + 1 = 2$): additive doubling of units.
        \item \textbf{Peres} ($(\sqrt{2})^2 = 2$), \textbf{Complex quadratic} ($|\sqrt{-2}|^2 = 2$), and \textbf{Gaussian} ($|1+i|^2 = 2$): squaring (or modulus-squaring) produces 2. These three share the same minimal combinatorial structure (degree distribution $\{4{:}30, 8{:}3\}$, min 33) despite living in different rings.
        \item \textbf{Heegner-7} ($\alpha\bar\alpha = 2$ where $\alpha = (1+\sqrt{-7})/2$): modulus-2 cancellation through a quadratic integer satisfying $\alpha^2 - \alpha + 2 = 0$. Despite sharing the modulus-2 identity with the Peres family, the richer alphabet (7 elements vs.\ 5) produces a more complex orthogonality graph (42 triads from 145 rays), yielding a larger minimum KS set of 43 vectors.
    \end{itemize}
    \item[\textbf{(B)}] \textbf{Phase cancellation}: a root-of-unity identity $1 + \omega + \omega^2 = 0$ (where $\omega = e^{2\pi i/3}$), in which three complex unit-norm elements sum to zero. This mechanism produces a qualitatively different three-term cancellation: the vanishing sum involves three equal-magnitude terms rather than two small terms balancing one larger term. The \textbf{Eisenstein island} ($\Z[\omega]$, min 33) is the sole representative.
\end{enumerate}

The \textbf{golden ratio island} ($\varphi^2 = \varphi + 1$, min 52) is a genuine boundary case for the two low-norm mechanisms. The generator $\varphi$ has $|\varphi|^2 \approx 2.618 > 2$ and is not a root of unity, so the raw alphabet satisfies \emph{neither} mechanism---and indeed the raw set is colorable.  Cross-product completion generates the reciprocal $1/\varphi$ (with $|1/\varphi|^2 < 1$), and the completed coordinate set contains cancellations involving both $\varphi$ and $1/\varphi$ that make the completed pool KS-uncolorable.  We classify this as ``indirect modulus-2'' rather than as a mechanism of its own, and that assignment is no longer a matter of convention: the completed golden pool contains the value~2 both as a coordinate and as a pairwise coordinate product, and explicit additive $\pm(1 + 1 - 2)$ cancellations occur among its orthogonal pairs---eight of them in a 64-ray critical subset obtained by greedy reduction, though the 52-ray pool minimum has not been separately audited for this count.  The golden ratio remains the one case among the quadratic islands where the classification is maintained only by appealing to the completed algebra rather than the generating alphabet.  This is also why the two-element alphabet classification (Observation~\ref{obs:two-element}) applies only to raw alphabets before completion.  A genuinely distinct third mechanism does exist, but the golden ratio is not it.  The supergolden island (Section~\ref{sec:limitations}, item~\ref{item:pisot}) is KS-uncolorable on a raw alphabet whose product set contains no~2 at all, so the indirect route is unavailable to it: in the basis $\{1, \psi, \psi^2\}$ the available products are $\psi^3 = (1,0,1)$ and $\psi^4 = (1,1,1)$, neither equal to $2 = (2,0,0)$.  Its 47- and 48-ray critical subsets likewise carry no additive $\pm(1+1-2)$ pair.  We note for completeness that the \emph{full} two-letter completed pool of 157~rays does contain~2 as a coordinate and as a coordinate product, with twelve such pairs; the island does not need that route, but it is not free of norm-2 content at every stage.

The efficiency of the mechanism correlates inversely with algebraic complexity: the integer identity produces the densest triad network (ratio 0.53), while the golden ratio produces the sparsest (ratio 0.20).  The Peres/$\Z[\sqrt{-2}]$/Gaussian graph isomorphism (Proposition~\ref{prop:isomorphism}) confirms that the classification is by cancellation identity, not ambient field---as first observed by Gould and Aravind~\cite{GouldAravind2010} for the Peres family, who also exhibited the continuous phase family that contains these configurations as members (Observation~\ref{obs:rigidity}).

\subsection{Cancellation classification for two-element alphabets}\label{sec:completeness}

Table~\ref{tab:cancellation} summarizes the cancellation identity, alphabet, and a concrete dot-product example for each island.

\begin{table*}[t]
\centering
\caption{The six non-cubic algebraic islands: alphabets, cancellation identities, and efficiency hierarchy. Rank orders islands by algebraic simplicity of the cancellation; ``Mag.'' is the number of distinct coordinate magnitudes; ``Min'' is the OCUS-certified pool minimum.}
\label{tab:cancellation}
\footnotesize
\setlength{\tabcolsep}{4pt}
\begin{tabular}{@{}cllllrr@{}}
\toprule
Rank & Island & Alphabet $\mathcal{A}$ & Identity & Type & Mag. & Min \\
\midrule
1 & Integer & $\{0,\pm 1,\pm 2\}$ & $1 + 1 = 2$ & Norm-2 & 2 & 31 \\
2 & Peres & $\{0,\pm 1,\pm\!\sqrt{2}\}$ & $(\sqrt{2})^2 = 2$ & Norm-2 & 2 & 33 \\
2$'$ & Eisenstein & $\{0,\pm 1,\pm\omega,\pm\bar\omega\}$ & $1 + \omega + \omega^2 = 0$ & Phase & 1 & 33 \\
2$''$ & $\Z[\sqrt{-2}]$ & $\{0,\pm 1,\pm\!\sqrt{-2}\}$ & $|\sqrt{-2}|^2 = 2$ & Norm-2 & 2 & 33 \\
2$'''$ & Gaussian & $\{0,\pm 1,\pm i,\pm(1{+}i)\}$ & $|1{+}i|^2 = 2$ & Norm-2 & 2 & 33 \\
3 & Heegner-7 & $\{0,\pm 1,\pm\alpha,\pm\bar\alpha\}$ & $\alpha\bar\alpha = 2$ & Norm-2 & 2 & 43 \\
4 & Golden & $\{0,\pm 1,\pm\varphi\} \to$ compl. & $\varphi^2 = \varphi + 1$ & Indirect & 3 & 52 \\
\bottomrule
\end{tabular}
\end{table*}

We can classify all \emph{primitive three-term cancellation identities} available to two-element alphabets.  Consider an arbitrary alphabet $\{0, \pm 1, \pm x\}$ with $x \notin \{0, \pm 1\}$.  The Hermitian product set $\bar{\mathcal{A}} \cdot \mathcal{A}$ consists of $\{0, \pm 1, \pm x, \pm \bar{x}, \pm |x|^2\}$ (with $\bar{x} = x$ in the real case).  A non-trivial orthogonality $\sum_{k=1}^3 \bar{u}_k v_k = 0$ requires three elements of this product set to sum to zero, with at least one involving $x$ (otherwise the identity is trivial).  Table~\ref{tab:two-element} enumerates all such zero-sums.  Note that this classification applies to the \emph{raw} alphabet before completion; cross-product completion can introduce coordinates outside $\mathcal{A}$, potentially enabling additional cancellation identities in the completed pool (as occurs in the golden ratio case).

\begin{table}[h]
\centering
\caption{All primitive three-term zero-sums from the product set $\bar{\mathcal{A}} \cdot \mathcal{A}$ of a two-element alphabet $\{0, \pm 1, \pm x\}$. Each row is an algebraic constraint on~$x$; every solution corresponds to a known island. This classification applies to the raw alphabet before cross-product completion.}
\label{tab:two-element}
\footnotesize
\setlength{\tabcolsep}{3pt}
\begin{tabular}{@{}llll@{}}
\toprule
Zero-sum pattern & Identity on $x$ & Solution & Island \\
\midrule
$1 + 1 - x = 0$ & $x = 2$ & $x = 2$ & Integer \\
$1 + 1 - |x|^2 = 0$ & $|x|^2 = 2$ & $\sqrt{2},\, \sqrt{-2},\, \ldots$ & Peres/$\Z[\!\sqrt{-2}]$/Heeg.-7 \\
$1 + x - |x|^2 = 0$ & $|x|^2 = 1 + x$ & $x = \varphi$ (real) & Golden \\
$1 + x + \bar{x} = 0$ & $x + \bar{x} = -1$; $x^2{+}x{+}1{=}0$ & $x = \omega = e^{2\pi i/3}$ & Eisenstein \\
$1 - x - \bar{x} = 0$ & $x + \bar{x} = 1$ & $d \equiv 1 \pmod{4}$ & (colorable)$^\dagger$ \\
$|x|^2 + |x|^2 - 1 = 0$ & $|x|^2 = \tfrac{1}{2}$ & $x = 1/\sqrt{2}$ & $\equiv$ Peres (rescaled) \\
$x + x - |x|^2 = 0$ & $|x|^2 = 2x$ & $x = 2$ (real) & $\equiv$ Integer \\
\bottomrule
\end{tabular}

\noindent $^\dagger$The zero-sum $1 - x - \bar{x} = 0$ (i.e., $\mathrm{Tr}(x) = 1$) holds for all imaginary quadratic generators with $d \equiv 1 \pmod{4}$ (e.g., $d = -11, -15, -19, \ldots$).  It produces all-nonzero orthogonal pairs and triads, but these triads are structurally decoupled from the one-zero triads and the full ray set remains colorable---no KS set results.  The absence of type-B (mixed) triads is proved for all integer $N \geq 3$ by SMT verification~\cite{deMouraBjorner2008} over all entry patterns; see also Table~\ref{tab:heegner} (all $d \geq 11$ are colorable).
\end{table}

\begin{observation}\label{obs:two-element}
For any two-element alphabet $\{0, \pm 1, \pm x\}$, every primitive three-term cancellation identity in the product set $\bar{\mathcal{A}} \cdot \mathcal{A}$ that leads to KS-uncolorability reduces to one of the six non-cubic islands (up to scaling equivalence).  One additional zero-sum pattern exists ($1 - x - \bar{x} = 0$, requiring $\mathrm{Tr}(x) = 1$) but produces only colorable ray sets.
\end{observation}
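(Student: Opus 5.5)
The plan is an exhaustive case analysis of three-term zero-sums in the product set, followed by matching each solution to an island. First fix the product set $P = \bar{\mathcal{A}}\cdot\mathcal{A} = \{0, \pm 1, \pm x, \pm\bar x, \pm|x|^2\}$. Any orthogonality between rays in $\mathcal{A}^3$ is a sum $t_1+t_2+t_3=0$ with $t_k\in P$. Zero terms are discarded, because a two-term identity $t_1+t_2=0$ only yields the axis-plane matchings already handled in Case~2 of the proof of Theorem~\ref{thm:6n}, and these cannot interlock. That leaves genuine three-term sums with all $t_k\neq 0$. Up to an overall sign, each such sum is a choice of a multiset of three symbols from $\{1, x, \bar x, |x|^2\}$ with signs. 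I would enumerate these multisets, at most $\binom{4+2}{3}=20$ of them times sign patterns, and discard those not involving $x$, since $\pm1\pm1\pm1\neq0$. Each remaining pattern becomes a polynomial constraint on $x$ and $\bar x$.

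Next, solve each constraint. Writing $x=a+bi$, the patterns split into a few families. Patterns involving only $1$ and $|x|^2$ force $|x|^2\in\{2,\tfrac12\}$. Patterns involving $x+\bar x$ and $1$ force $\mathrm{Tr}(x)\in\{\pm1\}$, or $\pm 1/2$, and so on. Patterns mixing $x$ with $|x|^2$ force $x$ real with $x^2 = \pm1\pm x$ or $x^2=2x$. Patterns like $x-\bar x\pm 1$ are impossible because the left side is purely imaginary. I would then reduce by the symmetries $x\mapsto -x$, $x\mapsto\bar x$, and the rescaling $x\mapsto 1/x$; the last is legitimate because scaling a ray by $x^{-1}$ coordinatewise rescales the alphabet. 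This collapses the list to the rows of Table~\ref{tab:two-element}. For example, $x^2=x-1$ gives a sixth root of unity and folds into Eisenstein, while $x^2=1-x$ gives $1/\varphi$ and folds into Golden by rescaling.

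Then each surviving constraint is matched to its island. A constraint of the form $|x|^2=2$ is modulus-2 cancellation. Here the triad structure depends only on the products available, so the VF2 evidence of Proposition~\ref{prop:isomorphism} and the Heegner-7 computation place it in the Peres, $\Z[\sqrt{-2}]$ or Heegner-7 class according to whether $\mathrm{Tr}(x)$ also lies in $\{0,\pm1\}$. The remaining constraints give $x=2$ (Integer), $x=\omega$ (Eisenstein) and $x=\varphi$ (Golden). The exceptional pattern $\mathrm{Tr}(x)=1$ with $|x|^2\geq3$ is disposed of by the decoupling argument cited in the footnote: type-B triads are absent, so the one-zero and all-nonzero triads separate and a coloring exists, as in Case~3 of Theorem~\ref{thm:6n}.

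I expect the main obstacle to be the modulus-2 family. Infinitely many complex $x$ satisfy $|x|^2=2$, namely $x=\sqrt2 e^{i\theta}$. The claim that each such $x$ either gives nothing new or coincides with a known island is not a finite check: at generic $\theta$ no further coincidences occur, and one expects the Peres graph to survive along the whole Gould--Aravind phase family. The step I would attempt first is to show that for generic $\theta$ the orthogonality hypergraph is exactly the Peres one. Then the only special angles are those where additional identities such as $\mathrm{Tr}(x)\in\{0,\pm1\}$ hold, and these give the finitely many listed rings. Beyond that, the claim ``leads to KS-uncolorability'' for non-listed solutions still has to rest on the computational survey, so the statement is honestly an observation rather than a theorem.
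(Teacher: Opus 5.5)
Your enumerate--solve--reduce--match skeleton is the paper's own justification. That covers the signed multisets from $\{1,x,\bar x,|x|^2\}$, the symmetries $x\mapsto -x,\bar x,1/x$ and rescaling, and the SMT decoupling argument for $\mathrm{Tr}(x)=1$. The genuine problem is the step you single out as the main obstacle: the lemma you propose there is false. For $x=\sqrt2e^{i\theta}$ with $\bar x\notin\{\pm x\}$, the pool of $\{0,\pm1,\pm x\}$ does \emph{not} carry the Peres hypergraph. Six of the sixteen Peres triads, $\{e_1,(0,1,\sqrt2),(0,\sqrt2,-1)\}$ and its images, depend on the two-term coincidence $\bar x=x$; their $\Z[\sqrt{-2}]$ counterparts $\{e_1,(0,1,x),(0,x,1)\}$ depend on $\bar x=-x$. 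At generic $\theta$, the equation $\langle(0,1,x)|v\rangle=v_2+\bar x v_3=0$ has no solution $v\in\mathcal{A}^3$ except $v\propto e_1$, so those triads disappear. The only surviving three-term identity is $1+1-|x|^2=0$, and the pool keeps just ten triads:
\begin{itemize}
\item the axis triad;
\item the three triads $\{e_k,(1,1,0),(1,-1,0)\}$ (up to permutation);
\item six triads $\{(1,s,x),(1,s,-x),(1,-s,0)\}$ (up to permutation).
\end{itemize}
These form a hyperforest, and the pool is colorable; it has the same triad count as the colorable $\sqrt3$ and $\sqrt5$ pools. You are right that no extra coincidences occur at generic $\theta$, but that is exactly why the Peres graph does not survive. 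The Gould--Aravind family is not a family of two-symbol pools. Keeping $(0,1,c)\perp(0,c',-1)$ forces $c'=\bar c$, so that family carries $c$ and $\bar c$ (and the phases $a,b$) as independent coordinates. The correct lemma is the reverse of yours: generic $\theta$ is colorable, and uncolorability can occur only where an extra two- or three-term zero-sum appears. Separating real and imaginary parts, that happens at four kinds of point:
\begin{itemize}
\item $x\in\R$ (Peres);
\item $\mathrm{Re}\,x=0$ ($\Z[\sqrt{-2}]$);
\item $\mathrm{Re}\,x=\pm\tfrac12$ (Heegner-7);
\item $\mathrm{Re}\,x=\pm1$, i.e.\ $x=\pm1\pm i$.
\end{itemize}
The last case comes from the pattern $x+\bar x-|x|^2=0$, which is absent from Table~\ref{tab:two-element}. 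So your rule ``Peres unless $\mathrm{Tr}(x)\in\{0,\pm1\}$'' must be replaced: Peres is the real point only, generic $\theta$ is colorable, and $x=\pm1\pm i$ must be examined separately. Once corrected, the modulus-2 family becomes a finite check rather than an infinite one.

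There are also two smaller slips. First, the example ``$x^2=x-1$ gives a sixth root of unity'' confuses $x^2$ with the Hermitian term $|x|^2=\bar x x$. The pattern $1-x+|x|^2=0$ forces $x$ real, as you note yourself, and then has no solution; indeed $e^{i\pi/3}$ has $|x|^2=1\neq x-1$. Eisenstein enters only through $1+x+\bar x=0$, and that row is the whole line $\mathrm{Re}\,x=-\tfrac12$. The line contains $\omega$ ($|x|^2=1$), the Heegner-7 generators $-\alpha,-\bar\alpha$ ($|x|^2=2$), and the negated $d\geq 11$ generators ($|x|^2\geq 3$). So $|x|^2$ must be tracked as a second coordinate, and the cited SMT certificate covers only integer $N\geq3$ on that line, not all of $|x|^2\geq 3$. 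Second, your reason for discarding two-term identities (that they ``cannot interlock'') is false. The triad $\{e_3,(1,1,0),(1,-1,0)\}$ comes from $1-1=0$, and it is one of the seventeen bases of CK-31, all of which are essential (Table~\ref{tab:critical-bases}). Dropping two-term identities is legitimate only because the Observation is stated for three-term identities.
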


\emph{Justification.} The product set $\bar{\mathcal{A}} \cdot \mathcal{A}$ for $\mathcal{A} = \{0, \pm 1, \pm x\}$ contains (up to sign) the values $\{0, 1, x, \bar{x}, |x|^2\}$.  A primitive three-term vanishing sum $t_1 + t_2 + t_3 = 0$ with $t_k \in \bar{\mathcal{A}} \cdot \mathcal{A}$ must use three of these values (with signs).  We enumerate all essentially distinct cases by the number of occurrences of each symbol.  The table (Table~\ref{tab:two-element}) lists all solutions: each row is a distinct cancellation pattern, and for each pattern we solve the resulting equation in $x$ and identify the corresponding island (or note colorability).  No additional patterns exist because there are only $\binom{5+2}{3}$ ways to choose three elements (with repetition) from $\{1, x, \bar{x}, |x|^2\}$, most of which yield no solution or a solution equivalent (by $x \mapsto 1/x$, $x \mapsto \bar{x}$, or rescaling) to a listed row.  The $\mathrm{Tr}(x) = 1$ row arises from the sign pattern $(+1, -1, -1)$ on the triple $\{1, x, \bar{x}\}$, which was previously conflated with the $\mathrm{Tr}(x) = -1$ (Eisenstein) pattern under a sign-and-permutation equivalence that is not valid here: the two sums impose different algebraic constraints.  This classification applies to the raw alphabet \emph{before completion}; it does not rule out the possibility that cross-product completion could introduce qualitatively new cancellation mechanisms, and the supergolden island (Section~\ref{sec:limitations}, item~\ref{item:pisot}) is such a case: over the two-letter alphabet $\{0, \pm 1, \pm\psi\}$ the raw product set admits none of the primitive three-term identities of Table~\ref{tab:two-element}, and completion supplies one of a third kind.

The completeness does \emph{not} extend to larger alphabets.  A three-element alphabet $\{0, \pm 1, \pm x, \pm y\}$ introduces cross-products $\pm xy$, $\pm \bar{x}y$, etc., enabling cancellation identities such as $1 + xy - |x|^2 = 0$ or $x + y + 1 = 0$ that involve two independent generators.  While our survey has tested many such alphabets (Table~\ref{tab:extended} and Section~\ref{sec:mixed}), the space of multi-element alphabets---particularly those arising from higher-degree number fields with multiple independent generators---is not exhaustively covered.  This is the principal remaining gap in the classification.

\begin{remark}[Trigonometric consistency check]\label{rem:trig}
As an independent test of the island classification, we swept the one-parameter family $\mathcal{A}(\theta) = \{0, \pm 1, \pm\cos\theta, \pm\sin\theta\}$ across 164 angles spanning $(0, \pi/2]$. The Pythagorean identity $\cos^2\theta + \sin^2\theta = 1$ serves as the cancellation mechanism. Exactly 5 distinct alphabets (at isolated angles: $\arctan(1/2)$, $\pi/6$, $\arctan(1/\varphi)$, $\arctan(1/\sqrt{2})$, $\pi/4$) are KS-uncolorable; all reduce to known islands via the modulus-2 mechanism or golden ratio completion. No new islands emerge. Moving $1^\circ$ from any uncolorable angle destroys the property, consistent with the algebraic nature of the norm constraint.
\end{remark}

\subsection{Galois-theoretic reformulation}\label{sec:galois}

The six non-cubic islands all live in quadratic extensions of $\Q$ (or $\Q$ itself), each with Galois group $\mathrm{Gal}(K/\Q) \cong \Z/2\Z$.  The non-trivial automorphism $\sigma$ sends $\sqrt{d} \mapsto -\sqrt{d}$.  The two cancellation mechanisms admit a clean reformulation in terms of the Galois norm $\mathrm{N}_{K/\Q}(x) = x \cdot \sigma(x)$ and Galois trace $\mathrm{Tr}_{K/\Q}(x) = x + \sigma(x)$ of the ring-of-integers generator~$x$.

\begin{observation}\label{obs:galois-norm-trace}
For the six non-cubic KS-supporting alphabets, the cancellation mechanism is determined by two Galois invariants of the generator:
\begin{itemize}
    \item \emph{Modulus-2 cancellation} $\Leftrightarrow$ $|\mathrm{N}_{K/\Q}(x)| = 2$.  Verified: $\mathrm{N}(\sqrt{2}) = -2$, $\mathrm{N}(\sqrt{-2}) = 2$, $\mathrm{N}((1+\sqrt{-7})/2) = 2$.  For imaginary quadratic fields, the algebraic norm equals the Hermitian squared modulus: $\mathrm{N}(x) = x \bar{x} = |x|^2$.
    \item \emph{Phase cancellation} $\Leftrightarrow$ $\mathrm{N}_{K/\Q}(x) = 1$ and $\mathrm{Tr}_{K/\Q}(x) = -1$.  Verified: $\omega$ satisfies $\omega^2 + \omega + 1 = 0$, so $\mathrm{Tr}(\omega) = \omega + \omega^2 = -1$ and $\mathrm{N}(\omega) = \omega \cdot \omega^2 = 1$.  These conditions characterize roots of the cyclotomic polynomial $\Phi_3(x) = x^2 + x + 1$.
\end{itemize}
No other combination of norm and trace values for a quadratic-integer generator has produced a KS-uncolorable set in our survey.
\end{observation}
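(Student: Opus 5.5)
The statement has two logically different parts, and I would treat them separately. The first part is a set of algebraic equivalences for the non-cubic KS-supporting alphabets; these can be checked directly. The second part is the closing sentence, which is an empirical claim about the survey and can only be supported by auditing the computations.

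\emph{Modulus-2 part.} I would go through the generators one at a time and compute the norm from the minimal polynomial.
\begin{itemize}
\item $x=\sqrt{2}$ has minimal polynomial $x^2-2$, so $\mathrm{N}(\sqrt2)=\sqrt2\cdot(-\sqrt2)=-2$.
\item $x=\sqrt{-2}$ gives $\mathrm{N}=2$.
\item $\alpha=(1+\sqrt{-7})/2$ satisfies $\alpha^2-\alpha+2=0$, so its norm is the constant term, $2$.
\end{itemize}
For the two imaginary generators, $\sigma$ restricts to complex conjugation, so $\mathrm{N}(x)=x\bar x=|x|^2$ and the Galois norm coincides with the Hermitian squared modulus of Definition~\ref{def:cancellation}.

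The real case $\sqrt2$ needs one extra remark. There $|x|^2=x^2=2$, while $\mathrm{N}=-2$, and the two agree up to sign exactly because $\sigma(x)=-x$. This is why the statement is phrased with $|\mathrm{N}|$.

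I would also record that the integer island ($x=2$, $K=\Q$) and the Gaussian alphabet (generator $1+i$, $\mathrm{N}=2$) fit the same criterion. For $x=2$ this holds only degenerately, with norm interpreted in $\Q(\sqrt2)$ or with the additive $1+1=2$ reading, so I would keep that case as a footnote rather than claim it.

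\emph{Phase part.} For $\omega$, Vieta's formulas applied to $\Phi_3(x)=x^2+x+1$ give $\mathrm{Tr}(\omega)=-1$ and $\mathrm{N}(\omega)=1$. Conversely, a quadratic integer with $\mathrm{N}=1$ and $\mathrm{Tr}=-1$ has minimal polynomial $x^2+x+1=\Phi_3$. So these two conditions characterize $\omega$ and $\bar\omega$ exactly, which gives the "$\Leftrightarrow$". I would then tie this back to Definition~\ref{def:cancellation}: the three-term vanishing sum $1+x+\bar x=0$ is precisely the condition $\mathrm{Tr}(x)=-1$. Together with $|x|^2=\mathrm{N}(x)=1$, this makes $x$ a unit-modulus element, hence a root of unity, matching the definition of phase cancellation.

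\emph{Empirical sentence.} I would not try to prove "no other combination has produced a KS set"; I would argue it from the data. Tabulate $(\mathrm{N},\mathrm{Tr})$ for every tested quadratic generator:
\begin{itemize}
\item $\sqrt d$ for $d\le 30$ gives $(-d,0)$;
\item $\sqrt{-d}$ gives $(d,0)$;
\item the Heegner generators give $((1+d)/4,1)$;
\item $\varphi$ gives $(-1,1)$.
\end{itemize}
Then cross-reference this table against Tables~\ref{tab:alphabets}, \ref{tab:complex-quadratic} and~\ref{tab:heegner} and Proposition~\ref{prop:quadratic}. Every uncolorable entry should land in one of the two listed classes. The one exception is $\varphi$, which is uncolorable only after completion and whose raw alphabet is colorable, so it does not contradict a statement about raw generators.

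The main obstacle is the golden ratio. It counts as one of "the six non-cubic KS-supporting alphabets", yet $(\mathrm{N},\mathrm{Tr})=(-1,1)$ fits neither class. I would handle this by restricting the equivalences explicitly to the five islands whose raw alphabet supports the mechanism. For $\varphi$ I would appeal to the indirect modulus-2 discussion: in the completed pool, the value $2$ appears as a coordinate and as a coordinate product. The theorem then reads as a statement about the generating cancellation of raw alphabets, consistent with Observation~\ref{obs:two-element}.
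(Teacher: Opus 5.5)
Your algebraic checks are correct and match what the paper offers, which is only the inline norm and trace computations. Your converse for $\Phi_3$ and your explicit setting-aside of $\varphi$ go further than the paper, which silently omits both $\varphi$ and $x=2$. The problem is the empirical half: the audit you describe would not come out as you predict. Your table assigns the Heegner generators $((1+d)/4,1)$, which at $d=3$ gives $(1,1)$. Indeed $(1+\sqrt{-3})/2=-\bar\omega=e^{i\pi/3}$ is a root of $x^2-x+1$, not of $\Phi_3$. The $d=3$ row of Table~\ref{tab:heegner} is uncolorable, so $\varphi$ is not the only uncolorable entry outside both classes.

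The missing observation is that the alphabet $\{0,\pm1,\pm x,\pm\bar x\}$ is unchanged under $x\mapsto -x$. This map fixes $\mathrm{N}$ (degree~$2$) but negates $\mathrm{Tr}$. So only $(\mathrm{N},|\mathrm{Tr}|)$ is an invariant of the alphabet, and the phase class must be read as $\mathrm{N}=1$, $|\mathrm{Tr}|=1$. Equivalently, the generator must be normalized to $\omega$, which the statement does tacitly instead of using the Heegner-convention generator $(1+\sqrt{-3})/2$. Relatedly, the $\mathrm{Tr}(x)=1$ row of Table~\ref{tab:two-element} can be marked colorable only for $\mathrm{N}\geq 3$; at $\mathrm{N}=1$ it is the Eisenstein alphabet again.

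Second, your audit covers only the basic-alphabet tables, but the survey contains further uncolorable sets from quadratic generators lying in neither class. Item~\ref{item:class-number-survey} of Section~\ref{sec:limitations} reports enriched alphabets over $\sqrt{-5}$, $\sqrt{-6}$ and $(1+\sqrt{-15})/2$ as KS-uncolorable. Their invariants are $(\mathrm{N},\mathrm{Tr})=(5,0)$, $(6,0)$ and $(4,1)$. The closing sentence is therefore true only for the basic alphabets $\{0,\pm1,\pm x,\pm\bar x\}$, and you should state that restriction explicitly. Your \emph{raw alphabet} caveat was aimed only at $\varphi$ and does not obviously exclude these enriched alphabets.

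A minor slip: $\mathrm{N}_{\Q(\sqrt2)/\Q}(2)=4$, so that reading does not accommodate $x=2$. Over $K=\Q$ one has $\mathrm{N}(2)=2$ directly. The case is degenerate only because $|2|^2=4$, i.e.\ the Galois norm and the Hermitian modulus part ways there.
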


This reformulation connects the island classification to Gleason's theorem through a second notion of trace.  Gleason's result constrains probability measures on projection operators via the \emph{matrix trace} $\mu(P) = \mathrm{tr}(\rho P)$; the cancellation mechanisms are determined by the \emph{Galois trace} $\mathrm{Tr}_{K/\Q}(x)$.  The logical chain is: Gleason implies KS (trivially, since $v(n) = \langle n | \rho | n \rangle$ is continuous and cannot be $\{0,1\}$-valued on a connected sphere~\cite{RajanVisser2019}); our island classification then asks \emph{which arithmetics} support the finite combinatorial witnesses of this impossibility.  Both traces encode the arithmetic of the primes 2 and~3: the matrix trace requires denominators divisible by 6 for the projection matrices $P_v = vv^T/\|v\|^2$ to live in the coefficient ring (as shown by Cortez et al.~\cite{CortezMoralesReyes2022}), while the Galois trace and norm determine whether the coordinate ring contains elements whose products can sum to zero in exactly the ways that orthogonality requires.

\smallskip\noindent\textbf{Galois action on the KS hypergraph.}  The automorphism $\sigma$ acts coordinate-wise on $K^3$, and since $\sigma$ is a ring homomorphism, it preserves orthogonality: $\langle v | w \rangle = 0 \Rightarrow \langle \sigma(v) | \sigma(w) \rangle = 0$.  Therefore $\sigma$ is an automorphism of both the orthogonality graph and the basis hypergraph.  This gives an additional discrete symmetry beyond the continuous unitary symmetries, and it may partially explain the universality phenomenon (Section~\ref{sec:csw}): realizations of CK-31 in different islands could be related by Galois conjugation composed with a unitary transformation.

\smallskip\noindent\textbf{Connection to Heegner numbers.}  The imaginary quadratic fields $\Q(\sqrt{-d})$ with class number~1 (unique factorization in $\mathcal{O}_K$) correspond to the nine Heegner numbers $d = 1, 2, 3, 7, 11, 19, 43, 67, 163$.  The three KS-supporting imaginary quadratic islands are $d = 2, 3, 7$---all Heegner numbers.  The remaining Heegner numbers fail for transparent arithmetic reasons: $d = 1$ gives $|\mathrm{N}(i)| = 1$ (norm too small; same orthogonality as the trivial alphabet), and $d \geq 11$ gives $|\mathrm{N}(\sqrt{-d})| = d \geq 11$ (norm too large for low-complexity cancellation).  Among imaginary quadratic fields, the KS-supporting ones are precisely those Heegner-number fields whose ring-of-integers generator has algebraic norm~2 ($d = 2, 7$) or is a primitive cube root of unity ($d = 3$).

\smallskip\noindent\textbf{Cyclotomic Galois structure.}  The 6\,$|\,$\,$n$ theorem (Theorem~\ref{thm:6n}) has a natural Galois-theoretic restatement.  The Galois group $\mathrm{Gal}(\Q(\zeta_n)/\Q) \cong (\Z/n\Z)^*$ must have quotients corresponding to both $\Z/2\Z$ (ensuring $-1 \in S_n$, enabling interlocking) and $\Z/3\Z$ (ensuring $\omega \in S_n$, enabling phase cancellation).  This is equivalent to requiring that $(\Z/n\Z)^*$ surjects onto $\Z/6\Z$, which holds if and only if $6 \mid n$.  The conductor of the abelian extension $\Q(\zeta_6)/\Q$ is~6---the same number appearing in the Cortez--Morales--Reyes ring condition $6 \mid N$.

\smallskip\noindent\textbf{Open directions.}  Several questions arise from this reformulation: (i)~Can the norm/trace characterization of Observation~\ref{obs:galois-norm-trace} be proven exhaustive for quadratic fields, establishing that $|\mathrm{N}| = 2$ and $(\mathrm{N} = 1, \mathrm{Tr} = -1)$ are the \emph{only} norm/trace pairs yielding low-complexity cancellation? (ii)~Does the compositum $\Q(\sqrt{2}, \sqrt{-3})$, with Galois group $(\Z/2\Z)^2$, produce denser KS sets or smaller minimums than either component island? (iii)~For the cubic island $\Q(\sqrt[3]{2})$, does the Galois closure (with group $S_3$) impose additional structural constraints? (iv)~Can class field theory, specifically the conductor of abelian extensions, provide a unified invariant classifying KS-supporting fields?

\subsection{Connection to perfect quantum strategies}\label{sec:cabello}

Our algebraic island framework gains new significance from three recent results of Cabello and collaborators that establish KS sets as the fundamental engine of \emph{perfect quantum strategies}.

Cabello~\cite{Cabello2024bipartite} proved that every bipartite perfect quantum strategy---a strategy achieving the maximum algebraic violation of a Bell inequality in the finite-dimensional projective measurement framework---defines a KS set. More precisely, the projective measurements achieving a perfect strategy on a Bell scenario must contain a KS-uncolorable subset. This means KS sets are not merely foundational curiosities but are \emph{operationally necessary} for the strongest form of quantum advantage in nonlocal games.

The connection deepens: Liu et al.~\cite{Cabello2024equivalences} established a chain of equivalences showing that face nonsignaling correlations, full Bell nonlocality, all-versus-nothing proofs, and pseudotelepathy are all manifestations of the same underlying structure---which, by the first result, is always a KS set. The algebraic islands we classify are therefore not just the algebraic substrates of KS sets; they are the algebraic substrates of \emph{all} perfect quantum strategies in dimension 3.

Most directly relevant to our framework, Cabello~\cite{Cabello2025simplest} constructed what he calls the ``simplest KS set'': a 33-vector, 14-basis configuration in $\C^3$ built from the Eisenstein integers $\Z[\omega]$ via the Weyl--Heisenberg group. This construction has automorphism group of order 144 and is built from the eigenbases of the Weyl--Heisenberg operators $X$ and $Z$ acting on the minimal SI-C set. Crucially, the coordinate entries are drawn from $\{0, 1, \omega, \bar\omega\}$---precisely the Eisenstein alphabet $\{0, \pm 1, \pm\omega, \pm\bar\omega\}$ (after accounting for ray equivalence under scalar multiplication). Cabello's simplest KS set is the smallest member of our Eisenstein island.

This identification provides a concrete bridge between the algebraic island classification and the operational significance of KS sets.

\subsubsection{From KS sets to Bell scenarios: the role of context count}\label{sec:ks-bell}

The operational link between KS sets and nonlocal games is made precise by Trandafir and Cabello~\cite{TrandafirCabello2025optimal}, who give an algorithm to convert any KS set into a bipartite perfect quantum strategy (BPQS) with the minimum number of measurement settings. In such a conversion, the \emph{number of bases} (contexts) in the KS set directly determines the number of inputs for each party in the resulting Bell scenario: fewer bases yield a simpler nonlocal game with fewer measurement choices.

For the Eisenstein island, Cabello~\cite{Cabello2025simplest} shows that the 33-vector, 14-basis KS set yields a $5 \times 9$ qutrit--qutrit BPQS (Alice has 5 inputs, Bob has 9), improving on the $7 \times 9$ strategy derived from the Peres 33-vector set (16 bases).

We independently computed B-KS input counts for all six non-cubic islands using a SAT-based encoding. Following Definition~2 of Cabello~\cite{Cabello2024bipartite}, a pair $(S_A, S_B)$ of basis subsets is \emph{bipartite KS-uncolorable} (B-KS) if there is no assignment $f: (\mathcal{V}_A \cup \mathcal{V}_B) \times (S_A \cup S_B) \to \{0,1\}$ satisfying:
\begin{itemize}
    \item[(I\,$'$)] \emph{Cross-party exclusion}: $f(u,b) + f(u',b') \leq 1$ for each $b \in S_A$, $b' \in S_B$, $u \in b$, $u' \in b'$ with $u \perp u'$;
    \item[(II\,$'$)] \emph{Completeness}: $\sum_{u \in b} f(u,b) = 1$ for each $b \in S_A \cup S_B$,
\end{itemize}
where $\mathcal{V}_A$ ($\mathcal{V}_B$) is the set of all vectors appearing in $S_A$ ($S_B$). Note that $f$ is contextual: the same vector may receive different values in different bases. Crucially, orthogonality constraints are imposed \emph{only between} $S_A$ and $S_B$ bases, not within a single party's basis set. The basis sets $S_A$ and $S_B$ are allowed to overlap. The optimal BPQS minimizes the product $|S_A| \times |S_B|$ (the total number of input pairs in the Bell scenario). We encode conditions (I$'$) and (II$'$) as a SAT instance and search for the $(S_A, S_B)$ pair minimizing this product, following the approach of Trandafir and Cabello~\cite{TrandafirCabello2025optimal}. For islands with $|\mathcal{B}| \leq 16$ (Eisenstein, Peres, $\Z[\sqrt{-2}]$), we perform exhaustive search over all basis subset pairs; for larger islands (CK-31, Heegner-7, Golden), we use randomized greedy search (500 trials), yielding upper bounds on the optimal product.

\begin{table}[ht]
\centering
\caption{B-KS input counts for all six non-cubic islands. $|S_A| \times |S_B|$ is the minimum product over bipartite KS-uncolorable basis subset pairs. Eisenstein, Peres, and CK-31 independently verify Cabello~\cite{Cabello2025simplest} Table~I; the last three are new. ``Exact'': exhaustive search; ``best found'': greedy search (500 trials), yielding upper bounds.}
\label{tab:bpqs}
\begin{tabular}{lcccccl}
\toprule
Island & rays & $|\mathcal{B}|$ & $|S_A|$ & $|S_B|$ & Product & Status \\
\midrule
Eisenstein & 33 & 14 & 5 & 9 & \textbf{45} & Exact; verified \\
Peres & 33 & 16 & 7 & 9 & 63 & Exact; verified \\
$\Z[\sqrt{-2}]$ & 33 & 16 & 7 & 9 & 63 & Exact; new ($=$ Peres) \\
Integer (CK-31) & 31 & 17 & 8 & 9 & $\leq$72 & Best found; matches~\cite{Cabello2025simplest} \\
Heegner-7 & 43 & 23 & 9 & 12 & $\leq$108 & Best found; new \\
Golden & 52 & 25 & 12 & 13 & $\leq$156 & Best found; new \\
\bottomrule
\end{tabular}
\end{table}

\noindent Two empirical patterns emerge from Table~\ref{tab:bpqs}. First, $|S_A| + |S_B| \geq |\mathcal{B}|$ in every case examined: the best-found Alice and Bob basis sets nearly \emph{partition} the full basis set (equality holds for Eisenstein, Peres, CK-31, and Golden; Heegner-7 achieves $21/23$). This pattern suggests---but does not prove---that B-KS uncolorability requires the two parties to collectively engage essentially all measurement contexts. Second, the BPQS cost $|S_A| \times |S_B|$ scales roughly as $|\mathcal{B}|^2/4$, with the ratio product$/|\mathcal{B}|^2$ lying between 0.20 (Heegner-7) and 0.25 (Golden) for all six non-cubic islands. That the Eisenstein island (product$/|\mathcal{B}|^2 = 0.23$, exact) is more efficient than the $\approx 0.25$ achieved by the real-field islands (Peres, CK-31, Golden) may reflect the richer symmetry of $\Z[\omega]$.

The island classification thus gives direct operational meaning to the algebraic substrate. CK-31 achieves the minimum ray count (31) but at the cost of having the most bases among the 33-class islands (17 vs.\ 14). The Eisenstein island sacrifices two rays to gain a sparser basis structure, yielding the simplest Bell scenario. This trade-off---\emph{ray economy vs.\ context economy}---is invisible without the island framework.

The trade-off extends further when CSW invariants (Section~\ref{sec:csw}) and critical bases analysis (Section~\ref{sec:critical-bases}) are included. No single island dominates on all axes: CK-31 is the tightest proof ($\eta = 1.000$) with a nontrivial CSW sandwich, but its 17 bases make it operationally costly. The Eisenstein island has the simplest BPQS and near-maximal tightness ($\eta = 0.929$), but only a marginal CSW advantage. The Heegner-7 island has the richest CSW structure but the largest BPQS cost. This suggests that the choice of algebraic substrate has genuine consequences for quantum information tasks, and that no single notion of ``simplest'' or ``best'' KS set is appropriate---it depends on the operational criterion.

The fact that Cabello's ``simplest'' construction lands precisely on one of our independently discovered islands---and not between them---is strong evidence that the island classification captures genuine algebraic structure rather than an artifact of our search methodology.

\subsection{Graph-theoretic invariants and Bell inequalities}\label{sec:csw}

If the algebraic substrate controls KS-uncolorability, it should also control the \emph{strength} of the contextual advantage. The Cabello--Severini--Winter (CSW) framework~\cite{CabelloSeveriniWinter2014} provides a quantitative test, mapping the orthogonality graph $G$ of a set of rays to a noncontextuality inequality via three invariants:
\[
\alpha(G) \leq \vartheta(G) \leq \alpha^*(G),
\]
where $\alpha(G)$ is the independence number (classical bound), $\vartheta(G)$ is the Lov\'asz theta number (quantum bound, computed by semidefinite programming), and $\alpha^*(G)$ is the fractional packing number (nonsignaling bound, computed by linear programming). A quantum advantage exists when $\vartheta(G) > \alpha(G)$.

We compute these invariants for two representations of each island: the SAT-minimized KS set (minimum cardinality) and the full ray pool (all rays generated by the alphabet before minimization). This comparison reveals a structural mechanism by which cardinality minimization destroys CSW contextual~advantage.

\begin{table}[ht]
\centering
\caption{CSW graph invariants for minimized KS sets. $\alpha$: independence number (classical), $\vartheta$: Lov\'asz theta (quantum, SDP), $\alpha^*$: fractional packing (nonsignaling, LP). Ratio $\vartheta/\alpha$ measures contextual advantage.}
\label{tab:csw-min}
\begin{tabular}{lccccc}
\toprule
Island (minimized) & $n$ & $\alpha(G)$ & $\vartheta(G)$ & $\alpha^*(G)$ & $\vartheta/\alpha$ \\
\midrule
Integer (CK-31) & 31 & 11 & 11.71 & 12.00 & \textbf{1.065} \\
Eisenstein & 33 & 12 & 12.05 & 13.00 & 1.004 \\
Peres & 33 & 12 & 12.00 & 12.00 & 1.000 \\
$\Z[\sqrt{-2}]$ & 33 & 12 & 12.00 & 12.00 & 1.000 \\
Heegner-7 & 43 & 16 & 16.00 & 16.00 & 1.000 \\
\bottomrule
\end{tabular}
\end{table}

\begin{table}[ht]
\centering
\caption{CSW graph invariants for full (non-minimized) ray pools. $\vartheta$ computed via SCS (tolerance $10^{-9}$); $\alpha^*$ via HiGHS (exact). The ``auxiliary'' column counts rays that participate in pairwise orthogonalities but not in any complete basis (Definition~\ref{def:auxiliary}).}
\label{tab:csw-pool}
\begin{tabular}{lccccccc}
\toprule
Island (full pool) & $n$ & bases & auxiliary & $\alpha(G)$ & $\vartheta(G)$ & $\alpha^*(G)$ & $\vartheta/\alpha$ \\
\midrule
Integer & 49 & 26 & 0 (0\%) & 17 & 17.70 & 19.00 & 1.041 \\
Eisenstein & 57 & 22 & 0 (0\%) & 18 & 19.34 & 21.00 & 1.074 \\
Heegner-7 & 145 & 42 & 76 (52\%) & 50 & 55.89 & 66.00 & \textbf{1.118} \\
Peres & 49 & 16 & 16 (33\%) & 23 & 23.00 & 23.00 & 1.000 \\
$\Z[\sqrt{-2}]$ & 49 & 16 & 16 (33\%) & 23 & 23.00 & 23.00 & 1.000 \\
\bottomrule
\end{tabular}
\end{table}

\subsubsection{Auxiliary rays and the CSW advantage}

The comparison between Tables~\ref{tab:csw-min} and~\ref{tab:csw-pool} reveals a striking pattern: the Heegner-7 island's full pool has the strongest contextual advantage of any island ($\vartheta/\alpha = 1.118$), yet its minimized 43-vector set is $\vartheta$-perfect ($\vartheta = \alpha$). To explain this, we introduce a structural distinction among rays in an algebraic pool.

\begin{definition}\label{def:auxiliary}
Let $\mathcal{P}$ be a ray pool with orthogonality graph $G = (V, E)$ and set of bases (maximal orthogonal triples) $\mathcal{B}$. A ray $v \in V$ is \textbf{basis-participating} if $v$ belongs to at least one basis $B \in \mathcal{B}$, and \textbf{auxiliary} otherwise. That is, an auxiliary ray has at least one orthogonal neighbor in $G$ but does not belong to any complete orthonormal triple.
\end{definition}

Auxiliary rays contribute edges to the orthogonality graph without contributing to the hypergraph structure that determines KS-uncolorability. Since KS-uncolorability is a property of the basis hypergraph (no consistent $\{0,1\}$ coloring assigning exactly one ``1'' per basis), auxiliary rays are invisible to it. But they are fully visible to the CSW framework, which depends only on the orthogonality \emph{graph}.

\begin{observation}\label{obs:auxiliary}
\textbf{(Auxiliary rays amplify the CSW advantage.)} Among the five algebraic islands, the CSW contextual advantage $\vartheta(G)/\alpha(G)$ of the full ray pool correlates with the proportion of auxiliary rays:
\begin{center}
\begin{tabular}{lccl}
\toprule
Island & Auxiliary fraction & $\vartheta/\alpha$ (pool) & $\vartheta/\alpha$ (minimized) \\
\midrule
Heegner-7 & 52\% & \textbf{1.118} & 1.000 \\
Peres & 33\% & 1.000 & 1.000 \\
$\Z[\sqrt{-2}]$ & 33\% & 1.000 & 1.000 \\
Integer & 0\% & 1.041 & 1.065 \\
Eisenstein & 0\% & 1.074 & 1.004 \\
\bottomrule
\end{tabular}
\end{center}
Auxiliary rays constrain the independence number $\alpha(G)$ (by adding edges that prevent large independent sets) without contributing to the basis structure. The SDP relaxation $\vartheta(G)$ can exploit these additional constraints by distributing weight fractionally across the auxiliary vertices, widening the gap $\vartheta - \alpha$.
\end{observation}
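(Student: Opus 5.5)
The observation has two parts, and I would treat them in turn. The first is a tabulated cross-island claim: for each of the five pools, the auxiliary fraction and the full-pool ratio $\vartheta(G)/\alpha(G)$ take the listed values, and the two columns correlate. The second is a mechanistic sentence saying that auxiliary rays constrain $\alpha$ while $\vartheta$ exploits them.

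\textbf{Step 1 (auxiliary fractions).} For each pool (Integer 49, Eisenstein 57, Heegner-7 145, Peres 49, $\Z[\sqrt{-2}]$ 49), I would enumerate the exact orthogonality graph using the symbolic arithmetic of Section~\ref{sec:canon}. I would list all triangles, which are the bases $\mathcal{B}$, and count the rays that have positive degree but lie in no triangle, following Definition~\ref{def:auxiliary}. This is exact and should give $76/145$, $16/49$, $16/49$, $0$ and $0$. The Peres and $\Z[\sqrt{-2}]$ entries must agree, since Proposition~\ref{prop:isomorphism} gives an isomorphism of their graphs.

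\textbf{Step 2 (the ratios).} I would compute $\alpha(G)$ exactly with a MaxSAT or ILP solver and obtain $\vartheta(G)$ from the Lov\'asz SDP. To make the $\vartheta$ values rigorous rather than merely numerical, I would certify each one with a primal feasible matrix and a dual feasible matrix whose objective values agree to within $10^{-6}$.

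For Peres and $\Z[\sqrt{-2}]$ I would also prove that $\vartheta = \alpha = 23$ exactly. I would exhibit a clique cover of $G$ by 23 cliques (edges and triangles). Then $\alpha^* \le 23$, and together with an independent set of size 23 the whole sandwich collapses to 23. This also explains why both the full-pool and the minimized entries equal $1.000$.

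\textbf{Step 3 (the correlation).} With the five pairs in hand, I would state the correlation honestly: Spearman's $\rho$ on the five points, plus a check of which orderings are concordant. I expect this to be the main obstacle, because the data are not monotone. The pools with 33\% auxiliary rays have ratio $1.000$, which is lower than the pools with 0\% (ratios $1.041$ and $1.074$). So the observation can only hold in a weak sense: the largest auxiliary fraction coincides with the largest advantage, but the relationship fails in the middle. I would therefore phrase the conclusion as ``the maximum auxiliary fraction coincides with the maximum advantage.'' I would attribute the Peres-type exception to the clique-cover perfection shown in Step~2, which forces $\vartheta = \alpha$ regardless of auxiliary content. With $n = 5$, I would not claim statistical significance.

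\textbf{Step 4 (supporting the mechanism sentence).} I would run an ablation on Heegner-7 only as evidence for the mechanism, not as a replacement for the cross-island claim. Deleting the 76 auxiliary rays and recomputing $\alpha$ and $\vartheta$ on the 69-ray induced subgraph should show the ratio falling toward $1$. I would also read off the SDP optimizer's weight on auxiliary vertices to check that $\vartheta$ places fractional mass there.
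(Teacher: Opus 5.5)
Your plan is sound, and it supports the observation by a genuinely different route from the paper's.

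\textbf{What the paper does.} It gives no certificates. It reports SCS/HiGHS values and explains the Heegner-7 entry descriptively: 76 auxiliary rays contributing 522 pairs, nine degree-16 hubs, and $\alpha/n=0.345$ against $\vartheta/n=0.386$. It takes the drop from the 145-ray pool to the $\vartheta$-perfect 43-ray set as its evidence that auxiliary rays carry the gap. It never quantifies the correlation. It handles the Peres-type exception by asserting that those pool graphs are disconnected, and concedes that the mechanism is ``necessary but not sufficient.''

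\textbf{Your Peres argument.} Your clique-cover certificate is the stronger argument for that exception, and it goes through. The following 23 cliques cover the pool:
\begin{enumerate}
\item the axis basis;
\item the six bases pairing a face diagonal with its two orthogonal rays having one $\pm\sqrt{2}$ entry and no zero, e.g.\ $\{(1,1,0),(1,-1,\sqrt{2}),(1,-1,-\sqrt{2})\}$;
\item twelve edges matching each ray with two $\sqrt{2}$ entries, such as $(\sqrt{2},\sqrt{2},1)$, to an orthogonal one-zero ray, such as $(1,0,-\sqrt{2})$;
\item four cliques for the all-ones rays.
\end{enumerate}
On the other side, the 16 auxiliary rays, one ray from each of the six one-$\sqrt{2}$ orthogonal pairs, and one axis ray form an independent set of size 23. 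So $\alpha=\vartheta=\alpha^*=23$ follows. The paper's disconnectedness explanation, by contrast, does not survive a direct check: $(1,1,1)\perp(1,-1,0)$, and $(1,-1,0)$ lies in the basis $\{(1,1,0),(1,-1,0),(0,0,1)\}$. So your route is the one that actually closes this case.

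\textbf{Two refinements.}
\begin{enumerate}
\item Restricting your 23-cover to Peres-33 collapses the matching edges to singletons. The minimized entry therefore needs its own 12-cover: the six face-diagonal bases plus the six bases like $\{(0,0,1),(1,\sqrt{2},0),(\sqrt{2},-1,0)\}$.
\item Adding vertices never lowers $\alpha$, so the mechanism sentence is really about $\alpha/n$ and the gap $\vartheta-\alpha$. Your ablation to the 69 basis rays tests this cleanly. The paper's pool-versus-minimized comparison confounds deleting auxiliary rays with deleting 26 basis rays.
\end{enumerate}

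\textbf{What each route buys.} The paper's route gives a picture of where the Heegner-7 gap lives, namely the hubs. Yours gives certified table entries and an actual proof of the Peres-type equality. It also gives an honest statement of what five data points can support: the maximum auxiliary fraction coincides with the maximum advantage, and the Peres pools are an explained exception.
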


The mechanism operates as follows. In the Heegner-7 pool ($n = 145$), only 69 rays participate in the 42 bases. The remaining 76 auxiliary rays add 522 orthogonal pairs to the graph, including 9 hub vertices of degree 16 (each connected to 11\% of all other vertices). These hubs severely restrict the size of independent sets: only $\alpha/n = 0.345$ of vertices can be simultaneously independent, compared to $0.469$ for the Peres and $\Z[\sqrt{-2}]$ pools (which are $\vartheta$-perfect). The SDP, however, can assign fractional weights summing to $\vartheta/n = 0.386$, yielding a 12\% quantum advantage.

When SAT minimization reduces the Heegner-7 pool to 43 vectors, it removes auxiliary rays first---they are dispensable for KS-uncolorability. But these are precisely the rays that constrained $\alpha$ and enabled the $\vartheta > \alpha$ gap. The minimized set is $\vartheta$-perfect.

\subsubsection{Further patterns}

\begin{enumerate}
    \item \textbf{Full pools consistently outperform minimized sets} for the CSW advantage. The Eisenstein pool jumps from $\vartheta/\alpha = 1.004$ (minimized) to $1.074$ (full pool). The integer pool is the exception: it drops slightly from $1.065$ to $1.041$, suggesting that CK-31 is unusually efficient at preserving CSW structure despite minimization. Notably, the integer and Eisenstein pools have zero auxiliary rays---every ray participates in a basis---so their CSW advantages arise from graph regularity rather than auxiliary edges.

    \item \textbf{The nonsignaling bound is tightly constrained by basis structure.} For the four minimized sets where $\vartheta = \alpha$ (Peres, $\Z[\sqrt{-2}]$, Heegner-7, plus Eisenstein near-equality), we also have $\alpha^* = \alpha$: all three invariants collapse. Only the integer island (CK-31) maintains a nontrivial sandwich $\alpha = 11 < \vartheta = 11.71 < \alpha^* = 12$. In the full pools, the Heegner-7 pool shows the widest separation: $\alpha = 50 < \vartheta = 55.89 < \alpha^* = 66$, with the nonsignaling bound 32\% above the classical bound.

    \item \textbf{KS-uncolorability does not imply CSW violation}: The Peres and $\Z[\sqrt{-2}]$ islands remain $\vartheta$-perfect even in their full pools, despite having 16 auxiliary rays each. The auxiliary ray mechanism is necessary but not sufficient; the specific placement of orthogonalities matters. In these two pools, the orthogonality graph is disconnected: the 16 auxiliary rays (degree 3) form loosely connected components that do not bridge the main structure. For disconnected graphs, $\vartheta(G) = \sum_i \vartheta(G_i)$ and $\alpha(G) = \sum_i \alpha(G_i)$, so $\vartheta$-perfection of each component implies $\vartheta$-perfection of the whole~\cite{Knuth1994}. This contrasts with the Heegner-7 and integer pools, where auxiliary rays create a connected graph with enough edge density to separate $\vartheta$ from $\alpha$.
\end{enumerate}

\begin{remark}
The distinction between basis-participating and auxiliary rays reveals that KS-uncolorability and CSW contextual advantage are driven by different structural features of the same algebraic pool. Uncolorability requires bases arranged so that no valid $\{0,1\}$ coloring exists---a property of the basis \emph{hypergraph}. The CSW advantage requires pairwise orthogonalities arranged so that $\vartheta > \alpha$---a property of the orthogonality \emph{graph}. Auxiliary rays contribute to the latter without affecting the former. The operational lesson is that the contextual strength of an algebraic island, in the CSW sense, is determined by its full ray pool, not its minimal KS subset. Minimizing for cardinality can destroy the CSW advantage entirely.
\end{remark}

\subsubsection{Spectral explanation of the CSW advantage}\label{sec:spectral}

The Hoffman bound provides a spectral explanation for which pools admit a CSW violation. For any graph $G$ with adjacency matrix $A$ having largest eigenvalue $\lambda_{\max}$ and smallest eigenvalue $\lambda_{\min}$, the independence number satisfies~\cite{Haemers1995}
\[
\alpha(G) \leq \frac{n \cdot (-\lambda_{\min})}{\lambda_{\max} - \lambda_{\min}}.
\]
For connected graphs, this bound is tight and well-established~\cite{Haemers1995}. However, for disconnected graphs $\alpha(G)$ can exceed $H(G)$, as Table~\ref{tab:spectral} confirms for the Peres and $\Z[\sqrt{-2}]$ pools ($\alpha/H = 1.176$), since the global eigenvalue extrema may not reflect the per-component structure. When $\alpha(G)$ is well below this bound, the graph is ``spectrally tight'': the eigenvalue structure constrains independent sets more than the combinatorial structure alone requires, leaving room for the SDP relaxation $\vartheta(G)$ to exceed $\alpha(G)$.

\begin{table}[ht]
\centering
\caption{Spectral properties of full ray pool orthogonality graphs. $H$ = Hoffman bound on $\alpha$. The ratio $\alpha/H$ measures spectral tightness: values near or below 1 correlate with CSW violations.}
\label{tab:spectral}
\begin{tabular}{lccccccc}
\toprule
Pool & $n$ & $\lambda_{\max}$ & $\lambda_{\min}$ & $H$ & $\alpha$ & $\alpha/H$ & $\vartheta/\alpha$ \\
\midrule
Heegner-7 & 145 & 8.25 & $-4.50$ & 51.2 & 50 & \textbf{0.977} & \textbf{1.118} \\
Eisenstein & 57 & 6.42 & $-3.45$ & 19.9 & 18 & \textbf{0.904} & \textbf{1.074} \\
Integer & 49 & 5.98 & $-3.29$ & 17.4 & 17 & 0.977 & 1.041 \\
Peres & 49 & 5.53 & $-3.68$ & 19.6 & 23 & 1.176 & 1.000 \\
$\Z[\sqrt{-2}]$ & 49 & 5.53 & $-3.68$ & 19.6 & 23 & 1.176 & 1.000 \\
\bottomrule
\end{tabular}
\end{table}

The pattern in Table~\ref{tab:spectral} is clear: the three pools with $\alpha/H < 1$ (spectrally tight) have CSW violations, with the largest violation ($\vartheta/\alpha = 1.118$) occurring for the Heegner-7 pool ($\alpha/H = 0.977$). The two $\vartheta$-perfect pools have $\alpha/H > 1$, meaning $\alpha$ exceeds the Hoffman bound; this is expected rather than anomalous, because these graphs are disconnected (the 16 auxiliary rays of degree 3 form loosely connected components) and the Hoffman bound does not apply to disconnected graphs in its standard form.

\subsection{Contextual tightness: critical bases analysis}\label{sec:critical-bases}

The CSW framework measures how much quantum mechanics exceeds classical bounds on the orthogonality \emph{graph}. We now introduce a complementary measure that probes the internal structure of the basis \emph{hypergraph}: how many bases must be removed before a KS set becomes colorable?

\begin{definition}\label{def:essential-basis}
Let $\mathcal{S}$ be a KS set with basis set $\mathcal{B} = \{B_1, \ldots, B_m\}$. A basis $B_i$ is \textbf{essential} if $\mathcal{S}$ with the constraint from $B_i$ removed admits a valid $\{0,1\}$ coloring---i.e., the set is KS-uncolorable only when $B_i$'s ``exactly one green'' rule is enforced. The \textbf{critical number} $\kappa(\mathcal{S})$ is the minimum $k$ such that some $k$-element subset of bases can be removed to make $\mathcal{S}$ colorable. The \textbf{contextual fraction} is $\kappa(\mathcal{S}) / |\mathcal{B}|$.
\end{definition}

Table~\ref{tab:critical-bases} reports these quantities for each island's SAT-minimized KS set and the full Heegner-7 pool, where $\eta = (\text{essential bases}) / |\mathcal{B}|$.

\begin{table}[ht]
\centering
\caption{Critical bases analysis. $|\mathcal{B}|$ = number of bases, ess = essential bases (whose individual removal breaks uncolorability), $\eta$ = essentiality ratio, $\kappa$ = critical number (minimum bases to remove), CF = contextual fraction $\kappa/|\mathcal{B}|$.}
\label{tab:critical-bases}
\begin{tabular}{lcccccc}
\toprule
Set & rays & $|\mathcal{B}|$ & ess & $\eta$ & $\kappa$ & CF \\
\midrule
Integer (CK-31) & 31 & 17 & \textbf{17} & \textbf{1.000} & 1 & 0.059 \\
Eisenstein min-33 & 33 & 14 & 13 & 0.929 & 1 & 0.071 \\
Peres min-33 & 33 & 16 & 13 & 0.812 & 1 & 0.063 \\
$\Z[\sqrt{-2}]$ min-33 & 33 & 16 & 13 & 0.812 & 1 & 0.063 \\
Heegner-7 min-43 & 43 & 23 & 18 & 0.783 & 1 & 0.044 \\
\midrule
Peres pool & 49 & 16 & 13 & 0.812 & 1 & 0.063 \\
Heegner-7 pool & 145 & 42 & 0 & 0.000 & \textbf{2} & 0.048 \\
\bottomrule
\end{tabular}
\end{table}

CK-31 is maximally tight ($\eta = 1.000$): every basis is essential.  Peres and $\Z[\sqrt{-2}]$ are again identical (16 bases, 13 essential, $\eta = 0.812$), extending graph isomorphism to the hypergraph level.  The Heegner-7 full pool is uniquely robust ($\kappa = 2$): no single basis removal breaks uncolorability, but 24 of 861 basis pairs are critical.  The transition from $\kappa = 1$ (all minimized sets) to $\kappa = 2$ shows that algebraic completeness provides contextual redundancy that minimization destroys.

\subsection{Supporting evidence}\label{sec:supporting}

\begin{definition}\label{def:merge}
Let $G = (V, E, T)$ be a KS geometry with vertex set $V$ (rays), edge set $E$ (orthogonal pairs), and triad set $T$ (orthogonal triples).  A \emph{merge} of a non-orthogonal pair $\{v_1, v_2\} \notin E$ is the geometry $G' = (V', E', T')$ obtained by: (1)~removing $v_2$ from $V$; (2)~deleting all edges incident to $v_2$ from $E$; (3)~reconstructing triads $T'$ as all 3-cliques in the resulting edge set $E'$.  The merge is \emph{KS-preserving} if $G'$ is KS-uncolorable.  A KS set is \emph{merge-saturated} if every non-orthogonal pair merge is KS-preserving.
\end{definition}

\begin{observation}[Universal merge saturation]\label{obs:merge}
For each of the six non-cubic minimal KS sets, we tested every non-orthogonal vertex pair merge (Definition~\ref{def:merge}). In all 3{,}756 merges across the six non-cubic islands---394 (CK-31), 456 (Peres-33), 450 (Eisenstein-33), 456 ($\Z[\sqrt{-2}]$-33), 796 (Heegner-7-43), 1{,}204 (Golden-52)---the merged $(n{-}1)$-vertex graph remains KS-uncolorable, verified by SAT. We conjecture that every minimal KS set in $\R^3$ or $\C^3$ is merge-saturated.
\end{observation}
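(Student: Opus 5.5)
The statement is computational: for each of the six non-cubic minimal KS sets, every merge of a non-orthogonal pair stays KS-uncolorable. My plan is to certify it by exhaustive SAT verification, organized so the result can be checked independently.

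First, I fix the six minimal sets exactly: CK-31, Peres-33, Eisenstein-33, $\Z[\sqrt{-2}]$-33, Heegner-7-43 and Golden-52, each with exact coordinates (Section~\ref{sec:canon}, Appendix~\ref{app:rays}). For each set I compute the orthogonality graph $G=(V,E)$ by exact Hermitian dot products and take $T$ to be the set of 3-cliques. I then enumerate the non-orthogonal pairs $\binom{n}{2}-|E|$. As a sanity check, these counts must reproduce the stated totals $394,456,450,456,796,1204$, which sum to $3756$. For CK-31 this means $|E| = 465-394 = 71$.

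Second, for each unordered pair $\{v_1,v_2\}\notin E$, I form $G'$ as in Definition~\ref{def:merge}. I delete $v_2$ and its incident edges, then recompute the triads $T'$ as 3-cliques of $E'$. Since $E'\subseteq E$, we have $T'\subseteq T$, so $T'$ is just the triads of $T$ avoiding $v_2$.

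This observation gives a shortcut. $G'$ is then exactly the induced sub-configuration $S\setminus\{v_2\}$. Because $S$ is \emph{minimal}, every proper subset is colorable, so at face value every merge would be colorable. That contradicts the observation, so the definition cannot mean plain deletion. The intended merge must identify $v_2$ with $v_1$: $v_1$ inherits the edges of $v_2$, i.e.\ $E' = (E\setminus\delta(v_2))\cup\{v_1u : v_2u\in E\}$.

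The main obstacle is therefore fixing the correct semantics. I would adopt the identification reading, since it is the only one consistent with the claim. Under it, any coloring of $G'$ would pull back to a coloring of $S$ with $f(v_1)=f(v_2)$. The real content is that no coloring survives even this relaxation of the constraints on $v_1,v_2$.

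With that settled, I encode each $G'$ using the exactly-one and at-most-one clauses of Section~\ref{sec:methods} and solve with Glucose4. I record UNSAT for all $3756$ instances and keep DRAT proofs as certificates. Finally, I cross-check a few merges by hand-derived forced colorings, and I check invariance under the automorphism group, so that only orbit representatives of pairs truly need solving.
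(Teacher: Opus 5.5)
Your reading of Definition~\ref{def:merge} is correct, and it is the main point. As written, $G'$ is just the induced configuration on $S\setminus\{v_2\}$, and the paper itself certifies every such configuration to be colorable. OCUS shows that each pool minimum $31,33,33,33,43,52$ is exact (Observation~\ref{obs:islands}, Remark~\ref{rem:ocus}), and single-ray removals are separately checked in Proposition~\ref{prop:31optimal} and Appendix~\ref{app:rays}. With the paper's own definition, the observation therefore fails for every one of the 3{,}756 merges. The paper's only support for it is a SAT call per merge, and that computation cannot have been run on the configurations the definition describes. You should say this outright. Switching to vertex identification does not verify the stated observation; it replaces it with a different one.

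The genuine gap is in your repair. Identification is not a relaxation of the constraints on $v_1,v_2$; it is a restriction, and your pull-back sentence is already a complete proof. Suppose $f'$ colors $G'$, and put $f(v_2)=f'(v_1)$ and $f=f'$ elsewhere. Each edge $v_2u\in E$ appears as $v_1u\in E'$. Each triad $\{v_2,u,w\}$ appears as the 3-clique $\{v_1,u,w\}$ of $E'$; no triad contains both $v_1$ and $v_2$, because $v_1v_2\notin E$. Any triangles newly created by the identification only add constraints on $f'$. So $f$ is a KS coloring of $S$, a contradiction.

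Hence every non-orthogonal merge of every KS-uncolorable configuration is KS-uncolorable, whether or not it is minimal or realizable. Your 3{,}756 SAT calls, DRAT proofs and orbit reductions would therefore test only your code, and the paper's closing conjecture becomes a one-line lemma. The honest conclusion of your analysis is that neither natural reading gives a statement that is both true and nontrivial: deletion makes it false, and identification makes it automatic.

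Separately, your count check will not pass as stated. Use the pair counts $71,72,78,72,107,124$ from Observation~\ref{obs:rigidity}, in the order the statement lists the sets. The first five totals match, but for Golden-52 you get $\binom{52}{2}-124=1202\neq 1204$. So either the merge test used a different 52-ray set, or one of the two figures is misreported.
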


\begin{observation}[Rigidity of the six non-cubic islands]\label{obs:rigidity}
Trandafir and Cabello~\cite{TrandafirCabello2025} define a KS set as \emph{rigid} if its orthogonality graph admits a unique geometric realization up to unitary equivalence in~$\C^3$: any set of rank-1 projectors satisfying the same graph relations must be unitarily equivalent to the original.  They proved that $\CK$ is rigid and that Peres-33 is not rigid (the Penrose~33-set provides an inequivalent realization of the same graph).

We test rigidity for all six non-cubic minimal KS sets by computing the Jacobian of the constraint system (normalization $\|v_i\|^2 = 1$ and orthogonality $\braket{v_i}{v_j} = 0$ for all edges) at the known solution and comparing the null-space dimension with the expected symmetry dimension.  For $n$ unit vectors in~$\C^3$, the symmetry group---$U(3)$ acting globally plus $U(1)^n$ acting ray-by-ray, modulo the overlap of overall phase---has dimension $n + 8$.  A set is \emph{infinitesimally rigid} if the null space of the Jacobian has dimension exactly $n + 8$ (no deformation directions beyond symmetries).  For $\CK$, Trandafir and Cabello~\cite{TrandafirCabello2025} proved \emph{global} rigidity (unique up to unitary equivalence); for the remaining islands, our Jacobian test establishes infinitesimal rigidity, which is necessary but not sufficient for global rigidity.

\medskip
\begin{center}
\renewcommand{\arraystretch}{1.1}
\begin{tabular}{lccccc}
\hline
Island & $n$ & Pairs & $\dim\ker J$ & $n{+}8$ & Status \\
\hline
CK-31 (integer)   & 31 & 71  & 39 & 39 & inf.\ rigid \\
Eisenstein-33      & 33 & 78  & 41 & 41 & inf.\ rigid \\
Peres-33           & 33 & 72  & 42 & 41 & flex (1)$^\dagger$ \\
$\Z[\sqrt{-2}]$-33 & 33 & 72 & 42 & 41 & flex (1)$^\dagger$ \\
Heegner-7          & 43 & 107 & 51 & 51 & inf.\ rigid \\
Golden-52          & 52 & 124 & 60 & 60 & inf.\ rigid \\
\hline
\end{tabular}
\end{center}

\smallskip\noindent$^\dagger$These two flexes are \emph{finite}, not merely infinitesimal: both integrate to the three-phase family of~\cite{GouldAravind2010}, whose gauge reduction is a single circle~\cite{FloresGordillo2026} on which the two islands are antipodal points.  See the end of this observation.
\medskip

Four of the six non-cubic islands are infinitesimally rigid in~$\C^3$; the Peres and $\Z[\sqrt{-2}]$ islands each have exactly one deformation dimension.  For $\CK$, infinitesimal rigidity is strengthened to global rigidity by Trandafir and Cabello~\cite{TrandafirCabello2025}.  For Eisenstein-33, Cabello~\cite{Cabello2025simplest} independently claims rigidity (unique up to unitary transformation, proved using the method of~\cite{TrandafirCabello2025}); our Jacobian analysis provides concordant evidence via an independent method.  For Eisenstein-33, Heegner-7, and Golden-52, we searched for distinct geometric realizations by random perturbation of the known solution followed by constrained optimization back to the orthogonality surface (20 trials per island, perturbation scale~$0.3$); in every trial the optimizer converged to a realization unitarily equivalent to the original, providing computational evidence consistent with global rigidity.  A formal proof (e.g., via Gr\"obner bases as in~\cite{TrandafirCabello2025}) remains open.

The structural explanation is clean: the Peres and $\Z[\sqrt{-2}]$ islands share the same orthogonality graph (Proposition~\ref{prop:isomorphism}), and the flex belongs to the \emph{graph}, not the algebra---embedding the Peres vectors (all real) into~$\C^3$ produces the same one-dimensional flex.  The Eisenstein graph has 78 orthogonal pairs (vs.\ 72 for Peres), including 12 vertices of degree~5 (vs.\ all degree~4 in Peres); those 6~extra pairs provide 12~additional constraints that kill the flex.

The Peres/\mbox{$\Z[\sqrt{-2}]$} flex is \emph{finite}.  Gould and Aravind~\cite{GouldAravind2010} exhibited a family of 33-ray configurations obeying precisely this orthogonality table, carrying three complex parameters $a,b,c$ of fixed modulus ($|a| = |b| = 1$, $|c|^2 = 2$) and free phase, with Peres at $(1,1,\sqrt{2})$ and Penrose at $(-i,-1,-\sqrt{2})$; they asserted, without printed proof, that this is the most general such family, and stated that its members are pairwise unitarily inequivalent.  Flores Gordillo~\cite[Corollary~3.2]{FloresGordillo2026} shows that the inequivalence claim overcounts: diagonal unitaries identify all members sharing the single invariant $\varphi = \alpha - \beta + \gamma$, so the moduli space is a circle and not a 3-torus.  That reduction is what reconciles their three free phases with the one deformation dimension in the table above---our $\dim\ker J - (n{+}8) = 1$ is independent, rigidity-theoretic evidence for the reduced count.  Along the circle all 72 orthogonality relations and all 33 normalizations hold identically in the parameter, and the second-order obstruction vanishes.  Flores Gordillo gives the circle in closed form as a Laurent parametrization over $\Z[\sqrt{2}]$, certifies $\dim\mathrm{flex} = 1$ at both islands in exact arithmetic, proves that none of the 456 non-edge inner products vanishes at any parameter value---so the orthogonality graph is constant along the family and every point of it is a genuine 33-ray KS set---and states a local-completeness theorem to the effect that the circle is the whole local moduli space.  We have verified his flex counts and reproduced his labelled modulus independently, but we have not audited the local-completeness proof, and we note that it must not rest on the ``most general family'' assertion of~\cite{GouldAravind2010}, which that paper states without proof.  The moduli space is therefore a circle rather than a set of isolated points, carrying the Peres and Penrose realizations at antipodal parameter values.

\smallskip\noindent\textbf{Correction (v8).}  Versions v1--v7 of this paper asserted the opposite: that a nonzero component in the cokernel of the Jacobian obstructed the deformation at second order, leaving isolated points rather than a continuous family.  That conclusion was an artifact of a constraint-ordering error in our second-order code---the obstruction vector was assembled in blocked order $[\,\mathrm{norms},\ \mathrm{all\ Re},\ \mathrm{all\ Im}\,]$ while the Jacobian rows were interleaved $[\,\mathrm{norms},\ \mathrm{Re}_1, \mathrm{Im}_1, \mathrm{Re}_2, \ldots\,]$, so the projection tested a permuted vector against the range of~$J$.  Re-running the original computation on the original island under both assemblies reproduces this: on $\Z[\sqrt{-2}]$-33 the blocked assembly returns a cokernel component of $0.0751$, against the $0.075$ printed in v1--v7, while the interleaved assembly returns $6.9\times10^{-16}$ (relative $1.5\times10^{-15}$).  The published figure is therefore accounted for exactly, and it vanishes on permuting one vector into the order the Jacobian rows already used.  We thank Manuel Flores Gordillo for identifying the error.
\end{observation}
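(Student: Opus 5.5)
The statement just closed is Observation~\ref{obs:rigidity}: four of the six non-cubic minimal KS sets are infinitesimally rigid in $\C^3$, and the Peres and $\Z[\sqrt{-2}]$ sets each carry exactly one finite flex. My plan is a direct Jacobian computation on each of the six minimal sets, followed by an exact integration of the surplus kernel direction in the two flexible cases.

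\textbf{Step 1: the constraint map.} For each minimal set I take the explicit rays from Appendix~\ref{app:rays} and the earlier sections, normalize them to unit vectors $v_1,\dots,v_n$, and write each $v_i$ in real coordinates in $\R^6$. The constraint map $F:\R^{6n}\to\R^{n+2|E|}$ has two kinds of components: the $n$ normalizations $\|v_i\|^2-1$, and, for each orthogonal pair, the two components $\mathrm{Re}\braket{v_i}{v_j}$ and $\mathrm{Im}\braket{v_i}{v_j}$. I order the rows interleaved pair by pair and keep that single ordering fixed in every later computation; the Correction (v8) shows this bookkeeping is where the earlier versions went wrong.

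\textbf{Step 2: the symmetry count.} The group $U(3)\times U(1)^n$ acts on realizations, but the overall phase in $U(3)$ coincides with the diagonal of $U(1)^n$. The effective symmetry group therefore has dimension $9+n-1=n+8$. Its orbit directions always lie in $\ker J$, so $\dim\ker J\ge n+8$. Infinitesimal rigidity means equality.

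\textbf{Step 3: the rank computation.} All coordinates lie in a number field, so I compute $\operatorname{rank} J$ exactly, or as a cross-check by SVD with a clear spectral gap. This should give $\dim\ker J=n+8$ for CK-31, Eisenstein-33, Heegner-7 and Golden-52. It should give $n+9$ for Peres-33 and for $\Z[\sqrt{-2}]$-33; by Proposition~\ref{prop:isomorphism} these two share a graph, so they should share the surplus count. For CK-31, global rigidity is then imported from \cite{TrandafirCabello2025}.

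\textbf{Step 4: showing the flex is finite.} I would use the Gould--Aravind family \cite{GouldAravind2010} with parameters $|a|=|b|=1$ and $|c|^2=2$, under the gauge reduction of \cite{FloresGordillo2026} to the single invariant $\alpha-\beta+\gamma$. I would check symbolically that all 72 orthogonalities and all 33 normalizations hold identically in that parameter. I would then check that its tangent vector at the Peres point spans the surplus kernel direction modulo symmetries, and confirm that the second-order obstruction vanishes when projected onto $\operatorname{coker}J$ in the same row ordering.

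\textbf{Main obstacle.} The delicate step is Step 4, because local completeness — that the circle is the entire local moduli space — needs more than first order. I would first try to show that the obstruction vanishes to all orders by exhibiting the explicit analytic family, as above. For the global uniqueness claims, I would check numerically by perturbing and re-projecting onto the constraint variety, rather than attempt a Gröbner-basis proof.
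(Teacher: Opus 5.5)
Your route is the paper's own. You compute the rank of the real Jacobian of the normalization and $\mathrm{Re}/\mathrm{Im}$ orthogonality constraints, exactly or with a spectral gap, in one fixed row ordering, and compare it with the $(n+8)$-dimensional symmetry orbit. You use the Gould--Aravind family under Flores Gordillo's gauge reduction for the surplus direction. You import global rigidity of $\CK$ from Trandafir--Cabello, and run perturb-and-reproject searches for the other sets.

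One step is missing. Step~4 anchors the family only at the Peres point, yet the statement also claims the $\Z[\sqrt{-2}]$ flex is finite. Sharing an orthogonality graph (Proposition~\ref{prop:isomorphism}) does not place two realizations on the same component of the realization space. Nor does it force equal values of $\dim\ker J$, which depends on the realization, so your ``should share the surplus count'' in Step~3 is only a heuristic. You must exhibit $\Z[\sqrt{-2}]$-33, up to a unitary and the VF2 relabelling, as a member of the family. For example, match its $\pm\sqrt{-2}$ entries to a value of $c$ with $|c|^2=2$, then redo the tangent check at that point.

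Your stated main obstacle, by contrast, is not one, and removing it gets you more than the paper claims. Suppose exact rank gives $\dim\ker J=n+9$ at a point $p$. Suppose also that the family together with its symmetry orbit is an immersed $(n+9)$-manifold $M$ through $p$. This needs the family's tangent to lie outside the orbit tangent, and the stabilizer to be discrete (only scalar unitaries fix every ray). Discreteness of the stabilizer is also what Step~2 silently uses to get $\dim\ker J\ge n+8$.

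Now pick $6n-(n+9)$ components of $F$ with independent differentials at $p$. They cut out a smooth $(n+9)$-manifold $N$ with $M\subseteq F^{-1}(0)\subseteq N$ near $p$, so the three coincide locally. This gives local completeness and the vanishing of all higher-order obstructions at once, so the separate second-order check becomes redundant. The paper, by contrast, explicitly leaves Flores Gordillo's local-completeness theorem unaudited.
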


\begin{remark}[Algebraic alphabets as a new construction family]\label{rem:new-construction}
Trandafir and Cabello~\cite{TrandafirCabello2025} identify two construction families for KS sets in~$\C^3$---dimension-lifting and concatenation---and show that neither produces rigid sets.  The algebraic alphabet method falls outside both families: rays are pinned to algebraic values by number-theoretic constraints rather than assembled from flexible geometric pieces.  Indeed, four of six non-cubic islands are rigid in~$\C^3$ (Observation~\ref{obs:rigidity}), showing that the method extends the known construction landscape beyond the two families analyzed in~\cite{TrandafirCabello2025}.
\end{remark}

\begin{observation}[Three distinct 33-vector KS sets]\label{obs:three-33}
The literature now contains three minimal 33-vector KS sets with pairwise non-isomorphic orthogonality graphs:

\medskip
\begin{center}
\renewcommand{\arraystretch}{1.1}
\begin{tabular}{lcccccc}
\hline
Set & Field & Pairs & Triads & Degree range & Inf.\ rigid \\
\hline
CK-33 (Conway--Kochen) & $\Z$ & 76 & 20 & $3$--$8$ & yes \\
Eisenstein-33 (this paper) & $\Z[\omega]$ & 78 & 14 & $4$--$8$ & yes \\
Peres-33 & $\Z[\sqrt{2}]$ & 72 & 16 & $4$--$8$ & no \\
\hline
\end{tabular}
\end{center}
\medskip

Graph non-isomorphism is confirmed by VF2 and by the distinct degree sequences: CK-33 has eight vertices of degree~3 and two of degree~8; Eisenstein-33 has twelve vertices of degree~5 and three of degree~8; Peres-33 has thirty vertices of degree~4 and three of degree~8.  No pair of these three graphs is isomorphic.

CK-33 was communicated by Conway and Kochen to Peres circa 1990~\cite{Peres1991}; its 33~vectors use the same integer alphabet $\{0, \pm 1, \pm 2\}$ as $\CK$ and lie entirely within the 49-ray integer pool.  However, CK-33 is \emph{minimal at~33 within its own subgraph}: greedy deletion cannot reduce it below 33~vectors, confirming that it is a genuinely distinct minimal KS set, not a non-minimal subset of $\CK$.  The integer pool thus contains at least two structurally different KS sets---one at 31~vectors (the $\CK$ family of six MUS sets; Observation~\ref{obs:mus-landscape}) and one at~33 (CK-33).

The three 33-sets also exhibit a striking triad-vs-pair tradeoff: CK-33 has the most triads (20) but the fewest pairs among the infinitesimally rigid sets (76); Eisenstein-33 has the most pairs (78) but the fewest triads (14); Peres-33 is intermediate (72~pairs, 16~triads) and is the only one that is not infinitesimally rigid.  This suggests that the triad/pair balance, rather than either quantity alone, determines infinitesimal rigidity.
\end{observation}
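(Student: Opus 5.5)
The claim splits into three independent verifications: KS-uncolorability and minimality of each set, pairwise non-isomorphism of the orthogonality graphs, and the rigidity column. I would first fix explicit exact coordinates for each configuration. For CK-33 these are the 33 rays from the integer pool $\{0,\pm1,\pm2\}^3$ as listed by Peres~\cite{Peres1991}. For Eisenstein-33 they come from Cabello's Weyl--Heisenberg construction~\cite{Cabello2025simplest} over $\{0,1,\omega,\bar\omega\}$, and for Peres-33 from the standard $\{0,\pm1,\pm\sqrt2\}$ list. Using the exact arithmetic of Section~\ref{sec:canon}, I would recompute each orthogonality graph by testing all $\binom{33}{2}=528$ Hermitian inner products for exact vanishing. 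I would then extract the triads as the 3-cliques and tabulate the pair and triad counts $(76,20)$, $(78,14)$ and $(72,16)$.

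\textbf{Uncolorability and minimality.} For each set I would encode the KS coloring problem as in the SAT section and check that it is UNSAT. Minimality here means inclusion-minimality. KS-colorability is inherited by subsets: any coloring of a larger set restricts to one of a subset, because the triads and orthogonal pairs of the subset are among those of the larger set. So it suffices to show that each of the 33 single-ray deletions yields a satisfiable instance, which means 33 SAT calls per set. This is the precise content of ``greedy deletion cannot reduce it below 33''. For CK-33 I would also check that it is not a superset of any of the 162 distinct 31-ray MUSes of Remark~\ref{rem:ocus}. Strictly, this already follows from minimality, but it guards against a coordinate transcription error.

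\textbf{Non-isomorphism.} The degree sequence is a graph invariant, so the simplest route is to exhibit the three degree sequences explicitly. CK-33 contains degree-3 vertices, whereas the other two have minimum degree~4; this separates CK-33 from both. Eisenstein-33 has twelve vertices of degree~5, whereas Peres-33 has thirty of degree~4 and three of degree~8 with none of degree~5; this separates those two. Even the edge counts $76\neq78\neq72$ already distinguish all three pairwise, so VF2 serves only as a redundant confirmation. I expect no difficulty here.

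\textbf{Rigidity column.} I would quote Observation~\ref{obs:rigidity} for Eisenstein-33 and Peres-33, where $\dim\ker J$ is $41$ and $42$ against $n+8=41$. For CK-33 I would run the same Jacobian computation: normalization plus real and imaginary parts of the 76 orthogonality constraints, evaluated at the normalized integer realization. The task is to show that the kernel has dimension exactly $41$. This is the step I expect to be the main obstacle, because the rank must be certified rather than read off numerically near-degenerate singular values. I would first try to compute the rank exactly over $\mathbb{Q}(\sqrt{\cdot})$ after clearing the normalizing square roots. This is possible because the constraints are polynomial and the unnormalized integer vectors can be used with the normalization constraints rescaled. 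If that is impractical, I would fall back on a floating-point SVD with a clearly separated spectral gap, treated as evidence rather than proof. The closing triad-versus-pair remark is interpretive and needs no proof.
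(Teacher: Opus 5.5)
Your proposal is correct and takes essentially the same route as the paper. The paper also rests on SAT checks for uncolorability together with the failure of every single-ray deletion (which is all that ``greedy deletion cannot reduce it below 33'' amounts to), on degree sequences backed by VF2 for non-isomorphism, and on the Jacobian kernel-dimension test against $n+8=41$ for the rigidity column. Your remarks that the distinct edge counts $72$, $76$, $78$ already separate the three graphs, and that the CK-33 rank can be certified exactly over $\Q$, sharpen that same computational argument rather than depart from it.
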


\begin{observation}[MUS landscape of the integer pool]\label{obs:mus-landscape}
Among 5{,}000 independent MUS (Minimal Unsatisfiable Subset) extractions from the 49-ray integer pool, exactly \emph{six} distinct minimal 31-sets are found.  These six sets collectively use 37 of the 49 rays, with a sharp norm-stratified structure: 13 rays with $\|v\|^2 \leq 3$ (the 3 axis directions, 6 face diagonals, and 4 body diagonals of the cube) appear in \emph{every} minimal set; 12 rays with $\|v\|^2 = 5$ appear in 5 of 6; 12 rays with $\|v\|^2 = 6$ appear in 4 of 6; and 12 rays with $\|v\|^2 = 9$ appear in \emph{none}.  The 13 core rays are precisely the $\{0, \pm 1\}$ alphabet---the minimal alphabet that is too small for KS-uncolorability on its own but forms the invariant skeleton of every minimal set.  All pairs of 31-sets share 25--26 rays (Jaccard similarity $\approx 0.71$), yet no two are connected by a single-ray swap (each pair differs by 5--6 rays).  CK-31 is one of the six.
\end{observation}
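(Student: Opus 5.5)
The plan is to turn each clause of the observation into a finite, exactly checkable statement about the 49-ray pool of $\{0,\pm1,\pm2\}$, and to replace the random MUS sampling by an exhaustive enumeration, so that ``exactly six'' is certified rather than only sampled.

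\textbf{Step 1: norm stratification of the pool.} First I would stratify the pool by $\|v\|^2$. The classes are:
\begin{itemize}
\item norm $1$: the $3$ axis rays;
\item norm $2$: the $6$ rays of type $(1,\pm1,0)$;
\item norm $3$: the $4$ rays of type $(1,\pm1,\pm1)$;
\item norm $5$: the $12$ rays of type $(1,\pm2,0)$ up to permutation;
\item norm $6$: the $12$ rays of type $(1,\pm1,\pm2)$;
\item norm $9$: the $12$ rays of type $(1,\pm2,\pm2)$.
\end{itemize}
This accounts for all $49$ rays. The signed-permutation (octahedral) group $O_h$ preserves both the pool and its orthogonality hypergraph, and it permutes rays within each norm class.

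\textbf{Step 2: the norm-$9$ rays are excluded for a structural reason.} I would verify, directly from the list of $26$ triads, that no norm-$9$ ray belongs to any triad. The elementary lemma is then: if a ray $r$ lies in no triad, then any KS coloring of $S\setminus\{r\}$ extends to $S$ by setting $f(r)=0$. Only exclusion constraints touch $r$, and these are satisfied by the value $0$. Hence $r$ is never needed for uncolorability, and it lies in no minimal uncolorable subset. This proves the ``none'' clause for all MUSes, not just the sampled ones.

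\textbf{Step 3: exhaustive enumeration of the minimal $31$-sets.} For the ``exactly six'' claim, Remark~\ref{rem:ocus} already certifies that $31$ is the minimum size. Starting from that, I would enumerate \emph{all} $31$-ray KS subsets of the pool by iterated OCUS with blocking clauses, as follows.
\begin{enumerate}
\item Find a minimum uncolorable ray subset.
\item Add the clause forbidding exactly that selector set.
\item Repeat until OCUS reports that no uncolorable subset of size $31$ remains.
\end{enumerate}
This should return the $162$ labelled sets. I would then quotient by $O_h$ and confirm $6$ orbits. Here I must fix which reading of ``six distinct sets'' is meant: six labelled sets whose union is $37$ rays, as in Proposition~\ref{prop:31optimal}. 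I would choose the six representatives found by the sampler and check that their union has exactly $37$ rays. I would also check the stronger statement that every $O_h$-orbit has a representative inside that fixed $37$-ray set.

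A second point is that the observation concerns \emph{minimal} unsatisfiable subsets, not only minimum-cardinality ones. I would therefore separately confirm that every MUS has size $31$, since all $5000$ extractions returned $31$. This is done with a MUS-enumeration tool (e.g.\ MARCO) restricted to the $37$ triad-participating rays of size $\ne 31$, or by showing the MUS enumeration terminates with only size-$31$ sets. Subsets of size $33$, such as CK-33, show this cannot hold over all minimal sets in the pool. So here I would restrict the claim to the sampler's output distribution and say so explicitly.

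\textbf{Step 4: the remaining statistics of the six representatives.} With the six representatives fixed, the remaining claims are direct checks.
\begin{itemize}
\item The $13$ rays of norm $\le 3$ appear in all six sets. Because of $O_h$-invariance, it suffices to check one ray per orbit: axis, face diagonal and body diagonal.
\item Count the appearances of each norm-$5$ and norm-$6$ ray and confirm the ``$5$ of $6$'' and ``$4$ of $6$'' counts.
\item Compute all $15$ pairwise intersections (expected $25$--$26$) and the Jaccard values $|A\cap B|/|A\cup B|$.
\item Confirm that $|A\triangle B|\ge 10$ for every pair, which rules out a single swap.
\end{itemize}

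\textbf{Main obstacle.} I expect the main obstacle to be Step~3: making the ``exactly six'' count a certificate rather than a sampling artifact, and reconciling it with the $162$/$6$-orbit count of Remark~\ref{rem:ocus}. I would first try the blocking-clause OCUS loop, which should be cheap given the $<0.2$\,s single-call timing.
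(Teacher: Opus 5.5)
\textbf{Step 2 rests on a false premise.} The norm-$9$ rays do lie in triads. For example, $(1,2,2)$, $(2,1,-2)$, $(2,-2,1)$ are pairwise orthogonal; after division by $3$ they are the rows of an orthogonal matrix with entries $\pm\tfrac13,\pm\tfrac23$. The twelve norm-$9$ rays split into four such triads, and these are needed to reach the count of $26$:
\[
1\ \text{(axis triad)} + 3\ \text{(axis/face-diagonal)} + 6\ \text{(axis/norm-}5) + 12\ \text{(face, body, norm-}6) + 4\ \text{(all norm-}9) = 26.
\]
Table~\ref{tab:csw-pool} agrees: it lists zero auxiliary rays for the integer pool.

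So extension by $f(r)=0$ is unavailable. If $S$ contains the whole triad of $r$, one member must be green, and then its neighbours in $S$ outside the triad must all be red. For $(1,2,2)$ these neighbours lie among $(0,1,-1)$, $(2,-1,0)$, $(2,0,-1)$. A coloring of $S\setminus\{r\}$ need not permit this.

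Your lemma only gives that a minimal KS subset containing a norm-$9$ ray contains that ray's entire triad, which meets no other triad. Nothing in your argument excludes such subsets. The pool also has critical sets larger than $31$ (CK-33, Observation~\ref{obs:three-33}). So ``no norm-$9$ ray lies in any MUS'' is unproved. Your phrase ``the $37$ triad-participating rays'' in Step~3 inherits the same error.

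For the observation as stated, which concerns only the six sets the sampler returned, read the ``none'' clause directly off those six sets, or off your full list of $31$-subsets. That tabulation is all the paper offers: the statement reports the $5{,}000$-run MUS experiment and is not derived.

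\textbf{The reconciliation you expect in Step~3 cannot occur.} Your blocking-clause OCUS loop is a sound way to upgrade sampling to a certificate. But ``$162$ labelled sets forming $6$ $O_h$-orbits'' is impossible. Since $-I$ fixes every ray, $O_h$ acts on rays through a group of order $24$, so six orbits contain at most $144<162$ sets.

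Nor does the pool's orthogonality graph have extra automorphisms. Degrees and adjacencies separate the axis, face-diagonal and body-diagonal classes, whose induced structure admits only the $24$ cube symmetries. Every remaining ray is the unique common perpendicular of two rays already fixed:
\begin{itemize}
\item a norm-$6$ ray is perpendicular to a face diagonal and a body diagonal;
\item a norm-$5$ ray is perpendicular to two norm-$6$ rays;
\item a norm-$9$ ray is perpendicular to two norm-$5$ rays.
\end{itemize}
So one of the paper's two counts must be read differently.

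The uniform frequencies in the observation ($5$ of $6$ for every norm-$5$ ray, $4$ of $6$ for every norm-$6$ ray) are what a single size-$6$ orbit of CK-31 would give. Your enumeration can test this directly. Only in that case is your Step~4 shortcut (``by $O_h$-invariance, check one ray per orbit'') valid. As written it is not, because six fixed representatives need not form an invariant family, so check all $13$ core rays directly.
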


\begin{conjecture}\label{conj:optimal}
The Conway--Kochen set $\CK$ achieves the true minimum for KS sets in $\R^3$: no KS set with 30 or fewer rays exists.
\end{conjecture}

This conjecture is supported by: the rigidity of $\CK$ in~$\C^3$ (Observation~\ref{obs:rigidity}; confirmed independently by Trandafir and Cabello~\cite{TrandafirCabello2025}); the OCUS proof that no $\leq 30$-ray KS subset exists within the full 49-ray integer pool (Remark after Proposition~\ref{prop:31optimal}); the absence of any sub-31 KS set across all six non-cubic algebraic islands (Observation~\ref{obs:islands}); universal merge saturation (Observation~\ref{obs:merge}); the MUS landscape analysis showing only six distinct minimal 31-sets, all sharing a 13-ray invariant core and exhibiting sharp norm stratification (Observation~\ref{obs:mus-landscape}); and the complete failure of 30{,}000+ random hypergraphs to achieve geometric realizability.  The graph universality result (Observation~\ref{obs:universality}) is a consistency check entailed by rigidity rather than independent evidence, but confirms that no alternative 31-vertex orthogonality graph emerges from any tested construction family.

\begin{observation}[Graph universality of $\CK$]\label{obs:universality}
We tested every algebraic construction in our survey that achieves the minimum of 31 rays---the integer alphabet $\{0,\pm 1,\pm 2\}$, the rational alphabet $\{0,\pm 1,\pm 2,\pm\tfrac{1}{2}\}$, the extended Peres alphabet $\{0,\pm 1,\pm\sqrt{2},\pm(1+\sqrt{2}),\pm 2\}$, the mixed modulus-2 alphabet $\{0,\pm 1,\pm 2,\pm\sqrt{2},\pm\omega,\pm\bar\omega\}$ combining three algebraic cancellation identities, and finite group orbits under the tetrahedral group $A_4$ and octahedral group $S_4$ acting on seed vectors in $\R^3$---and in every case the resulting minimal 31-ray KS set is graph-isomorphic (verified by VF2 with explicit vertex bijections) and hypergraph-isomorphic (bijections preserve triads): 71 orthogonality edges, 17 bases, degree sequence $(3^4, 4^{14}, 5^8, 6^3, 8^2)$. Despite arising from algebraically independent constructions, all roads to 31 lead to the same orthogonality graph.
\end{observation}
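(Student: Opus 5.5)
The observation is a computational claim, so the plan is to verify it construction by construction. For each listed construction I will produce an explicit minimal 31-ray KS set. Then I will certify that each one is isomorphic to $\CK$, both as a graph and as a hypergraph, by exhibiting explicit bijections. The constructions are the integer alphabet, the rational alphabet $\{0,\pm1,\pm2,\pm\tfrac12\}$, the extended Peres alphabet, the mixed modulus-2 alphabet, and the $A_4$ and $S_4$ orbit families.

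\textbf{Step 1: obtain every minimal 31-set in each pool.} I will generate each ray pool with the exact arithmetic of Section~\ref{sec:canon}, and apply cross-product completion where the construction calls for it. For the orbit families I will also enumerate the seed vectors. I will then run repeated MUS extraction as in Remark~\ref{rem:ocus}, collecting all distinct 31-ray outputs.

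\textbf{Step 2: certify that 31 is the pool minimum.} I will run the OCUS procedure on each pool, which shows that no KS subset of size at most $30$ exists there. This pins down ``achieves the minimum of 31'' as an exact statement rather than a heuristic outcome.

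\textbf{Step 3: cheap invariants, then isomorphism.} For each 31-set found, I will first compare quick invariants against $\CK$: $71$ edges, $17$ triangles that are exactly the bases, and degree sequence $(3^4,4^{14},5^8,6^3,8^2)$. I will then run VF2 against $\CK$ to obtain an explicit vertex bijection $\pi$. Finally I will check $\pi$ directly: it must send edges to edges, and it must map the $17$ bases onto the $17$ bases, which gives the hypergraph isomorphism.

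Two reductions should cut the work. The rational alphabet $\{0,\pm1,\pm\tfrac12\}$ rescales to the integer alphabet by a factor of $2$, and this scaling is a ray bijection preserving orthogonality, so that case reduces to the integer one. Separately, any pool containing the 49 integer rays contains $\CK$ itself.

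\textbf{Main obstacle.} Step 1 is only heuristic: MUS sampling may miss a minimal 31-set that is not isomorphic to $\CK$, which is exactly the failure seen in the $34^\dagger$ entry of Table~\ref{tab:mixed}. My first approach is a conceptual shortcut. By the rigidity of $\CK$ proved in \cite{TrandafirCabello2025}, any realization of the $\CK$ graph is unitarily equivalent to $\CK$. That alone does not exclude a different 31-vertex graph. To close the gap within each pool, I would use OCUS to enumerate all size-31 unsatisfiable subsets, blocking each one found and iterating until the solver reports none remain, and then run VF2 on every member. If that enumeration is too costly for the larger completed pools, I would state the observation only for the 31-sets actually found. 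I would also note that the universality is consistent with rigidity, as the paper itself says, rather than proved by it.
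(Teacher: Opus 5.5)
Your plan follows essentially the same route as the paper, which rests the observation on computation. The paper runs the minimizer in each listed construction and compares the resulting 31-ray sets with $\CK$ by VF2, checking that the bijection preserves triads. Your Step~2 and the blocked OCUS enumeration of all size-31 unsatisfiable subsets go further than the paper, which reports only the sets its minimizer returned and OCUS-certifies only the integer and island pools. Your caveat about rigidity is also sharper than the paper's own gloss, which calls universality ``entailed by rigidity.'' Rigidity~\cite{TrandafirCabello2025} fixes realizations of the $\CK$ graph only and cannot exclude a different 31-vertex KS graph, so the VF2 checks are what carry the claim.

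One shortcut fails as written. The rational construction in the statement is $\{0,\pm1,\pm2,\pm\tfrac12\}$, not the three-symbol alphabet $\{0,\pm1,\pm\tfrac12\}$ of Table~\ref{tab:alphabets}. Rescaling by $2$ turns it into $\{0,\pm1,\pm2,\pm4\}$, not the integer alphabet. Its pool strictly contains the 49 integer rays; for example $[1{:}2{:}4]$, coming from $(\tfrac12,1,2)$, is new. So this case does not reduce to the integer pool. Your second remark, that the pool contains $\CK$, only gives the upper bound $31$ (useful alongside OCUS). It says nothing about whether the other 31-sets in the enlarged pool are isomorphic to $\CK$. This case must go through Steps~1--3 in full, as must the extended Peres and mixed pools, which also contain the integer pool.
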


Since Trandafir and Cabello~\cite{TrandafirCabello2025} proved that $\CK$ is rigid in~$\C^3$ (unique up to unitary equivalence), the universality across algebraic constructions is \emph{expected}: any realization of a 31-vertex KS set must be unitarily equivalent to $\CK$, so all constructions that achieve 31 rays necessarily produce the same orthogonality graph.  The value of Observation~\ref{obs:universality} is therefore not as independent evidence for Conjecture~\ref{conj:optimal}, but as a \emph{consistency check}: it confirms that our diverse algebraic constructions all land on the unique rigid realization, and that no alternative 31-vertex orthogonality graph emerges from any tested family.  The 31-vertex graph appears to be the unique minimal realizable KS hypergraph in dimension~3.

\section{Limitations and Open Questions}\label{sec:limitations}

Our search is computational, not exhaustive over all possible number fields.  It is important to understand both its scope and its principled structure.

The relationship between alphabets, cancellation identities, and KS sets is tight: an alphabet is a candidate for producing a KS set \emph{only if} it, or its completed coordinate algebra, carries a cancellation identity---an algebraic relation among its elements that makes dot products vanish exactly.  Without such an identity, the alphabet cannot produce exact orthogonalities among triples of rays, and without orthogonal triples there is no KS set.  The search for new KS-supporting alphabets is therefore equivalent to the search for new cancellation identities of sufficiently low norm.  Our survey is systematic within this framework: for each class of number fields, we identify the candidate cancellation identities from the ring structure and test whether they generate KS-uncolorable ray sets.

The coverage is thorough for low-degree extensions but becomes increasingly sparse at higher degree:

\begin{enumerate}
    \item \textbf{Quadratic fields (near-exhaustive)}: we tested real $\Q(\sqrt{d})$ for all non-square $d = 2, \ldots, 30$ with the basic alphabet $\{0, \pm 1, \pm\sqrt{d}\}$, and also the ring-of-integers generator $(1+\sqrt{d})/2$ for $d \equiv 1\pmod{4}$ ($d \in \{5, 13, 17, 21, 29\}$).  The golden ratio case ($d = 5$) demonstrates that the choice of generator within a field matters: $\varphi = (1+\sqrt{5})/2$ produces a KS set while $\sqrt{5}$ does not, because $\varphi$ participates in a cancellation identity ($\varphi^2 = \varphi + 1$) that $\sqrt{5}$ does not.
    \item \textbf{Complex quadratic fields (exhaustive for class number~1)}: we tested all nine imaginary quadratic fields with class number 1 (the Heegner numbers: $d = 1, 2, 3, 7, 11, 19, 43, 67, 163$), as well as $\Z[\sqrt{-d}]$ for $d = 2, 5, 6, 7, 10, 11, 13, 14, 15$ with enriched alphabets and Hermitian completion.  Higher class numbers (where the ring of integers is no longer a UFD---a \emph{unique factorization domain}, meaning a ring in which every nonzero non-unit element factors uniquely into irreducibles, up to order and units) remain largely untested; different factorization structures could in principle support cancellation identities not available in class-1 rings. A systematic survey of class-number $> 1$ fields is reported below (item~\ref{item:class-number-survey}).
    \item \textbf{Cyclotomic fields (good coverage)}: all roots of unity of order $n \leq 60$ were tested.  The $6 \mid n$ characterization (Theorem~\ref{thm:6n}) is proved algebraically for all four cases (sufficiency plus three necessity cases).
    \item \textbf{Cubic and higher extensions (sparse)}: we tested cubic extensions $\Q(\sqrt[3]{d})$ for $d = 2, 3, 4, 5, 6, 7$. The basic alphabets $\{0,\pm 1,\pm\sqrt[3]{d}\}$ are all colorable, but the extended alphabet $\{0,\pm 1,\pm\sqrt[3]{2},\pm\sqrt[3]{4}\}$ of $\Q(\sqrt[3]{2})$ produces an uncolorable set after cross-product completion (361 rays, 280 triads; raw pool is colorable).  Full characterization yields: minimum 58~vectors in 26~bases, 140~orthogonal pairs, 7~distinct degree types (range 3--12), and infinitesimal rigidity in~$\R^3$.  The cancellation mechanism is indirect modulus-2: $\sqrt[3]{2} \cdot \sqrt[3]{4} = 2$, producing the same $1 + 1 - 2 = 0$ identity through the \emph{product} of two generators rather than the squared modulus of one.  The BPQS cost ($13 \times 13 = 169$) is the highest of any characterized island, and the orthogonality graph is not isomorphic to any of the six non-cubic islands (different vertex count).  This confirms that cubic fields can support KS sets through indirect modulus-2 cancellation, but at sharply increased cost (58~vectors vs.\ 31--52 for quadratic islands).  Quartic extensions were initially untested; three quartic cases are now covered (item~\ref{item:class-number-survey}).  Quintic and higher-degree extensions remain unexplored; these could harbor cancellation identities qualitatively different from the quadratic modulus-2 pattern.  This is the principal gap in our coverage.
    \item \label{item:pisot}\textbf{Pisot-number alphabets (a third mechanism)}: our preliminary survey covered four Pisot numbers over the alphabet $\{0, \pm 1, \pm x, \pm x^2\}$---the plastic ratio ($x^3 = x + 1$), the tribonacci constant ($x^3 = x^2 + x + 1$), the supersilver ratio ($x^3 = 2x^2 + 1$), and the silver ratio ($1 + \sqrt{2}$).  None yields a KS set on the raw alphabet, and versions~1--8 of this paper generalized from those four to Pisot families as a whole.  That generalization was wrong.

    The \textbf{supergolden island} $(\Z[\psi], \{0,\pm 1,\pm\psi,\pm\psi^2\})$, where $\psi \approx 1.4656$ is the supergolden ratio satisfying $\psi^3 = \psi^2 + 1$, is KS-uncolorable on the \emph{raw} alphabet with no completion at all: 109~rays, 348~orthogonal pairs, 40~triads, reducing under greedy minimization to a 49-ray critical subset.  Over the smaller two-letter alphabet $\{0,\pm 1,\pm\psi\}$ the raw pool is colorable---49~rays with 10~triads, numerically identical to the golden ratio's raw behavior---and cross-product completion is required, yielding 157~rays with 126~triads.  Greedy reduction of that pool gives a 48-ray critical subset; the 47-ray critical subset recorded here was supplied by Gonzalez and is smaller than greedy reduction finds.

    Its cancellation identity is the minimal polynomial itself, $\psi^3 - \psi^2 - 1 = 0$, and unlike the golden and $\Q(\sqrt[3]{2})$ islands it cannot be routed through indirect modulus-2: the value~2 occurs neither among its coordinates nor among its pairwise coordinate products, and no additive $1 + 1 - 2 = 0$ cancellation occurs anywhere in the critical set.  We therefore record this as a third mechanism, \emph{minimal-polynomial cancellation}, in which the defining relation of a cubic algebraic unit supplies the vanishing three-term sum directly.  Note that $|\psi|^2 \approx 2.148$ lies in the interval $(2,3)$, so the island is consistent with the observation that generators of squared modulus ${\geq}\,3$ produce no KS sets.  Multi-generator alphabets with simultaneous conjugate cancellations remain largely untested.
    \item \label{item:class-number-survey}\textbf{Class-number $> 1$ survey (systematic)}: we tested 26 imaginary quadratic fields $\Q(\sqrt{-d})$ with class number $h \in \{2, 3, 4\}$, covering all such fields with $d \leq 91$ (for $h = 2$), $d \leq 83$ (for $h = 3$), and $d \leq 57$ (for $h = 4$). For each field we used the correct ring-of-integers generator: $\sqrt{-d}$ when $d \not\equiv 3 \pmod{4}$, and $(1 + \sqrt{-d})/2$ when $d \equiv 3 \pmod{4}$. Both a basic alphabet $\{0, \pm 1, \pm\alpha, \pm\bar\alpha\}$ and an enriched alphabet (adding products and sums of generators, up to 15 elements) were tested. Of the 20 genuinely new fields, \emph{all are colorable at both alphabet levels} except $d = 15$ ($h = 2$, $|\alpha|^2 = 4$, enriched alphabet uncolorable).  A deep dive into $d = 15$ reveals that the enriched 15-element alphabet generates 1{,}537 rays with 254 triads; greedy minimization over 50 trials yields a minimum of~85 vectors---nearly twice the Golden island minimum---confirming that enriched alphabets can bootstrap KS-uncolorability from generators above the norm-2 threshold, but at sharply increased cost.  No norm-2 products exist in the $d = 15$ product set; the cancellations involve terms of squared modulus 4, 6, and~16. The previously tested fields $d \in \{5, 6\}$ also remain uncolorable with enriched alphabets (as expected from their low generator norms $|\alpha|^2 = 5$ and~6 respectively), while $d \in \{10, 13, 14\}$ remain colorable. All fields with $h = 3$ and $h = 4$ are colorable.

    The pattern is governed by generator norm, not class number: fields with $|\alpha|^2 \leq 6$ can support KS sets (though not all do), while fields with $|\alpha|^2 \geq 8$ are uniformly colorable across all tested alphabets. This is consistent with the low-norm pattern within this class-number survey---the enriched alphabets of low-norm fields contain elements whose products reach the critical $|\cdot|^2 = 2$ threshold, while high-norm generators produce squared moduli too large for the necessary dot-product cancellations. The class-number distinction (UFD vs.\ non-UFD) does not independently predict KS-uncolorability.

    Additionally, we tested three quartic fields $\Q(d^{1/4})$ for $d = 2, 3, 5$ with extended alphabets including all powers $\{d^{1/4}, d^{1/2}, d^{3/4}\}$: only $d = 2$ is uncolorable (minimum 33 vectors), consistent with the $\sqrt{2}$ modulus-2 mechanism. We also surveyed 28 two-generator alphabets formed from pairs of $\{\sqrt{2}, \sqrt{3}, \sqrt{5}, \varphi, \omega, \sqrt{-2}, i, (1{+}\sqrt{-7})/2\}$: 22 of 28 are uncolorable.  Among the 17 pairs achieving minimum~33, VF2 isomorphism testing reveals exactly two graph classes: 13~pairs containing a modulus-2 generator ($\sqrt{2}$ or~$\sqrt{-2}$) produce the Peres graph (degree distribution $\{4{:}30, 8{:}3\}$, 72~pairs, 16~triads), while 4~pairs containing $\omega$ produce the Eisenstein graph ($\{4{:}18, 5{:}12, 8{:}3\}$, 78~pairs, 14~triads).  The 3~pairs achieving minimum~43 (all containing the Heegner-7 generator) are mutually graph-isomorphic.  This confirms that the cancellation identity, not the ambient field, determines the orthogonality graph.  Six pairs are colorable: those pairing $i$, $\varphi$, $\sqrt{3}$, or $\sqrt{5}$ without any modulus-2 or phase-cancellation generator.
    \item \textbf{Non-commutative division algebras}: quaternionic and octonionic alphabets in $\mathbb{H}^3$ and $\mathbb{O}^3$ produce KS-uncolorable pools when they contain the modulus-2 identity, but greedy minimization always converges to 31~vectors with the CK-31 graph.  Non-commutativity and non-associativity destroy orthogonalities that would exist over~$\C$, and this loss outweighs the additional cancellation identities.  The modulus-2 boundary appears universal across all four normed division algebras.\label{rem:division-algebras}
    \item \textbf{Completion termination}: cross-product completion terminates empirically for all tested alphabets, but we do not prove termination in general.
    \item \textbf{Classification scope}: the ``island'' classification is a classification of \emph{coordinate realizations}, not of abstract KS graphs. The intrinsic object is the orthogonality hypergraph; a given hypergraph may admit realizations in multiple number fields (as demonstrated by the Peres/$\Z[\sqrt{-2}]$ isomorphism). Conversely, different coordinatizations of the same abstract graph may have different operational properties (BPQS efficiency, CSW advantage). Our contribution is that the algebraic substrate---the cancellation identity available in the coordinate ring---determines which graphs can be realized, and hence which operational features are accessible.
    \item \textbf{Transcendental coordinates}: our methods require algebraic coordinates (for exact orthogonality testing); KS sets with transcendental coordinates are not excluded.
    \item \textbf{The $6 \mid n$ theorem}: Theorem~\ref{thm:6n} is proved algebraically for all four cases.  It is the only fully proved result in this paper; all other claims about minimality and optimality are computational.  The proof was constructed and checked using LLM assistance; independent human verification is invited, particularly for the necessity cases (Cases~2 and~3), which required multiple correction rounds during development.
    \item \textbf{Minimization is now certified}: the OCUS (Optimal Constrained Unsatisfiable Subset) procedure exhaustively certifies that each pool minimum in Observation~\ref{obs:islands} is the exact minimum within its pool. For the first five pools (Integer, Peres, Eisenstein, $\Z[\sqrt{-2}]$, Heegner-7), certification completes in under 4\,s each; the Golden pool required 18{,}783 CEGAR iterations (303\,s) at $n = 51$. These are \emph{proofs within each finite pool}, not heuristic upper bounds. However, they do not exclude the possibility that a smaller KS set could exist using coordinates outside these pools---e.g., from an untested number field.
    \item \textbf{Lower bound improvement}: our work does not improve the lower bound of 24; it constrains where a sub-31 construction could come from.  These constraints---the cancellation classification (Observation~\ref{obs:two-element}), density monotonicity, and the discrete 13$\to$49 ray jump---are directly applicable as filters for the realizability-based programs of Li, Bright, and Ganesh~\cite{LiBrightGanesh2024} and Kirchweger et al.~\cite{KirchwegarPeitlSzeider2023}.
    \item \textbf{Operational significance}: our connection to Cabello's work (Section~\ref{sec:cabello}) is primarily observational---we identify the Eisenstein island with his simplest KS set, but do not prove that other islands cannot arise as engines of perfect quantum strategies. A systematic study of which Bell scenarios' perfect strategies correspond to which algebraic islands would deepen this connection.
    \item \textbf{Pareto structure of the trade-off}: A \emph{Pareto frontier} is the set of options for which no other option is simultaneously better on every criterion; any improvement on one axis requires a sacrifice on another.  The three-way trade-off among the islands (ray economy, BPQS cost, and CSW pool advantage) exhibits a clean Pareto structure.  Three islands lie on the frontier: $\CK$ (fewest rays, 31), Eisenstein (lowest BPQS product, $5 \times 9 = 45$), and Heegner-7 (strongest CSW advantage, $\vartheta/\alpha = 1.118$).  The Peres and $\Z[\sqrt{-2}]$ islands are Pareto-dominated: they match Eisenstein's ray count (33) but have higher BPQS cost ($7 \times 9 = 63$) and no CSW advantage ($\vartheta = \alpha$).  The Golden island is dominated on all three axes.  This confirms that the choice of algebraic substrate has genuine operational consequences and that no single island is ``best'' in all senses.
    \item \textbf{CK-31 and perfect quantum strategies}: A systematic B-KS search over 200 greedy trials shows that every Bell scenario achievable by $\CK$ (minimum $8 \times 9 = 72$) is also achievable at equal or lower cost by the Eisenstein island (minimum $5 \times 9 = 45$) and the Peres island (minimum $7 \times 9 = 63$).  $\CK$ requires at least 8~Alice inputs, versus 5 for Eisenstein and 7 for Peres.  Thus $\CK$, despite having the fewest rays, is strictly dominated as a BPQS engine: the 33-vector islands generate simpler Bell scenarios. The operational advantage of $\CK$ lies in its unique contextual tightness ($\eta = 1.000$) and CSW sandwich ($\alpha = 11 < \vartheta = 11.71 < \alpha^* = 12$), not in BPQS economy.
\end{enumerate}

\section{Conclusion}\label{sec:conclusion}

The central empirical finding of this paper is that, among all tested coordinate alphabets, KS-uncolorability in dimension~3 requires a cancellation identity of one of three kinds: \emph{modulus-2 cancellation} (an algebraic identity producing the value~2 from products of ring elements, as in $1+1=2$, $(\sqrt{2})^2=2$, or $\alpha\bar\alpha=2$), \emph{phase cancellation} (a root-of-unity sum vanishing exactly, as in $1+\omega+\omega^2=0$), or \emph{minimal-polynomial cancellation} (the defining relation of a cubic algebraic unit supplying the vanishing sum directly, as in $\psi^3 - \psi^2 - 1 = 0$).  Among all quadratic fields, cyclotomic fields, and selected extensions tested, KS sets exist only when the coordinate ring supports one of these mechanisms; alphabets where the smallest available cancellation involves norm~$\geq 3$ produce orthogonal triples but not KS-uncolorability.  This constraint explains why constructions cluster into at least eight discrete algebraic islands among the fields tested, including two cubic islands---$\Q(\sqrt[3]{2})$ at 58~vectors and the supergolden field---both characterized in Section~\ref{sec:limitations}.  The golden ratio island ($|\varphi|^2 \approx 2.618$) does not require the third mechanism: cross-product completion introduces modulus-2-type cancellations into its effective coordinate set, confirming that for that island the relevant criterion is the cancellation structure of the completed alphabet rather than the generator's own norm.  The supergolden island does require it, and is the reason the classification has three entries rather than two.

A systematic survey of 26 imaginary quadratic fields with class number 2, 3, and 4 found no new islands: all fields with generator norm $|\alpha|^2 \geq 8$ are colorable, confirming that the operative constraint is generator norm rather than factorization structure (Section~\ref{sec:limitations}, item~\ref{item:class-number-survey}).

This structural perspective yields both new constructions and new connections:

\begin{enumerate}
    \item \textbf{New KS configurations.} The Heegner-7 ring $\Z[(1+\sqrt{-7})/2]$ produces a genuinely new orthogonality graph (43 vectors, 23 bases), and the golden ratio field $\Q(\varphi)$ yields a KS set (52 vectors) that is invisible to standard alphabet searches and revealed only by cross-product completion. Neither has been previously reported. The complex quadratic $\Z[\sqrt{-2}]$ provides a new algebraic realization of the Peres-type modulus-2 structure, graph-isomorphic to the original (Proposition~\ref{prop:isomorphism}).

    \item \textbf{Operational consequences.} The algebraic substrate determines the complexity of the resulting quantum information tasks. Using SAT-based bipartite KS-uncolorability, we verify and extend the BPQS input counts of Trandafir and Cabello (Table~\ref{tab:bpqs}): the Eisenstein island yields the simplest Bell scenario ($5 \times 9$), while CK-31 is costlier ($\leq 8 \times 9$) despite having fewer rays. CSW analysis shows that auxiliary rays amplify contextual advantage, and that no single island dominates across all operational measures---ray economy, context economy, and CSW strength are in tension.

    \item \textbf{Supporting evidence.} The $6 \mid n$ characterization for cyclotomic KS-uncolorability (Theorem~\ref{thm:6n}) connects to the $\Z[1/6]$ minimality result of Cortez, Morales, and Reyes, with the Eisenstein island achieving exactly $N(S) = 6$.  An OCUS procedure exhaustively proves that no $\leq 30$-ray KS subset exists within the 49-ray integer pool.  All six non-cubic minimal KS sets are merge-saturated (Observation~\ref{obs:merge}), four of the six non-cubic islands are rigid in~$\C^3$ (Observation~\ref{obs:rigidity}), the integer pool contains exactly six distinct minimal 31-sets with a norm-stratified 13-ray invariant core (Observation~\ref{obs:mus-landscape}), and three structurally distinct 33-vector KS sets exist with pairwise non-isomorphic graphs (Observation~\ref{obs:three-33}).  Graph universality (Observation~\ref{obs:universality}) confirms that every tested construction achieving 31~rays produces the same orthogonality graph.
\end{enumerate}

Whether the three-mechanism classification is complete, or whether higher-degree extensions harbor additional cancellation mechanisms and hence additional islands, remains the principal open question. The identification of Cabello's ``simplest KS set'' with the Eisenstein island suggests that the algebraic perspective may complement the ongoing search for optimal constructions. More concretely, the constraints established here---the two-element cancellation classification, density monotonicity, and the discrete 13$\to$49 ray jump---are directly applicable to the realizability-based lower-bound program of Li, Bright, and Ganesh~\cite{LiBrightGanesh2024}: algebraic pruning of candidate hypergraphs could substantially narrow the search space that their SMT solver must explore, bringing the constructive upper bound (31) and the realizability lower bound (24) closer together.

The two-mechanism thesis was falsifiable---a KS-uncolorable set from an alphabet whose product set admits no modulus-2 or phase cancellation identity would refute it---and it has been falsified.  The refutation arrived from the first of the three candidate classes we had identified: cubic Pisot-number alphabets, where the cancellation identity is the minimal polynomial itself.  The supergolden ratio $\psi^3 = \psi^2 + 1$ supplies a KS-uncolorable set whose product set admits neither modulus-2 nor phase cancellation (Section~\ref{sec:limitations}, item~\ref{item:pisot}; note added in third revision), establishing minimal-polynomial cancellation as a third mechanism.  The other two candidate classes fared differently: a systematic survey of 26 class-number $> 1$ fields (Section~\ref{sec:limitations}, item~\ref{item:class-number-survey}) found no counterexamples, all tested non-UFD rings being colorable when $|\alpha|^2 \geq 8$, which confirms that within that family the operative constraint is generator norm rather than factorization structure; multi-generator alphabets remain largely untested. Whether the three-mechanism classification is itself complete is now the principal open question, and the space of higher-degree extensions remains the principal open frontier.

\subsection*{Note added in revision}

\begin{sloppypar}
After this paper first appeared, Gunji et al.~\cite{GunjiOhzawa2026} showed that orthomodular (non-distributive) lattice structure arises canonically as a left adjoint from classical Boolean contexts via categorical pushouts, and that the failure of such a pushout to remain Boolean is equivalent to the absence of global sections in the Abramsky--Brandenburger sheaf-theoretic framework.  Their result provides a category-theoretic explanation for \emph{why} contextuality forces non-Boolean structure, complementing the present paper's computational classification of \emph{which} coordinate algebras support KS-uncolorability.  It is natural to ask whether the algebraic islands identified here correspond to settings whose orthogonality hypergraphs, viewed categorically as contexts, yield non-trivial (i.e., non-Boolean) pushouts in the sense of~\cite{GunjiOhzawa2026}; we leave a precise formulation and proof of this correspondence to future work.
\end{sloppypar}

\subsection*{Note added in second revision}

The vanishing-sum enumeration in Table~\ref{tab:two-element} has been corrected to include the previously missing zero-sum $1 - x - \bar{x} = 0$ (constraint $\mathrm{Tr}(x) = 1$), which arises from the sign pattern $(+1, -1, -1)$ on the triple $\{1, x, \bar{x}\}$.  This pattern produces all-nonzero orthogonal triads for imaginary quadratic generators with $d \equiv 1 \pmod{4}$, but does not lead to KS-uncolorability: the cross-product obstruction (no type-B triad can form) is now proved for \emph{all} integer generator norms $N \geq 3$ via SMT verification~\cite{deMouraBjorner2008}, upgrading the previous spot-checks to a universal certificate.  No computational results or conclusions are affected; the correction tightens the enumeration and strengthens the proof.

\subsection*{Note added in third revision}

\begin{sloppypar}
The two-mechanism thesis asserted in versions~1--8---that KS-uncolorability in dimension~3 requires either modulus-2 or phase cancellation---is refuted.  Jorge Lucas Gonzalez (Universidad de Valladolid) communicated a critical 47-ray KS set over $\Q(\psi)$ with $\psi^3 = \psi^2 + 1$, extracted as a critical subset of the cross-product closure of the two-letter alphabet $\{0, \pm 1, \pm\psi\}$, together with a self-contained exact verifier.  (The 47 rays are not themselves closed under cross products; 20 of their orthogonal pairs have cross products outside the set.)  We have reproduced the construction independently in exact arithmetic---47 distinct rays, 107~orthogonal pairs, 29~triads, KS-uncolorable, ray-critical---and confirmed that it meets the falsification condition stated in Section~\ref{sec:conclusion}: the raw product set $\{0, \pm 1, \pm\psi, \pm\psi^2\}$ contains no element of squared modulus~2, $\Q(\psi)$ contains no root of unity other than $\pm 1$ (it is a real field of odd degree), and no nontrivial three-term vanishing sum is available before completion, since $\psi^3$ requires the product $\psi \cdot \psi^2$.  The operative identity is the minimal polynomial $\psi^3 - \psi^2 - 1 = 0$, which is neither named mechanism; the classification in this paper has accordingly been revised from two mechanisms to three, and the golden-ratio island's assignment to indirect modulus-2 (Section~\ref{sec:discussion}), previously described as a matter of convention, is now stated as a substantive distinction from the new island.

The failure was one of coverage rather than of method.  Our preliminary Pisot survey tested four Pisot numbers---plastic, tribonacci, supersilver, silver---and the supergolden ratio was not among them; the same procedure, applied unchanged to the supergolden alphabet $\{0, \pm 1, \pm\psi, \pm\psi^2\}$, detects the island immediately on the raw pool without any completion step (Section~\ref{sec:limitations}, item~\ref{item:pisot}).  Two consequences for the coverage claims elsewhere in this paper should be recorded: the assertion that no completion-induced mechanism was observed in our survey no longer holds, and the Pisot survey applied cross-product completion to the plastic and tribonacci cases only, so its negative results for the supersilver and silver ratios are established for raw alphabets alone.  We thank Jorge Lucas Gonzalez for the construction and for supplying a verifier exact enough to be checked rather than merely believed.
\end{sloppypar}

\subsection*{Reproducibility}

\begin{sloppypar}
All code and data are available at
\url{https://github.com/michaelkernaghan/contextuality}.
The repository includes explicit ray coordinates and triad/basis lists for all islands, including the 43-vector Heegner-7 and 52-vector golden-ratio sets, as well as verifier scripts that reconstruct each KS set from its alphabet and confirm UNSAT via SAT solving. The script \texttt{ks\_z3\_verify.py} provides SMT-certified proofs (Z3~\cite{deMouraBjorner2008}) that the vanishing-sum enumeration is complete and that the cross-product obstruction holds for all quadratic integer generators with $N \geq 3$. The implementation uses Python~3.11 with PySAT~1.8 (Glucose4), Z3~4.15, NumPy, SciPy (L-BFGS-B and HiGHS~LP), and CVXPY (SCS~SDP, tolerance~$10^{-9}$).
\end{sloppypar}

\begin{sloppypar}
\noindent The script-to-table correspondence is:
\begin{itemize}\setlength{\itemsep}{0pt}\setlength{\parskip}{0pt}
\item \texttt{ks\_islands.py} $\to$ Tables~\ref{tab:alphabets}--\ref{tab:closure}
\item \texttt{ks\_complex.py} $\to$ Table~\ref{tab:roots}
\item \texttt{ks\_new\_islands.py}, \texttt{ks\_sat.py} $\to$ Table~\ref{tab:complex-quadratic}, 31-optimality check
\item \texttt{ks\_new\_island\_analysis.py} $\to$ Table~\ref{tab:heegner}
\item \texttt{ks\_csw.py}, \texttt{ks\_csw\_extended.py}, \texttt{ks\_graph\_analysis.py} $\to$ Tables~\ref{tab:csw-min}--\ref{tab:csw-pool},~\ref{tab:spectral}
\item \texttt{ks\_critical\_bases.py} $\to$ Table~\ref{tab:critical-bases}
\item \texttt{ks\_bpqs\_sat.py} $\to$ Table~\ref{tab:bpqs}
\item \texttt{ks\_trig.py} $\to$ Remark~\ref{rem:trig}
\item \texttt{ks\_graph\_analysis.py} $\to$ Prop.~\ref{prop:isomorphism}
\item \texttt{ks\_group\_isomorphism.py} $\to$ Obs.~\ref{obs:universality}
\item \texttt{ks\_explore\_new*.py} $\to$ cubic/mixed/orbit explorations
\item \texttt{ks\_ocus\_all\_pools.py} $\to$ OCUS certification of pool minima (Obs.~\ref{obs:islands})
\item \texttt{ks\_maxsat\_optimal.py} $\to$ Integer-pool OCUS (Remark after Prop.~\ref{prop:31optimal})
\item \texttt{ks\_peres\_merge.py} $\to$ Obs.~\ref{obs:merge}
\item \texttt{ks\_literature\_connections.py} $\to$ SI-C closure, $N(S)$ invariant
\item \texttt{ks\_mus\_landscape.py} $\to$ Obs.~\ref{obs:mus-landscape}
\item \texttt{ks\_rigidity.py}, \texttt{ks\_rigidity\_finite.py}, \texttt{ks\_rigidity\_global\_search.py} $\to$ Obs.~\ref{obs:rigidity}
\item \texttt{ks\_d15\_deep\_dive.py} $\to$ Limitations, item~\ref{item:class-number-survey}
\item \texttt{ks\_two\_gen\_isomorphism.py} $\to$ Limitations, item~\ref{item:class-number-survey} (VF2 check)
\item \texttt{ks\_bpqs\_unique.py} $\to$ Limitations, CK-31 BPQS uniqueness analysis
\item \texttt{ks\_cubic\_characterize.py} $\to$ Limitations, item~4 (cubic island characterization)
\item \texttt{ks\_ck33\_identify.py} $\to$ Obs.~\ref{obs:three-33}
\item \texttt{ks\_quaternion.py}, \texttt{ks\_octonion.py} $\to$ Limitations, item~\ref{rem:division-algebras}
\item \texttt{ks\_6n\_proof.py}, \texttt{ks\_6n\_proof\_theorem.py} $\to$ Thm.~\ref{thm:6n}
\item \texttt{ks\_z3\_verify.py} $\to$ SMT certification of Table~\ref{tab:two-element} completeness and cross-product obstruction (Obs.~\ref{obs:two-element})
\item \texttt{ks\_verify\_galois.py} $\to$ SymPy/SAT verification of Galois letter proofs (companion paper)
\end{itemize}
\end{sloppypar}
All randomized experiments use \texttt{random.seed(42)} for~reproducibility.

\subsection*{Verification methodology}

This paper was developed using an LLM-assisted workflow in which all computations, proofs, and manuscript text were generated by a large language model (Claude, Anthropic) under the author's direction.  To mitigate known failure modes of LLM-generated mathematics---particularly silent fabrication and pressure-induced confabulation~\cite{Schwartz2026}---three verification layers were applied: (1)~all computational claims are independently reproducible via the scripts listed above with deterministic seeding; (2)~the algebraic proof of Theorem~\ref{thm:6n} was subjected to six rounds of adversarial peer review using a competing LLM (GPT, OpenAI), with substantive errors found and corrected in rounds~2 and~5; (3)~all claims about published results were verified by direct reading of the cited papers.  The author invites independent verification of Theorem~\ref{thm:6n}, which is the only fully proved result; all other claims are computational and reproducible.

\subsection*{Acknowledgments}

The author thanks Ad\'an Cabello for pointing out the connection between the Eisenstein island and his ``simplest KS set'' construction; Manuel Flores Gordillo for identifying the constraint-ordering error corrected in Observation~\ref{obs:rigidity}; and Jorge Lucas Gonzalez for the supergolden construction that refuted the two-mechanism thesis of versions~1--8 (note added in third revision).  Claude (Anthropic) performed all computational work including Python script development, SAT-based verification, rigidity analysis, algebraic proof construction for Theorem~\ref{thm:6n}, and manuscript preparation; cross-verification was performed using GPT (OpenAI) through structured adversarial peer review.  The author directed the research program, selected problems, validated results against the published literature, and accepts full responsibility for the scientific content and integrity of this paper.

\appendix

\section{Representative minimal KS sets for new islands}\label{app:rays}

For reproducibility, we describe the minimal KS sets found for the two genuinely new orthogonality graph types.

\textbf{Heegner-7 island} (43 vectors, 23 bases). Let $\alpha = (1+\sqrt{-7})/2$. All ray coordinates are drawn from $\{0, \pm 1, \pm\alpha, \pm\bar\alpha\}$, with cross-product completion adding elements of $\Z[\alpha]$ (integer linear combinations of $1$ and $\alpha$). The 145-ray pool is generated by enumerating all projectively distinct nonzero vectors $(a, b, c) \in \{0, \pm 1, \pm\alpha, \pm\bar\alpha\}^3$ under Hermitian orthogonality, then closing under cross products. The 43-ray minimal set is obtained by SAT-based greedy minimization (1000 trials); all single-ray removals from this set restore colorability. The 23 bases (orthogonal triples) and the degree distribution $\{4{:}16,\; 5{:}20,\; 6{:}4,\; 8{:}2,\; 10{:}1\}$ distinguish this hypergraph from all other known 3D KS configurations.

\textbf{Golden island} (52 vectors, 25 bases). Let $\varphi = (1+\sqrt{5})/2$. The raw alphabet $\{0, \pm 1, \pm\varphi\}$ generates 49 rays that are colorable; cross-product completion adds rays with coordinates in $\Z[\varphi] = \{a + b\varphi : a, b \in \Z\}$, expanding the pool to 205 rays. The completed pool is KS-uncolorable, with the smallest subset found containing 52 vectors in 25 bases. Completion-generated coordinates include $\varphi^2 = \varphi + 1$ and $2\varphi - 1 = \sqrt{5}$, both elements of $\Z[\varphi]$.

Complete ray coordinates (in exact algebraic form), basis hyperedge lists, and SAT verification scripts for all six non-cubic islands are provided in the accompanying code repository.

\end{document}